\documentclass{article}

\usepackage[square,numbers]{natbib}
    \bibliographystyle{plainnat}



\usepackage[final]{neurips_2022}


\usepackage[utf8]{inputenc} 
\usepackage[T1]{fontenc}    
\usepackage{hyperref}       
\usepackage{url}            
\usepackage{booktabs}       
\usepackage{amsfonts}       
\usepackage{nicefrac}       
\usepackage{microtype}      
\usepackage{xcolor}         

\usepackage{subcaption}
\usepackage{amsmath,amssymb,amsthm}
\usepackage{causality}
\usepackage{algorithm}
\usepackage{algorithmicx}
\usepackage{algpseudocode}
\usepackage{lipsum}
\usepackage{wrapfig}

\newtheorem{theorem}{Theorem}

\newtheorem{lemma}{Lemma}
\theoremstyle{definition}
\newtheorem{definition}{Definition}
\newtheorem{example}{Example}

\newcommand{\G}{\mathcal{G}}


\algdef{SE}[SUBALG]{Indent}{EndIndent}{}{\algorithmicend\ }%
\algtext*{Indent}
\algtext*{EndIndent}



\title{Finding and Listing Front-door Adjustment Sets}

%

\author{%
  Hyunchai Jeong \\
  Purdue University \\
    \texttt{jeong3@purdue.edu} \\
   \And
   Jin Tian \\
   Iowa State University \\
   \texttt{jtian@iastate.edu} \\
   \And
   Elias Bareinboim \\
   Columbia University \\
   \texttt{eb@cs.columbia.edu} \\
}

\begin{document}

\maketitle

\begin{abstract}
Identifying the effects of new interventions from data is a significant challenge found across a wide range of the empirical sciences. A well-known strategy for identifying such effects is Pearl's \textit{front-door (FD) criterion} \citep{pearl:95a}.
The definition of the FD criterion is declarative, only allowing one to decide whether a specific set satisfies the criterion. In this paper, we present algorithms for finding and enumerating possible sets satisfying the FD criterion in a given causal diagram.
These results are useful in facilitating the practical applications of the FD criterion for causal effects estimation and helping scientists to select estimands with desired properties, e.g., based on cost, feasibility of measurement, or statistical power.
\end{abstract}

\section{Introduction}

Learning cause and effect relationships is a fundamental challenge across data-driven fields.
For example, health scientists developing a treatment for curing lung cancer need to understand how a new drug affects the patient's body and the tumor's progression. 
The distillation of causal relations is indispensable to understanding the dynamics of the underlying system and how to perform decision-making in a principled and systematic fashion  \citep{pearl:2k,spirtes:etal93,bareinboim:pea16,PetersJanzingSchoelkopf17}.

One of the most common methods for learning causal relations is through \textit{Randomized Controlled Trials} (RCTs, for short) \citep{fisher:36}.
RCTs are considered as the ``gold standard'' in many fields of empirical research and are used throughout the health and social sciences as well as machine learning and AI.
In practice, however, RCTs are often hard to perform due to ethical, financial, and technical issues. For instance, it may be unethical to submit an individual to a certain condition if such condition may have some potentially negative effects (e.g., smoking). 
Whenever RCTs cannot be conducted, one needs to resort to analytical methods to infer causal relations from observational data, which appears in the literature as the problem of \textit{causal effect identification} \citep{pearl:95a,pearl:2k}. 

The causal identification problem asks whether the effect of holding a variable \(X\) at a constant value \(x\) on a variable \(Y\), written as \(P(Y | do(X = x))\), or \(P(Y | do(x))\), can be computed from a combination of observational data and causal assumptions.
One of the most common ways of eliciting these assumptions is in the form of a causal diagram represented by a directed acyclic graph (DAG), where its nodes and edges describe the underlying data generating process.
For instance, in Fig.~\ref{fig:fd:canonical}, three nodes \(X, Z, Y\) represent variables, a directed edge \(X \rightarrow Z\) indicates that \(X\) causes \(Z\), and 
a dashed-bidirected edge \(X \leftrightarrow Y\) represents that  \(X\) and \(Y\) are confounded by unmeasured (latent) factors. 
Different methods can solve the identification problem and a number of generalizations, including Pearl's celebrated do-calculus \citep{pearl:95a} as well as different algorithmic solutions  \citep{tian:pea02-general-id,shpitser:pea06a,huang:val06-identifiability,bareinboim:pea12-zid,lee:etal19,lee:20}.

In practice, researchers often rely on identification strategies that generate well-known identification formulas. One of the arguably most popular strategies is identification by covariate adjustment. 
Whenever a set $Z$ satisfies the \textit{back-door (BD) criterion} \citep{pearl:95a} relative to the pair $X$ and $Y$, where $X$ and $Y$ represent the treatment and outcome variables, respectively, the causal effect $P(Y | do(x))$ can be evaluated through the BD adjustment formula $\sum_z P(y|x,z) P(z)$. 

Despite the popularity of the covariate adjustment technique for estimating causal effects,  there are still settings in which no BD admissible set exists.  For example, consider the causal diagram $\mathcal{G}$ in Fig.~\ref{fig:fd:canonical}. 
There clearly exists no set to block the BD path from $X$ to $Y$, through the bidirected arrow,  $X \leftrightarrow Y$. 
One may surmise that this effect is not identifiable and the only one of evaluating the interventional distribution is through experimentation. 
Still, this is not the case. The effect $P(Y | do(x))$ is identifiable from $\mathcal{G}$ and the observed distribution $P(x,y,z)$ over $\{X,Y,Z\}$ by another classic identification strategy known as the \textit{front-door (FD) criterion} \citep{pearl:95a}. In particular, through the following FD adjustment formula provides the way of evaluating the interventional distribution:
\begin{equation}
    \label{eq1}
    P(Y|do(x)) = \sum_{z} P(z|x) \sum_{x'} P(y|x',z) P(x').
\end{equation}
We refer to \citet[Sec.~3.4]{pearl:mackenzie2018} for an interesting account of the history of the FD criterion, which was the first graphical generalization of the BD case. The FD criterion is drawing more attention in recent years. For applications of the FD criterion, see, e.g., \citet{hun:bar2019} and \citet{glynn:kashin18}. Statistically efficient and doubly robust estimators have recently been developed for estimating the FD estimand in Eq.~(\ref{eq1}) from finite samples 
\citep{fulcher2019robust}, which are still elusive for arbitrary estimands identifiable in a diagram despite recent progress \citep{jung2020estimating,jung2020werm,bhattacharya2020semiparametric,jung2021dmlid,xia:21}.  

\begin{wrapfigure}{R}{0.5\textwidth}
\begin{minipage}[t]{0.5\textwidth}
    \centering
    \null\hfill%
    \begin{subfigure}{0.37\textwidth}
        \includegraphics[width=\textwidth]{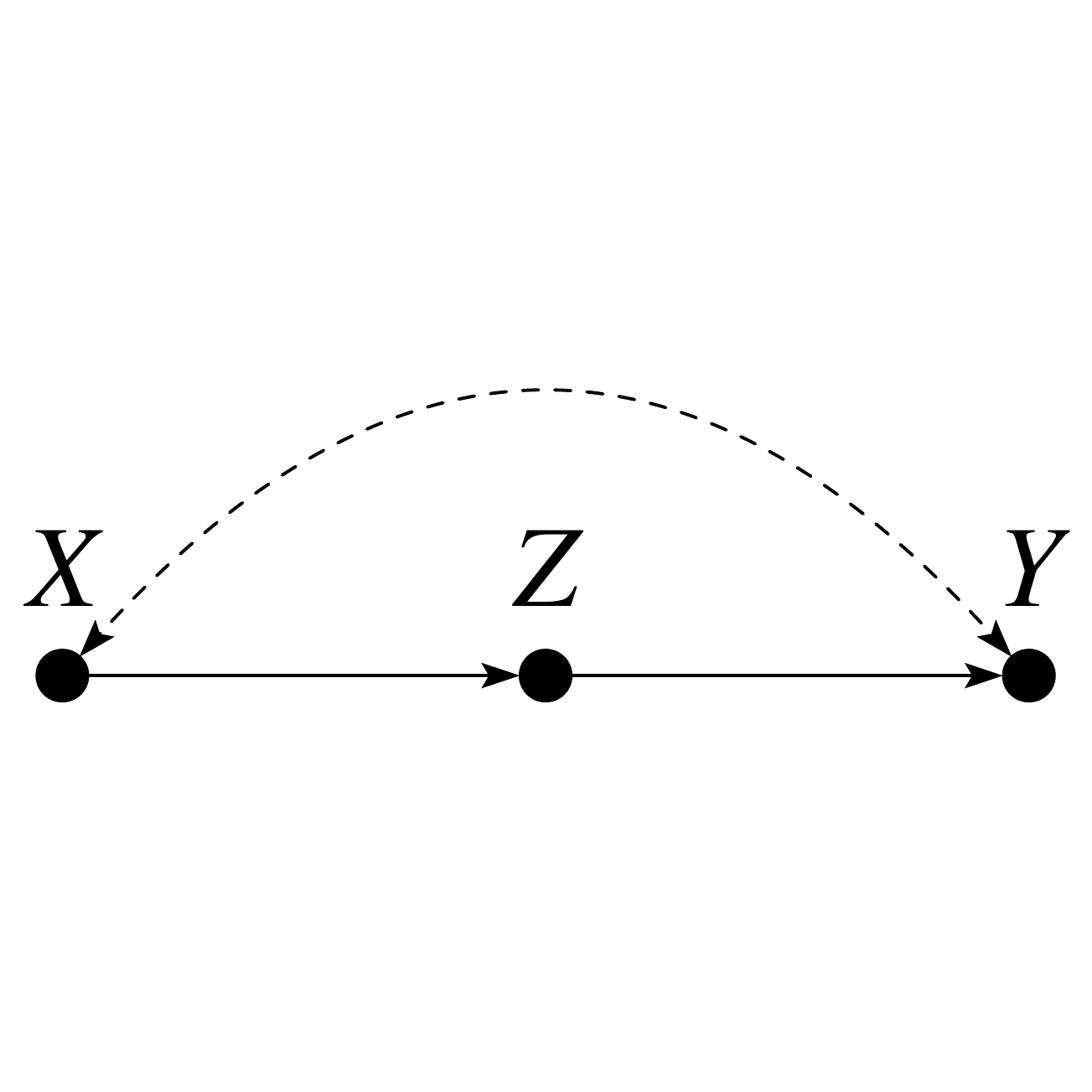}
        \caption{\(\G\)}
        \label{fig:fd:canonical}
    \end{subfigure}
    \hfill
    \begin{subfigure}{0.45\textwidth}
        \includegraphics[width=\textwidth]{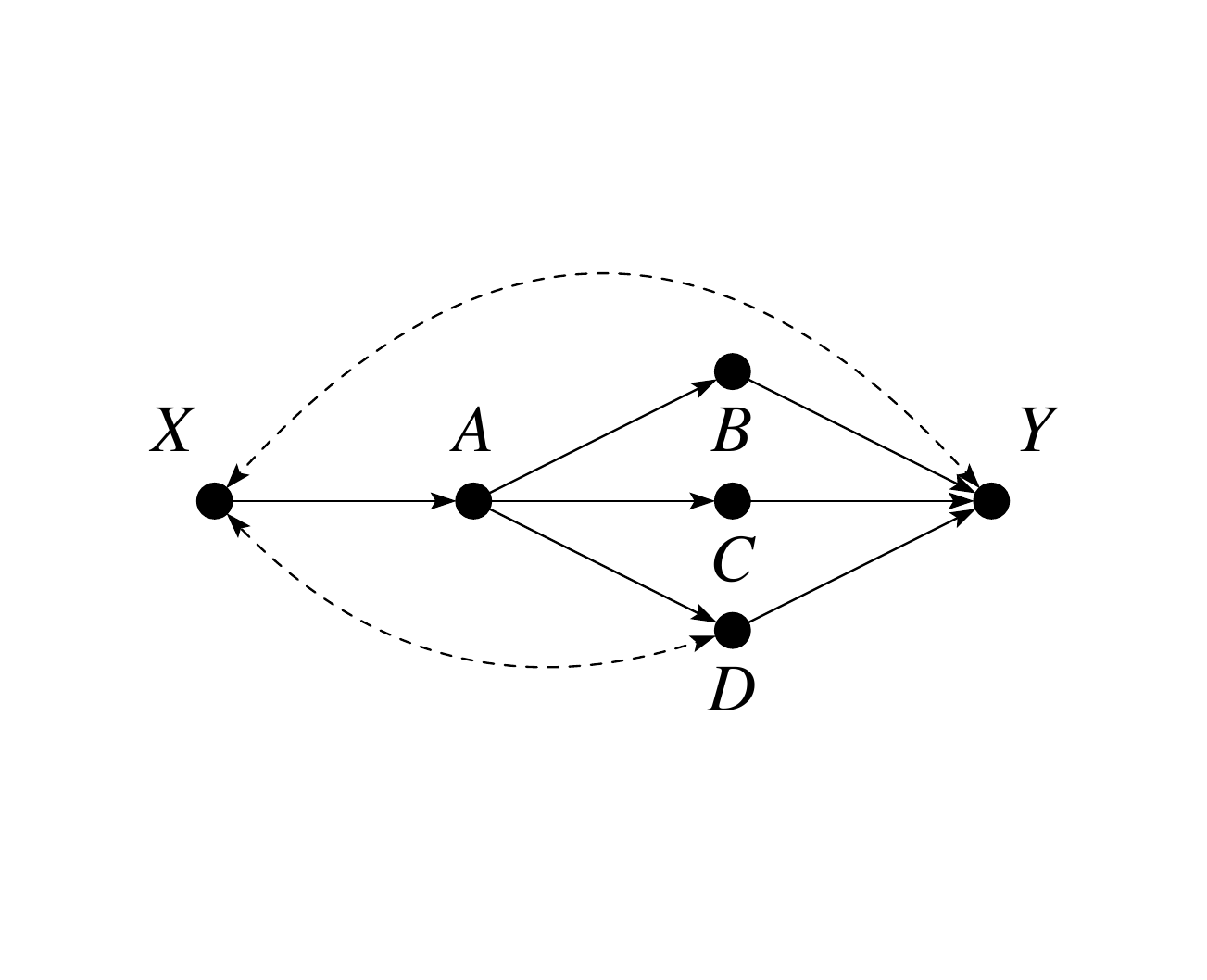}
        \caption{\(\G'\)}
        \label{fig:fd:intro}
    \end{subfigure}
    \null\hfill%
    
\caption{
(a) A canonical example of the FD criterion where 
\(\{Z\}\) satisfies the FD criterion relative to \((\{X\},\{Y\})\).
In (b), four FD adjustment sets relative to \((\{X\},\{Y\})\) are available: \(\{A\}\), \(\{A,B\}\), \(\{A,C\}\), and \(\{A,B,C\}\).
}
\end{minipage}
\end{wrapfigure}

Both the BD and FD criteria are only descriptive, i.e., they specify whether a specific set $Z$ satisfies the criteria or not, but do not provide a way to find an admissible set $Z$. In addition, in many situations, it is possible that multiple adjustment sets exist. 
Consider for example the causal diagram in  Fig.~\ref{fig:fd:intro}, and the task of identifying the effect of $X$ on $Y$. The distribution $P(Y | do(x))$ can indeed be identified by the FD criterion with a set $Z=\{A, B, C\}$ given by the expression in Eq.~(\ref{eq1}) (with $Z$ replaced with $\{A, B, C\}$). 
Still, what if the variable $B$ is costly to measure or encodes some personal information about patients which is undesirable to be shared due to ethical concerns? In this case, the set $Z=\{A,C\}$ also satisfies the FD criterion and may be used. 
Even when both \(B\) and \(C\) are unmeasured, the set \(Z=\{A\}\) is also FD admissible.


This simple example shows that a target effect can be estimated using different adjustment sets leading to different probability expressions 
over different set of variables, which has important practical implications. 
Each variable implies different practical challenges in terms of measurement, such as cost, availability, privacy.
Each estimand has different statistical properties in terms of sample complexity, variance, which may play a key role in the study design \citep{robins:92,hahn:98,rotnitzky:20,smucler:22}.
Algorithms for finding and listing all possible adjustment sets are hence very useful in practice, which will allow scientists to select an adjustment set that exhibits desirable properties. Indeed, algorithms have been developed in recent years for finding one or listing all BD admissible sets \citep{Takata2010,textor:11,zander:etal14,perkovic:etal18,zander:etal19}. However, no such algorithm is currently available for finding/listing FD admissible sets.

The goal of this paper is to close this gap to facilitate the practical applications of the FD criterion for causal effects estimation and help scientists to select estimand with certain desired properties \footnote{Code is available at \url{https://github.com/CausalAILab/FrontdoorAdjustmentSets}.}.
Specifically, the  contributions of this paper are as follows:

\begin{enumerate}
    \item We develop an algorithm that finds an admissible front-door adjustment set \(\*Z\) in a given causal diagram in polynomial time (if one exists). We solve a variant of the problem that imposes constraints $\*I \subseteq \*Z \subseteq \*R$ for given sets $\*I$ and $\*R$, 
    which allows a scientist to constrain the search to include specific subsets of variables or exclude variables from search perhaps due to cost, availability, or other technical considerations.

    \item We develop a sound and complete algorithm that enumerates all front-door adjustment sets with polynomial delay - the algorithm takes polynomial amount of time to return each new admissible set, if one exists, or return failure whenever it exhausted all admissible sets.
    
    
    
\end{enumerate}
\section{Preliminaries}

\noindent \textbf{Notation.} 
We write a variable in capital letters (\(X\)) and its value as small letters (\(x\)). 
Bold letters, \(\*X\) or \(\*x\), represent a set of variables or values.
We use kinship terminology to denote various relationships
in a graph $\G$ and denote the parents, ancestors, and descendants of \(\*X\) (including \(\*X\) itself) as \(\Pa{\*X}, \An{\*X}\), and \(\De{\*X}\), respectively. 
Given a graph $\G$ over a set of variables $\*V$, a subgraph \(\G_{\*X}\) consists of a subset of variables \(\*X \subseteq \*V\) and their incident edges in \(\G\).
A graph \(\G\) can be transformed: \(\G_{\overline{\*X}}\) is the graph resulting from removing all incoming edges to \(\*X\), and \(\G_{\underline{\*X}}\) is the graph with all outgoing edges from \(\*X\) removed.
A DAG \(\G\) may be \textit{moralized} into an undirected graph where all directed edges of \(\G\) are converted into undirected edges, and for every pair of nonadjacent nodes in \(\G\) that share a common child, an undirected edge that connects such pair is added \citep{lauritzen:spi88}.

A path \(\pi\) from a node \(X\) to a node \(Y\) in \(\G\) is a sequence of edges where \(X\) and \(Y\) are the endpoints of \(\pi\).
A node \(W\) on \(\pi\) is said to be a collider if \(W\) has converging arrows into \(W\) in \(\pi\), e.g., \(\rightarrow W \leftarrow\) or \(\leftrightarrow W \leftarrow\).
\(\pi\) is said to be blocked by a set \(\*Z\) if there exists a node \(W\) on \(\pi\) satisfying one of the following two conditions: 1) \(W\) is a collider, and neither \(W\) nor any of its descendants are in \(\*Z\), or 2) \(W\) is not a collider, and \(W\) is in \(\*Z\) \citep{pearl:88a}.
Given three disjoint sets \(\*X,\*Y\), and \(\*Z\) in \(\G\), \(\*Z\) is said to $d$-separate \(\*X\) from \(\*Y\) in \(\G\) if and only if \(\*Z\) blocks every path from a node in \(\*X\) to a node in \(\*Y\) according to the $d$-separation criterion \citep{pearl:88a}, and we say that \(\*Z\) is a \emph{separator} of \(\*X\) and \(\*Y\) in \(\G\).

    


\noindent \textbf{Structural Causal Models (SCMs).} 
We use Structural Causal Models (SCMs, for short) \citep{pearl:2k} as our basic semantical framework.
An SCM is a 4-tuple \(\langle \*U, \*V, \*F, P(\*u) \rangle\), where 1) \(\*U\) is a set of exogenous (latent) variables, 2) \(\*V\) is a set of endogenous (observed) variables, 3) \(\*F\) is a set of functions \(\{f_V\}_{V \in \*V}\) that determine the value of endogenous variables, e.g., \(v \gets f_V(\*{pa}_V, \*u_V)\) is a function with \(\*{PA}_V \subseteq \*V \setminus \{V\}\) and \(\*U_V \subseteq \*U\), and 4) \(P(\*u)\) is a joint distribution over the exogenous variables \(\*U\).
Each SCM induces a \emph{causal diagram} \(\G\) \citep[Def.~13]{bareinboim:etal20} where every variable \(v \in \*V\) is a vertex and directed edges in \(\G\) correspond to functional relationships as specified in \(\*F\) and dashed bidirected edges represent common exogenous variables between two vertices.
Within the structural semantics, performing an intervention and setting $X=x$ is represented through the do-operator, $do(X=x)$, which encodes the operation of replacing the original functions of $X$ (i.e., $f_X(\*{pa}_X,\*u_X)$) by the constant $x$ and induces a submodel $\mathcal{M}_{x}$ and an interventional distribution $P(v \vert do(x))$. 


\noindent \textbf{Classic Causal Effects Identification Criteria.} 
Given a causal diagram $\G$ over $\*{V}$, an effect \(P(\*y|do(\*x))\) is said to be 
\emph{identifiable} in $\G$ if \(P(\*y|do(\*x))\) is uniquely computable from the observed distribution $P(\*{v})$ in any SCM  that induces $\G$ \citep[p.~77]{pearl:2k}.

A path between $X$ and $Y$ with an arrow into $X$ is known as a \emph{back-door} path from $X$ to $Y$. The celebrated back-door (BD) criterion \citep{pearl:95a} provides a sufficient condition for effect identification from observational data, which states that if a set \(\*Z\) of non-descendants of \(\*X\) blocks all BD paths from \(\*X\) to \(\*Y\), then the causal effect \(P(\*y|do(\*x))\) is identified by the BD adjustment formula:

\begin{equation}\label{eq:bd}
   P(\*y|do(\*x)) = \sum_{\*z} P(\*y|\*x, \*z) P(\*z)
\end{equation}

Another classic identification condition that is key to the discussion in this paper is known as the front-door criterion, which is defined as follows: 

\begin{definition}{(Front-door (FD) Criterion \citep{pearl:95a})}
A set of variables \(\*Z\) is said to satisfy the front-door criterion relative to the pair \((\*X, \*Y)\) if
\begin{enumerate}
    \item \(\*Z\) intercepts all directed paths from \(\*X\) to \(\*Y\),
    \item There is no unblocked back-door path from \(\*X\) to \(\*Z\), and
    \item All back-door paths from \(\*Z\) to \(\*Y\) are blocked by \(\*X\), i.e.,  \(\*X\) is a separator of \(\*Z\) and \(\*Y\)  in \(\G_{\underline{\*Z}}\).
\end{enumerate}
\end{definition}

If \(\*Z\) satisfies the FD criterion relative to the pair \((\*X, \*Y)\), then \(P(\*y|do(\*x))\) is identified by the following FD adjustment formula \citep{pearl:95a}:

\begin{equation}\label{eq:fd}
    P(\*y|do(\*x)) = \sum_{\*z} P(\*z|\*x) \sum_{\*x'} P(\*y|\*x',\*z) P(\*x').
\end{equation}

\section{Finding A Front-door Adjustment Set}
\label{section:fd_find}

\begin{wrapfigure}{R}{0.53\textwidth}
\begin{minipage}[t]{0.53\textwidth}
\begin{algorithm}[H]
\caption{\textsc{FindFDSet} (\(\G, \*X, \*Y, \*I, \*R\))}
\label{alg:findfdset}

\begin{algorithmic} [1]
    \State {\bfseries Input:} \(\G\) a causal diagram; \(\*X,\*Y\) disjoint sets of variables; \(\*I,\*R\) sets of variables.
    
    \State {\bfseries Output:} \(\*Z\) a set of variables satisfying the front-door criterion relative to \((\*X,\*Y)\) with the constraint \(\*I \subseteq \*Z \subseteq \*R\).
    
    
    
    
    
    
    \State \textbf{Step 1:}
        \Indent
        \State \(\*R' \gets \textsc{GetCand2ndFDC}(\G, \*X, \*I, \*R)\)
        
        
        \State \textbf{if} \(\*R' = \perp\) \textbf{then:}
        \textbf{return} \(\perp\)
        
        \EndIndent
    
    \State \textbf{Step 2:}
        \Indent
        \State \(\*R'' \gets \textsc{GetCand3rdFDC}(\G, \*X, \*Y, \*I, \*R')\)
        
        
        \State \textbf{if} \(\*R'' = \perp\) \textbf{then:}
        \textbf{return} \(\perp\)
        
        \EndIndent
    
    \State \textbf{Step 3:}
        \Indent
        \State \(\G' \gets \textsc{GetCausalPathGraph}(\G, \*X, \*Y)\)
        
        
        \State \textbf{if} \(\textsc{TestSep}(\G', \*X, \*Y, \*R'') = \) True \textbf{then:} \label{findfdset:testsep}
            \Indent
            \State \textbf{return} \(\*Z =\*R''\)
            \EndIndent
        \State \textbf{else:}
        \textbf{return} \(\perp\)
        
        \EndIndent
\end{algorithmic}
\end{algorithm}
\end{minipage}
\end{wrapfigure}
In this section, we address the following question: given a causal diagram \(\G\), is there  a set  \(\*Z\) that satisfies the FD criterion relative to the pair \((\*X, \*Y)\) and,  therefore, allows us to identify \(P(\*y | do(\*x))\) by the FD adjustment? 
We 
solve a more general variant of this question that imposes a constraint \(\*I \subseteq \*Z \subseteq \*R\) for given sets \(\*I\) and \(\*R\). 
Here, \(\*I\) are variables that must be included in \(\*Z\) (\(\*I\) could be empty) and \(\*R\) are variables that could be included in \(\*Z\) (\(\*R\) could be $\*V \setminus (\*X \cup \*Y)$). Note the constraint that variables in \(\*W\) \textit{cannot} be included can be enforced by excluding \(\*W\) from \(\*R\).
Solving this version of the problem will allow scientists to put constraints on candidate adjustment sets based on practical considerations. In addition, this version will form a building block for an algorithm that enumerates all FD admissible sets in a given $\G$ - the algorithm \textsc{ListFDSets} (shown in Alg.~\ref{alg:listfdsets} in Section~\ref{section:fd_algorithm}) for listing all FD admissible sets will utilize this result during the recursive call. 

We have developed a procedure called \textsc{FindFDSet} shown in Alg.~\ref{alg:findfdset} that outputs a FD adjustment set \(\*Z\) relative to \((\*X, \*Y)\) satisfying \(\*I \subseteq \*Z \subseteq \*R\), or outputs $\perp$ if none exists, given a causal diagram \(\G\), disjoint sets of variables \(\*X\) and \(\*Y\), and two sets of variables \(\*I\) and \(\*R\).

\begin{example} \label{ex:findfd}
Consider the causal graph \(\G'\), shown in Fig.~\ref{fig:fd:intro}, with \(\*X = \{X\}\), \(\*Y = \{Y\}\), \(\*I = \emptyset\) and \(\*R = \{A,B,C,D\}\).
Then, \textsc{FindFDSet} outputs \(\{A,B,C\}\).
With \(\*I = \{C\}\) and \(\*R = \{A,C\}\), \textsc{FindFDSet} outputs \(\{A,C\}\).
With \(\*I = \{D\}\) and \(\*R = \{A,B,C,D\}\), \textsc{FindFDSet} outputs \(\perp\) as no FD adjustment set that contains $D$ is available.
\end{example}

\begin{wrapfigure}{R}{0.53\textwidth}

\begin{algorithmic}[1]
\Function {GetCand2ndFDC} {$\G, \*X, \*I, \*R$}

    
    \State {\bfseries Output:} $\*R'$ with \(\*I \subseteq \*R' \subseteq \*R\), the set of candidate variables consisting of all the variables \(v \in \*R\) such that there is no BD path from \(\*X\) to \(v\).

    \State \(\*R' \gets \*R\)
    
    
    \State \textbf{for all} \(v \in \*R\):
        \Indent
        \State \textbf{if} \(\textsc{TestSep}(\G_{\underline{\*X}}, \*X, v, \emptyset) = \) False \textbf{then:} \label{func:getcand2ndfdc:test}
            \Indent
            \State \textbf{if} \(v \in \*I\) \textbf{then:} \textbf{return} \(\perp\)
            \State \textbf{else:} \(\*R' \gets \*R' \setminus \{v\}\)
            \EndIndent
        \EndIndent
    \State \textbf{end for} \label{func:getcand2ndfdc:end}
    
    \State \textbf{return} \(\*R'\)
\EndFunction
\end{algorithmic}
\caption{A function that outputs the set of candidate variables satisfying the second condition of the FD criterion.}
\label{func:getcand2ndfdc}
\end{wrapfigure}
\textsc{FindFDSet} runs in three major steps.
Each step identifies candidate variables that incrementally satisfy each of the conditions of the FD criterion relative to \((\*X,\*Y)\).
First, \textsc{FindFDSet} constructs a set of candidate variables \(\*R'\), with \(\*I \subseteq \*R' \subseteq \*R\), such that every subset \(\*Z\) with \(\*I \subseteq \*Z \subseteq \*R'\) satisfies the second condition of the FD criterion (i.e., there is no BD path from \(\*X\) to \(\*Z\)).
Next, \textsc{FindFDSet} generates a set of candidate variables \(\*R''\), with \(\*I \subseteq \*R'' \subseteq \*R'\), such that for every variable \(v \in \*R''\), there exists a set \(\*Z\) with \(\*I \subseteq \*Z \subseteq \*R'\) and \(v \in \*Z\) that further satisfies the third condition of the FD criterion, that is, all BD paths from \(\*Z\) to \(\*Y\) are blocked by \(\*X\).
Finally, \textsc{FindFDSet} outputs a set \(\*Z\) that further satisfies the first condition of the FD criterion - \(\*Z\) intercepts all causal paths from \(\*X\) to \(\*Y\).

\subsection*{Step 1 of \textsc{FindFDSet}}

In Step 1, \textsc{FindFDSet} calls the function \textsc{GetCand2ndFDC} (presented in Fig.~\ref{func:getcand2ndfdc}) to construct a set \(\*R'\) that consists of all the variables \(v \in \*R\) such that there is no BD path from \(\*X\) to \(v\) (\(\*R'\) is set to empty if there is a BD path from \(\*X\) to \(\*I\)).
Then, there is no BD path from \(\*X\) to any set \(\*I \subseteq\*Z \subseteq \*R'\) since, by definition, there is no BD path from \(\*X\) to \(\*Z\) if and only if there is no BD path from \(\*X\) to any \(v \in \*Z\).


\textsc{GetCand2ndFDC} iterates through each variable \(v \in \*R\) and checks if there exists an open BD path from \(\*X\) to \(v\) by calling the function  \(\textsc{TestSep}(\G_{\underline{\*X}}, \*X, v, \emptyset)\) \citep{zander:etal14}.
\(\textsc{TestSep}(\G, \*A, \*B, \*C)\) returns True if $\*C$ is a separator of $\*A$ and $\*B$ in \(\G\), or False otherwise.
Therefore, \(\textsc{TestSep}(\G_{\underline{\*X}}, \*X, v, \emptyset)\) returns True if \(\emptyset\) is a separator of \(\*X\) and \(v\) in \(\G_{\underline{\*X}}\) (i.e., there is no BD path from \(\*X\) to \(v\)), or False otherwise.
If \textsc{TestSep} returns False, then \(v\) is removed from \(\*R'\) because every set \(\*Z \) containing \(v\) violates the second condition of the FD criterion relative to \((\*X,\*Y)\).

\begin{example}\label{eg-2ndFDC}
\label{ex:findfdset:sample}
Continuing Example~\ref{ex:findfd}. With \(\*I = \emptyset\) and \(\*R = \{A,B,C,D\}\), \textsc{GetCand2ndFDC} outputs a set \(\*R' = \{A,B,C\}\).
\(D\) is excluded from \(\*R'\) since there exists a BD path from \(\{X\}\) to \(\{D\}\), and any set containing \(D\) violates the second condition of the FD criterion relative to \((\{X\},\{Y\})\).
\end{example}

\begin{lemma}[Correctness of \textsc{GetCand2ndFDC}]
\label{lemma:getcand2ndfdc}
\textsc{GetCand2ndFDC}\((\G, \*X, \*I, \*R)\) generates a set of variables \(\*R'\) with \(\*I \subseteq \*R' \subseteq \*R\) such that \(\*R'\) consists of all and only variables \(v\) that satisfies the second condition of the FD criterion relative to \((\*X,\*Y)\).
Further, every subset \(\*Z \subseteq \*R'\) satisfies the second condition of the FD criterion relative to \((\*X,\*Y)\), and every set \(\*Z\) with \(\*I \subseteq \*Z \subseteq \*R\) that satisfies the second condition of the FD  criterion relative to \((\*X,\*Y)\) must be a subset of \(\*R'\).
\end{lemma}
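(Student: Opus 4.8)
The plan is to isolate exactly what the per-variable test inside the loop certifies, then lift that per-variable fact to the set-level second condition by a decomposition argument, and finally read off all four assertions from a single pass over the pseudocode. I would begin with the per-variable claim: for each \(v \in \*R\), the call \(\textsc{TestSep}(\G_{\underline{\*X}}, \*X, v, \emptyset)\) returns False precisely when there is an unblocked back-door path from \(\*X\) to \(v\) in \(\G\). By the stated semantics of \textsc{TestSep}, this call returns True iff \(\emptyset\) \(d\)-separates \(\*X\) and \(v\) in \(\G_{\underline{\*X}}\). The key observation is that deleting the edges out of \(\*X\) leaves incident to each node of \(\*X\) only edges carrying an arrowhead into \(\*X\) (directed edges into \(\*X\) and bidirected edges), so every path from \(\*X\) to \(v\) in \(\G_{\underline{\*X}}\) begins with an arrow into \(\*X\), i.e.\ is a back-door path, and conversely every back-door path from \(\*X\) to \(v\) survives in \(\G_{\underline{\*X}}\) with the same open/blocked status under \(\emptyset\). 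Hence ``no open path in \(\G_{\underline{\*X}}\)'' coincides with ``no open back-door path in \(\G\),'' which is exactly the second FD condition for the singleton \(\{v\}\).

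Next I would prove the decomposition that makes the loop sound at the set level: for any \(\*Z\), there is no unblocked back-door path from \(\*X\) to \(\*Z\) iff there is none from \(\*X\) to \(v\) for each \(v \in \*Z\). The forward direction is immediate, since a back-door path to \(v\) is a back-door path to \(\*Z\). For the converse, any back-door path from \(\*X\) to \(\*Z\) terminates at some \(v \in \*Z\), and because the conditioning set is empty the blocking status of that path (a collider blocks, a non-collider is open) is determined entirely by the path itself and does not depend on which other nodes of \(\*Z\) it happens to traverse; thus an open back-door path to \(\*Z\) already yields an open back-door path to a single \(v\). This is the conceptual core of the argument.

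I would then trace the algorithm. Since \(\*R'\) is initialized to \(\*R\) and only elements failing the test are ever removed, \(\*R' \subseteq \*R\) is immediate; and because a variable is removed solely in the branch guarded by \(v \notin \*I\) (the \(v \in \*I\) branch instead returns \(\perp\)), whenever the procedure returns a set rather than \(\perp\) no element of \(\*I\) has been deleted, giving \(\*I \subseteq \*R'\) (using the precondition \(\*I \subseteq \*R\)). Combined with the per-variable claim, \(\*R'\) equals exactly \(\{v \in \*R : \text{no unblocked back-door path from } \*X \text{ to } v\}\), i.e.\ all and only the variables of \(\*R\) meeting the second condition. The final two assertions then follow from the decomposition: any \(\*Z \subseteq \*R'\) has every element free of open back-door paths and so meets the second condition; and any \(\*Z\) with \(\*I \subseteq \*Z \subseteq \*R\) meeting the second condition has each element free of open back-door paths, hence each element is retained in \(\*R'\), whence \(\*Z \subseteq \*R'\).

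The main obstacle I anticipate is making the decomposition fully rigorous — specifically, justifying that the empty conditioning set permits free passage between the set-level and singleton-level back-door conditions without the usual complications (descendants of colliders entering the conditioning set, or paths threading through other members of \(\*Z\)). Once that is nailed down, everything else is bookkeeping over the pseudocode together with the established semantics of \textsc{TestSep}.
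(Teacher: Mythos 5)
Your proof is correct and takes essentially the same route as the paper's: reduce the set-level second condition to the per-variable test \(\textsc{TestSep}(\G_{\underline{\*X}}, \*X, v, \emptyset)\), lift it via the singleton-to-set decomposition (which the paper dispatches with a one-line ``by definition,'' and which your empty-conditioning-set argument makes explicit), and obtain \(\*I \subseteq \*R' \subseteq \*R\) and the two subset assertions by bookkeeping over the loop and the \(v \in \*I\) branch. The only blemish is your side claim that \emph{every} back-door path from \(\*X\) to \(v\) survives in \(\G_{\underline{\*X}}\) with unchanged status --- this can fail when \(|\*X| > 1\) and a back-door path exits an interior node of \(\*X\) along an outgoing edge that \(\G_{\underline{\*X}}\) deletes --- but it is immaterial here because the paper's operative formalization of the second condition (as in the do-calculus derivation of the front-door formula) is precisely \(\emptyset\)-separation of \(\*X\) and \(v\) in \(\G_{\underline{\*X}}\), which is exactly what \(\textsc{TestSep}\) checks, and in the direction needed for soundness of removing \(v\), every open path in \(\G_{\underline{\*X}}\) is indeed an open back-door path in \(\G\).
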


\subsection*{Step 2 of \textsc{FindFDSet}}

\begin{wrapfigure}{R}{0.54\textwidth}

\begin{algorithmic}[1]
\Function {GetCand3rdFDC} {$\G, \*X, \*Y, \*I, \*R'$}

    
    \State {\bfseries Output:}
    \(\*R''\) consisting of all the variables $v\in \*R'$ such that there exists a set \(\*Z\) containing \(v\) with \(\*I \subseteq \*Z \subseteq \*R'\) that satisfies the third condition of the FD criterion relative to \((\*X,\*Y)\).

    

    
    \State \(\*R'' \gets \*R'\)
    
    
    \State \textbf{for all} \(v \in \*R'\):
        \Indent
        \State \textbf{if} \(\textsc{GetDep}(\G, \*X, \*Y, \{v\}, \*R') = \perp\) \textbf{then:} \label{func:getcand3rdfdc:getdep}
            \Indent
            \State \textbf{if} \(v \in \*I\) \textbf{then:} \textbf{return} \(\perp\)
            \State \textbf{else:} \(\*R'' \gets \*R'' \setminus \{v\}\)
            \EndIndent
        \EndIndent
    \State \textbf{end for} \label{func:getcand3rdfdc:endwhile}
    
    \State \textbf{return} \(\*R''\)
\EndFunction
\end{algorithmic}
\caption{A function that outputs 
the set of candidate variables potentially satisfying the second and third conditions of the FD criterion.}
\label{func:getcand3rdfdc}
\end{wrapfigure}
In Step 2, \textsc{FindFDSet} calls the function \textsc{GetCand3rdFDC} presented in Fig.~\ref{func:getcand3rdfdc} to generate a set \(\*R''\) consisting of all the variables $v\in  \*R' $ such that there exists a set \(\*Z\) containing $v$ with \(\*I \subseteq \*Z \subseteq \*R'\) that further satisfies the third condition of the FD criterion relative to \((\*X,\*Y)\) (i.e., all BD paths from \(\*Z\) to \(\*Y\) are blocked by \(\*X\)). 
In other words, \(\*R'' \) is the union of all \(\*Z\) with \(\*I \subseteq \*Z \subseteq \*R'\) that satisfies the third condition of the FD criterion.

\textsc{GetCand3rdFDC} iterates through each variable \(v \in \*R'\) and calls the function \(\textsc{GetDep}(\G, \*X, \*Y, \{v\}, \*R')\) in line~\ref{func:getcand3rdfdc:getdep}.
Presented in Fig.~\ref{func:getdep}, \textsc{GetDep} returns a subset \(\*Z' \subseteq \*R' \setminus \{v\}\) such that all BD paths from \(\*Z = \{v\} \cup \*Z'\) to \(\*Y\) are blocked by \(\*X\) (if there exists such \(\*Z'\)).
If \textsc{GetDep} returns \(\perp\), then there exists no \(\*Z\) containing $v$ that satisfies the third condition of the FD criterion relative to \((\*X,\*Y)\), so \(v\) is removed from \(\*R''\).

\begin{example} \label{eg-3rdFDC}
Continuing Example~\ref{eg-2ndFDC}. Given  \(\*I = \emptyset\) and \(\*R' = \{A,B,C\}\),  
\textsc{GetCand3rdFDC} outputs \(\*R'' = \{A,B,C\}\) because for each variable \(v \in \*R''\), \textsc{GetDep} finds a set \(\*Z'\) such that \(\{v\} \cup \*Z'\) satisfies the third condition of the FD criterion relative to \((\{X\},\{Y\})\). 
For \(v = A\), \(\*Z' = \emptyset\), 
for \(v = B\), \(\*Z' = \{A\}\), and for \(v = C\), \(\*Z' = \{A\}\).
\end{example}

Next, we explain how the function \textsc{GetDep}$(\G, \*X, \*Y, \*T, \*R')$ works. 
First, \textsc{GetDep} constructs an undirected graph \(\mathcal{M}\) in a way that the paths from \(\*T\) to \(\*Y\) in \(\mathcal{M}\) represent all BD paths from \(\*T\) to \(\*Y\) that cannot be blocked by \(\*X\) in \(\G\).
The auxiliary function \(\textsc{moralize}(\G)\) moralizes a given graph \(\G\) into an undirected graph.
The moralization is performed on the subgraph over \(\An{\*T \cup \*X \cup \*Y}\) instead of \(\G\) based on the following property: \(\*T\) and \(\*Y\) are $d$-separated by \(\*X\) in \(\G\) if and only if  \(\*X\) is a \(\*T\)-\(\*Y\) node cut (i.e., removing $\*X$ disconnects \(\*T\) and \(\*Y\)) in \(\G' = \textsc{moralize}(\G_{ \An{\*T \cup \*X \cup \*Y} })\) \citep{lauritzen:96}.

\textsc{GetDep} performs Breadth-First Search (BFS) from \(\*T\) to \(\*Y\) on \(\mathcal{M}\) and incrementally constructs a subset \(\*Z' \subseteq \*R' \setminus \*T\) such that, after BFS terminates, there will be no BD path from \(\*Z = \*T \cup \*Z'\) to \(\*Y\) that cannot be blocked by \(\*X\) in \(\G\).
While constructing \(\*Z'\), \textsc{GetDep} calls the function \(\textsc{GetNeighbors}(u, \mathcal{M})\) (presented in Fig.~8, Appendix) to obtain all observed neighbors of \(u\) in \(\mathcal{M}\).

\begin{figure}

\begin{algorithmic}[1]
\Function {GetDep} {$\G, \*X, \*Y, \*T, \*R'$}

    
    \State {\bfseries Output:} \(\*Z' \subseteq \*R' \setminus \*T\), a set of variables such that  \(\*T \cup \*Z'\) satisfies the third condition of the FD criterion relative to \((\*X,\*Y)\).
    
    
    
    \State \(\G' \gets \G_{ \An{\*T \cup \*X \cup \*Y} }\)
    \State \(\G' \gets \G'\) with all bidirected edges \(A \leftrightarrow B\) replaced by a latent node \(L_{AB}\) and two edges \(L_{AB} \rightarrow A\) and \(L_{AB} \rightarrow B\)
    \State \(\G'' \leftarrow \G'_{\underline{\*T}}\)
    \State \(\mathcal{M} \gets \textsc{moralize}(\G'')\) then remove \(\*X\)
    \State \(\*Z' \gets \emptyset, \*Q \gets \*T\) and mark all \(v \in \*T\) as visited
    
    \While {\(\*Q \neq \emptyset\)}
        \State \(u \gets \*Q.\textsc{pop}()\) \label{getdep:popfromqueue}
        
        \State \textbf{if} \(u \in \*Y\) \textbf{then:} \textbf{return} \(\perp\) \label{getdep:returnfail}
        
        \State \(\*{NR} \gets \textsc{GetNeighbors}(u,\mathcal{M}) \cap \*R'\) that are not visited
        \State \(\G'' \leftarrow \G'_{\underline{\*T \cup \*Z' \cup \*{NR}}}\)
        \State \(\mathcal{M} \gets \textsc{moralize}(\G'')\) then remove \(\*X\)
        \State \(\*N' \gets \textsc{GetNeighbors}(u,\mathcal{M})\) that are not visited
        \State \(\*{NR}' \gets \{w \in \*{NR} | \) there exists an incoming arrow into \(w\) in \(\G \} \)
        \State \(\*N \gets \*N' \cup \*{NR}', \*Z' \gets \*Z' \cup \*{NR}\)
        \State \(\*Q.\textsc{insert}(\*N)\) and mark all \(w \in \*N\) as visited \label{getdep:inserttoqueue}
    \EndWhile
    
    \State \textbf{return} \(\*Z'\)
\EndFunction
\end{algorithmic}
\caption{A function that facilitates the construction of a set that satisfies the third condition of the FD criterion.}
\label{func:getdep}
\end{figure}


The BFS starts from each variable \(v \in \*T\).
Whenever a non-visited node \(u\) is encountered, the set \(\*{NR}\), observed neighbors of \(u\) that belong to \(\*R'\), is computed.
\(\*{NR}\) can be added to \(\*Z'\) because removing all outgoing edges of \(\*{NR}\) may contribute to disconnecting some BD paths \(\Pi\) from \(\*T\) to \(\*Y\) that cannot be blocked by \(\*X\) in \(\G\).
In other words, in \(\G_{\underline{\*T \cup \*Z' \cup \*{NR}}}\), \(\Pi\) could be disconnected from \(\*T\) to \(\*Y\) where \(\Pi\) are not disconnected in \(\G_{\underline{\*T \cup \*Z'}}\).
After adding \(\*{NR}\) to \(\*Z'\), \(\mathcal{M}\) must be reconstructed in a way that reflects the setting where all outgoing edges of \(\*{NR}\) are removed.
BFS will be performed on such modified \(\mathcal{M}\).

\textsc{GetDep} checks if there exists any set of nodes \(\*N\) to be visited further.
\(\*N\) consists of two sets: 1) \(\*N'\), all observed neighbors of \(u\) that are still reachable from \(u\), even after removing all outgoing edges of \(\*{NR}\), and 2) \(\*{NR'} \subseteq \*{NR}\) where for every node \(w \in \*{NR}\), there exists an incoming arrow into \(w\) in \(\G\).
All nodes in \(\*{NR'}\) must be checked because there might exist some BD path \(\pi\) from \(w\) to \(y \in \*Y\) that cannot be blocked by \(\*X\) in \(\G\).
If \(\pi\) cannot be disconnected from \(w\) to \(y\), then the set \(\*Z\) will violate the third condition of the FD criterion relative to \((\*X,\*Y)\).

The BFS continues until either a node \(y \in \*Y\) is visited, or no more nodes can be visited.
If \textsc{GetDep} returns a set \(\*Z'\), then we have that all BD paths from \(\*T\) to \(\*Y\) that cannot be blocked by \(\*X\) in \(\G\) have been disconnected in \(\G_{\underline{\*Z}}\) while ensuring that there exists no BD path from \(\*Z\) to \(\*Y\) that cannot be blocked by \(\*X\) in \(\G\).
Therefore, \(\*Z\) satisfies the third condition of the FD criterion relative to \((\*X,\*Y)\).
Otherwise, if \textsc{GetDep} returns \(\perp\) (i.e., \(y\) is visited), then there does not exist any \(\*Z\) containing \(\*T\) that satisfies the third condition of the FD criterion relative to \((\*X,\*Y)\).
This is because there exists a BD path \(\pi\) from \(t \in \*T\) to \(y\) that cannot be blocked by \(\*X\) in \(\G\); removing outgoing edges of all \(w \in \*R'\) that intersect \(\pi\) cannot disconnect \(\pi\) from \(t\) to \(y\).

\begin{example}
Expanding on Example~\ref{eg-3rdFDC} to show the use of function \textsc{GetDep}.
Consider the case when \(v = B\).
Then, \(\*Q = \*T = \{B\}\) and \(u = B\) is popped from \(\*Q\) at line~\ref{getdep:popfromqueue}.
We have \(\*{NR} = \{A\}, \*N' = \emptyset, \*{NR'} = \{A\}, \*N = \{A\}\), and \(\*Z' = \{A\}\).
Since \(\*N\) is inserted to \(\*Q\) at line~\ref{getdep:inserttoqueue}, \(u = A\) is popped from \(\*Q\) in the next iteration of while loop.
Then, \(\*{NR} = \emptyset, \*N' = \emptyset, \*{NR'} = \emptyset\), and \(\*N = \emptyset\).
Since \(\*Q\) is empty, the while loop terminates and \textsc{GetDep} returns \(\*Z' = \{A\}\).
\end{example}

\begin{example}
Illustrating the use of function \textsc{GetDep}.
Let \(\*I = \emptyset\), \(\*R' = \{B,C\}\), and \(v = B\).
\(\*Q = \*T = \{B\}\) and \(u = B\) is popped from \(\*Q\) at line~\ref{getdep:popfromqueue}.
\(\*{NR} = \emptyset, \*N' = \{A\}, \*{NR'} = \emptyset, \*N = \{A\}\), and \(\*Z' = \emptyset\).
Since \(\*N\) is inserted to \(\*Q\) at line~\ref{getdep:inserttoqueue}, \(u = A\) is popped from \(\*Q\) in the second iteration of while loop.
\(\*{NR} = \*{NR'} = \{C\}\), \(\*N' = \*N = \{C,D,Y\}\), \(\*Z' = \{C\}\), and \(\*Q = \{C,D,Y\}\).
On the third iteration, \(u = C\) is popped from \(\*Q\).
\(\*{NR} = \*{NR'} = \*N' = \*N = \emptyset\) and \(\*Q = \{D,Y\}\).
On the fourth iteration, \(u = D\) is popped from \(\*Q\).
\(\*{NR} = \*{NR'} = \*N' = \*N = \emptyset\) and \(\*Q = \{Y\}\).
Next, \(u = Y\) is popped from \(\*Q\).
Since \(u \in \{Y\}\), \textsc{GetDep} returns \(\perp\) at line~\ref{getdep:returnfail}.
There exists no set \(\*Z' \subseteq (\*R' \setminus \*T) = \{C\}\) such that \(\*T \cup \*Z'\) satisfies the third condition of the FD criterion relative to \((\{X\},\{Y\})\).
\end{example}

\begin{lemma}[Correctness of \textsc{GetCand3rdFDC}]
\label{lemma:findfdset:3rd}
\textsc{GetCand3rdFDC}\((\G, \*X, \*Y, \*I, \*R')\) in Step 2 of Alg.~\ref{alg:findfdset} generates a set of variables \(\*R''\) where \(\*I \subseteq \*R'' \subseteq \*R'\).
\(\*R''\) consists of all and only variables \(v\) such that there exists a subset \(\*Z\) with \(\*I \subseteq \*Z \subseteq \*R'\) and \(v \in \*Z\) that satisfies the third condition of the FD criterion relative to \((\*X,\*Y)\).
Further, every set \(\*Z\) with \(\*I \subseteq \*Z \subseteq \*R\) that satisfies both the second and the third conditions of the FD criterion must be a subset of \(\*R''\).
\end{lemma}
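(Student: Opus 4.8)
The plan is to reduce the correctness of \textsc{GetCand3rdFDC} to two ingredients: the input/output contract of \textsc{GetDep} (as specified with its pseudocode) and a single structural \emph{closure property} of the third FD condition that I would establish first. The closure property I claim is: if $\*Z_1$ and $\*Z_2$ each satisfy the third condition relative to $(\*X,\*Y)$, then so does $\*Z_1 \cup \*Z_2$. To prove it I argue by contradiction. Suppose $\*X$ fails to separate $\*Z_1\cup\*Z_2$ from $\*Y$ in $\G_{\underline{\*Z_1\cup\*Z_2}}$, and take a back-door path $\pi$ from some $z$, say $z\in\*Z_1$, to $\*Y$ that is active given $\*X$. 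The key observation is that $\G_{\underline{\*Z_1\cup\*Z_2}}$ is an edge-subgraph of $\G_{\underline{\*Z_1}}$ — the only additional deletions are the outgoing edges of $\*Z_2\setminus\*Z_1$ — so $\pi$ survives in $\G_{\underline{\*Z_1}}$ and is still a back-door path out of $z$. I then verify $\pi$ stays active given $\*X$ in the larger graph: which nodes are colliders depends only on $\pi$'s own edges and is therefore unchanged; no non-collider on $\pi$ lies in $\*X$ (again unchanged); and every collider retains an $\*X$-descendant because descendant sets only grow when edges are added back. This contradicts the third condition for $\*Z_1$. Iterating, the family of subsets of $\*R'$ satisfying the third condition is closed under union and hence has a unique maximal element, which I denote $\*Z_{\max}$.

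With closure available, I would next settle the inclusion $\*I\subseteq\*R''\subseteq\*R'$. The loop initializes $\*R''\gets\*R'$ and only ever deletes elements, giving $\*R''\subseteq\*R'$; and whenever a variable in $\*I$ would be deleted the routine instead returns $\perp$, so on any run that outputs a set, every $u\in\*I$ survived, i.e. $\*I\subseteq\*R''$. For the ``all and only'' characterization I invoke the \textsc{GetDep} contract: a variable $v$ is retained exactly when $\textsc{GetDep}(\G,\*X,\*Y,\{v\},\*R')\neq\perp$, which by that contract holds iff some $\*Z$ with $v\in\*Z\subseteq\*R'$ satisfies the third condition, iff $v\in\*Z_{\max}$. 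The one subtlety is that \textsc{GetCand3rdFDC} tests only $\*T=\{v\}$, whereas the lemma quantifies over witnesses additionally containing $\*I$. This is precisely where closure does the work: on any run producing a set, $\textsc{GetDep}(\{u\})\neq\perp$ for every $u\in\*I$, so $\*I\subseteq\*Z_{\max}$ and $\*Z_{\max}$ is itself a valid witness containing $\*I$. Hence for any retained $v$ the set $\*Z_{\max}\supseteq\*I\cup\{v\}$ is a witness, and conversely any witness $\*Z$ with $\*I\cup\{v\}\subseteq\*Z\subseteq\*R'$ forces $v\in\*Z_{\max}=\*R''$. (When the routine returns $\perp$, some $u\in\*I$ admits no containing subset of $\*R'$ meeting the third condition, so no $\*Z\supseteq\*I$ can, and $\perp$ is the correct output.)

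For the final containment claim — that every $\*Z$ with $\*I\subseteq\*Z\subseteq\*R$ meeting \emph{both} the second and third conditions satisfies $\*Z\subseteq\*R''$ — I would first apply Lemma~\ref{lemma:getcand2ndfdc}: since such a $\*Z$ meets the second condition and $\*I\subseteq\*Z\subseteq\*R$, it must be contained in $\*R'$. Then $\*Z$ is a third-condition-satisfying subset of $\*R'$, so for each $v\in\*Z$ the set $\*Z$ itself serves as the witness required by the ``all'' direction, placing $v\in\*R''$; therefore $\*Z\subseteq\*R''$, as desired.

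The hard part will be the closure lemma together with the argument that testing each $v$ in isolation (with $\*T=\{v\}$ rather than $\*T=\*I\cup\{v\}$) is sound — the monotone-descendant, subgraph argument is exactly what legitimizes this decoupling, and getting the collider-descendant direction right is the only delicate point. A secondary technical burden, which I would isolate as its own lemma rather than fold into this one, is verifying the \textsc{GetDep} contract itself: that its breadth-first search on the moralized, outgoing-edge-removed ancestral graph $\M$ terminates with a set $\*Z'$ for which $\G_{\underline{\*T\cup\*Z'}}$ has no $\*X$-unblockable back-door path from $\*T\cup\*Z'$ to $\*Y$, and returns $\perp$ exactly when no set containing $\*T$ does. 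For the present lemma I take that contract as given.
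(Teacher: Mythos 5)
Your proof is correct, but it is organized around a different key lemma than the paper's own argument. The paper proves Lemma~\ref{lemma:findfdset:3rd} directly from the input/output contract of \textsc{GetDep} (its Prop.~A.3): in one direction, if \textsc{GetDep}\((\G,\*X,\*Y,\{v\},\*R')\) returns \(\perp\), then no \(\*Z \subseteq \*R'\) containing \(v\) satisfies the third condition, so deleting \(v\) (or aborting when \(v \in \*I\)) is sound; in the other direction, for any \(\*Z\) with \(\*I \subseteq \*Z \subseteq \*R\) satisfying the second and third conditions, Lemma~\ref{lemma:getcand2ndfdc} gives \(\*Z \subseteq \*R'\), and then for each \(u \in \*Z\) the set \(\*Z \setminus \{u\}\) witnesses that \textsc{GetDep} succeeds on \(u\), so no element of \(\*Z\) is deleted. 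You instead isolate a closure property---the family of subsets of \(\*R'\) satisfying the third condition is closed under union, proved by your edge-subgraph/monotone-descendants argument, which is sound: the path survives in the larger graph, collider status is intrinsic to the path, and descendant sets only grow---and you read everything off the resulting maximal element \(\*Z_{\max} = \*R''\). This buys rigor at exactly the point where the paper's proof is loosest: since \textsc{GetCand3rdFDC} tests \(\*T = \{v\}\), the witness supplied by the \textsc{GetDep} contract contains \(v\) but not necessarily \(\*I\), while the lemma statement quantifies over witnesses with \(\*I \subseteq \*Z \subseteq \*R'\); the paper's proof simply asserts the \(\*I\)-containing version at the end of the loop, whereas your union argument (\(\*Z_{\max} \supseteq \*I \cup \{v\}\) is itself a witness, because every \(u \in \*I\) survived the loop) supplies the missing bridge. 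Your closure lemma is also not extra machinery in the context of the paper: it is the same monotonicity argument the paper deploys separately to prove Lemma~\ref{lemma:findfdset:r} (that \(\*R''\) itself satisfies the third condition), so your organization yields that lemma as a byproduct. Your one unproved dependency---taking the \textsc{GetDep} contract as a separately established lemma---matches the paper, which does exactly the same via its Prop.~A.3.
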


\textbf{Remark:} Even though every set \(\*Z\) with \(\*I \subseteq \*Z \subseteq \*R'\) that satisfies the third condition of the FD criterion must be a subset of \(\*R''\),  \emph{not}  every subset \(\*Z \subseteq \*R''\) satisfies the third condition of the FD criterion, as illustrated by the following example.

\begin{example}
\label{example:getcand3rdfdc:limitation}
In Example~\ref{eg-3rdFDC}, \textsc{GetCand3rdFDC} outputs \(\*R'' = \{A,B,C\}\).
However, for \(\*Z = \{B\}\), the BD path \(\{B \leftarrow A \rightarrow D \rightarrow Y\}\) is not blocked by \(\{X\}\); for \(\*Z = \{C\}\), the BD path \(\{C \leftarrow A \rightarrow D \rightarrow Y\}\) is not blocked by \(\{X\}\).
\end{example}

On the other hand,  we show that \(\*Z = \*R''\) itself satisfies the third condition of the FD criterion, as shown in the following. 
\begin{lemma}
\label{lemma:findfdset:r}
\(\*R''\) generated by \textsc{GetCand3rdFDC} (in Step 2 of Alg.~\ref{alg:findfdset}) satisfies the third condition of the FD criterion, that is, all BD paths from \(\*R''\) to \(\*Y\) are blocked by \(\*X\).
\end{lemma}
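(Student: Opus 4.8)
The plan is to argue by contradiction, reducing any hypothetical violation for the full set \(\*R''\) to a violation for one of the ``witnessing'' sets guaranteed by Lemma~\ref{lemma:findfdset:3rd}, via a monotonicity property of \(d\)-connection under edge addition.

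First I would record the needed consequence of Lemma~\ref{lemma:findfdset:3rd}: for each \(r \in \*R''\) there is a set \(\*Z\) with \(\*I \subseteq \*Z \subseteq \*R'\) and \(r \in \*Z\) that satisfies the third condition, i.e., \(\*X\) separates \(\*Z\) from \(\*Y\) in \(\G_{\underline{\*Z}}\). Since \(\*Z \subseteq \*R''\), deleting the outgoing edges of \(\*R''\) removes at least the edges deleted by removing the outgoing edges of \(\*Z\); hence \(\G_{\underline{\*R''}}\) is an edge-subgraph of \(\G_{\underline{\*Z}}\).

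Next I would suppose toward a contradiction that \(\*R''\) violates the third condition, so that in \(\G_{\underline{\*R''}}\) there is a path \(\pi\) from some \(r \in \*R''\) to \(\*Y\) that is not blocked by \(\*X\). Fix this \(r\) and let \(\*Z\) be a witnessing set for it as above. Every outgoing edge of \(r\) is deleted in \(\G_{\underline{\*R''}}\) (and likewise in \(\G_{\underline{\*Z}}\), since \(r \in \*Z\)), so \(\pi\) leaves \(r\) through an incoming arrow; thus \(\pi\) is a back-door path from \(r\). Moreover every edge of \(\pi\) survives in the larger graph \(\G_{\underline{\*Z}}\), so \(\pi\) is also a back-door path from \(r \in \*Z\) to \(\*Y\) there. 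The crux is the monotonicity step: a path that is open given \(\*X\) in an edge-subgraph is open given \(\*X\) in the supergraph. Along the fixed edge-sequence \(\pi\), the collider/non-collider status of each node is determined by \(\pi\) itself and is therefore identical in both graphs; a non-collider blocks exactly when it lies in \(\*X\), which does not reference the graph; and a collider blocks exactly when neither it nor any of its descendants lies in \(\*X\), where the descendant set can only grow as edges are added back. Consequently no blocking node can appear in \(\G_{\underline{\*Z}}\) that was absent in \(\G_{\underline{\*R''}}\), so \(\pi\) remains open given \(\*X\) in \(\G_{\underline{\*Z}}\). This contradicts \(\*X\) separating \(\*Z\) from \(\*Y\) in \(\G_{\underline{\*Z}}\), which is the third condition for \(\*Z\).

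I expect the main obstacle to be stating and justifying the monotonicity property cleanly, especially the treatment of collider activation: one must verify that passing to the larger graph \(\G_{\underline{\*Z}}\) only enlarges descendant sets---so colliders that were active stay active---while never converting a blocking non-collider into an open one, the non-collider condition being purely set-membership in \(\*X\). A secondary point to check is that \(\pi\), exhibited in \(\G_{\underline{\*R''}}\), genuinely re-embeds as a back-door path emanating from a node of \(\*Z\) in \(\G_{\underline{\*Z}}\), which follows solely from \(\*Z \subseteq \*R''\) and \(r \in \*Z\).
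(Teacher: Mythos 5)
Your proof is correct and takes essentially the same route as the paper's: both invoke Lemma~\ref{lemma:findfdset:3rd} to obtain, for each \(r \in \*R''\), a witnessing set \(\*Z\) with \(r \in \*Z \subseteq \*R''\) satisfying the third condition, and then exploit the fact that passing from \(\G_{\underline{\*Z}}\) to its edge-subgraph \(\G_{\underline{\*R''}}\) cannot unblock any BD path. Your contrapositive phrasing, with the explicit check that collider status along a fixed path is graph-independent and that descendant sets only grow under edge addition, is in fact slightly more careful than the paper's one-line claim that removing further outgoing edges ``will not re-connect'' already-disconnected paths.
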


\subsection*{Step 3 of \textsc{FindFDSet}}

Finally, in Step 3, \textsc{FindFDSet} looks for a set \(\*Z \subseteq \*R''\) that satisfies the first condition of the FD criterion relative to \((\*X, \*Y)\), that is, \(\*Z\) intercepts all causal paths from \(\*X\) to \(\*Y\).
To facilitate checking whether a set \(\*Z\) intercepts all causal paths from \(\*X\) to \(\*Y\), we introduce the concept of causal path graph defined as follows.

\begin{wrapfigure}{R}{0.5\textwidth}
    \centering
    \null\hfill%
    \begin{subfigure}{0.2\textwidth}
        \includegraphics[width=\textwidth]{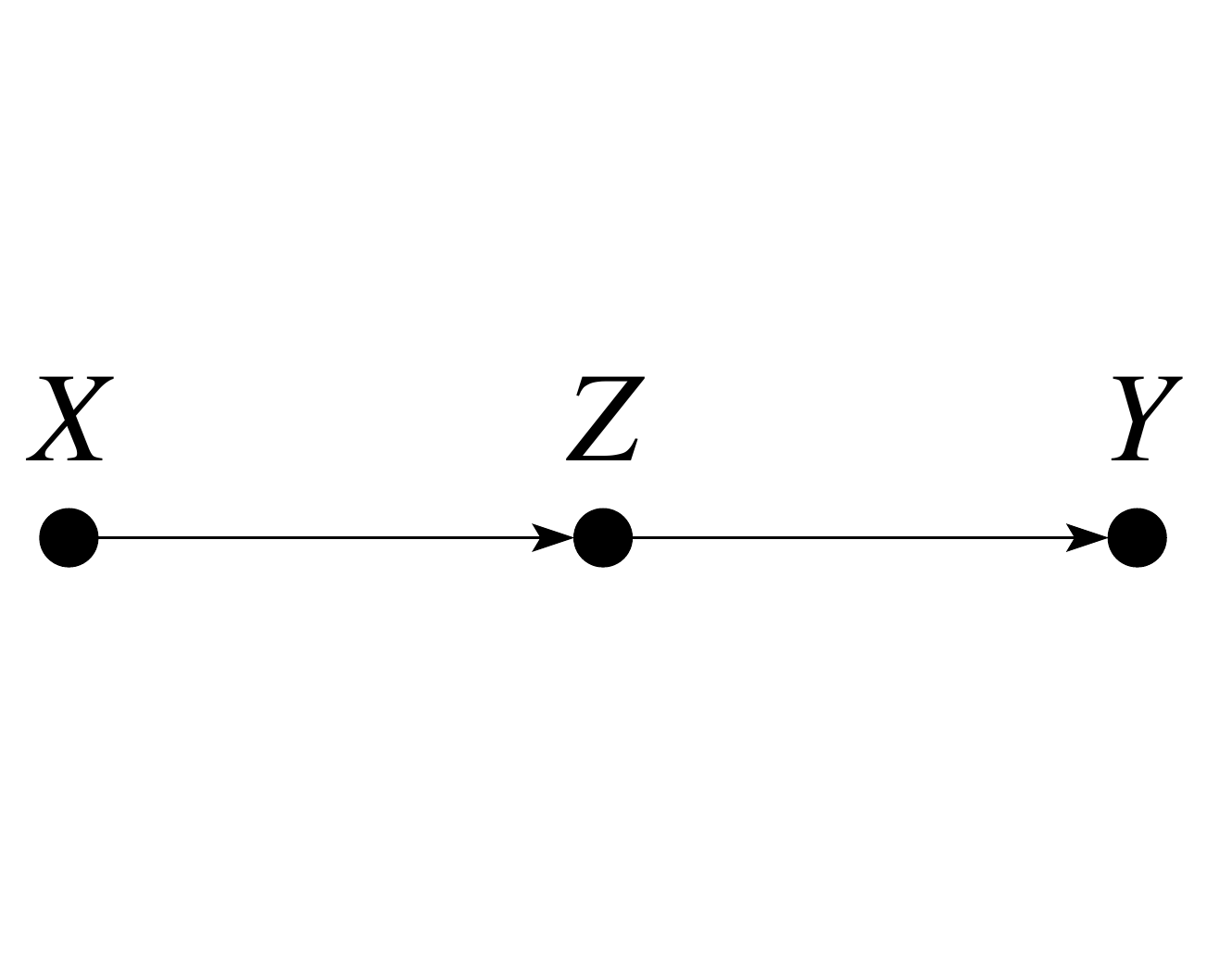}
        \caption{\(\G\)}
        \label{fig:fd:canonical:cpg}
    \end{subfigure}
    \hfill
    \begin{subfigure}{0.2\textwidth}
        \includegraphics[width=\textwidth]{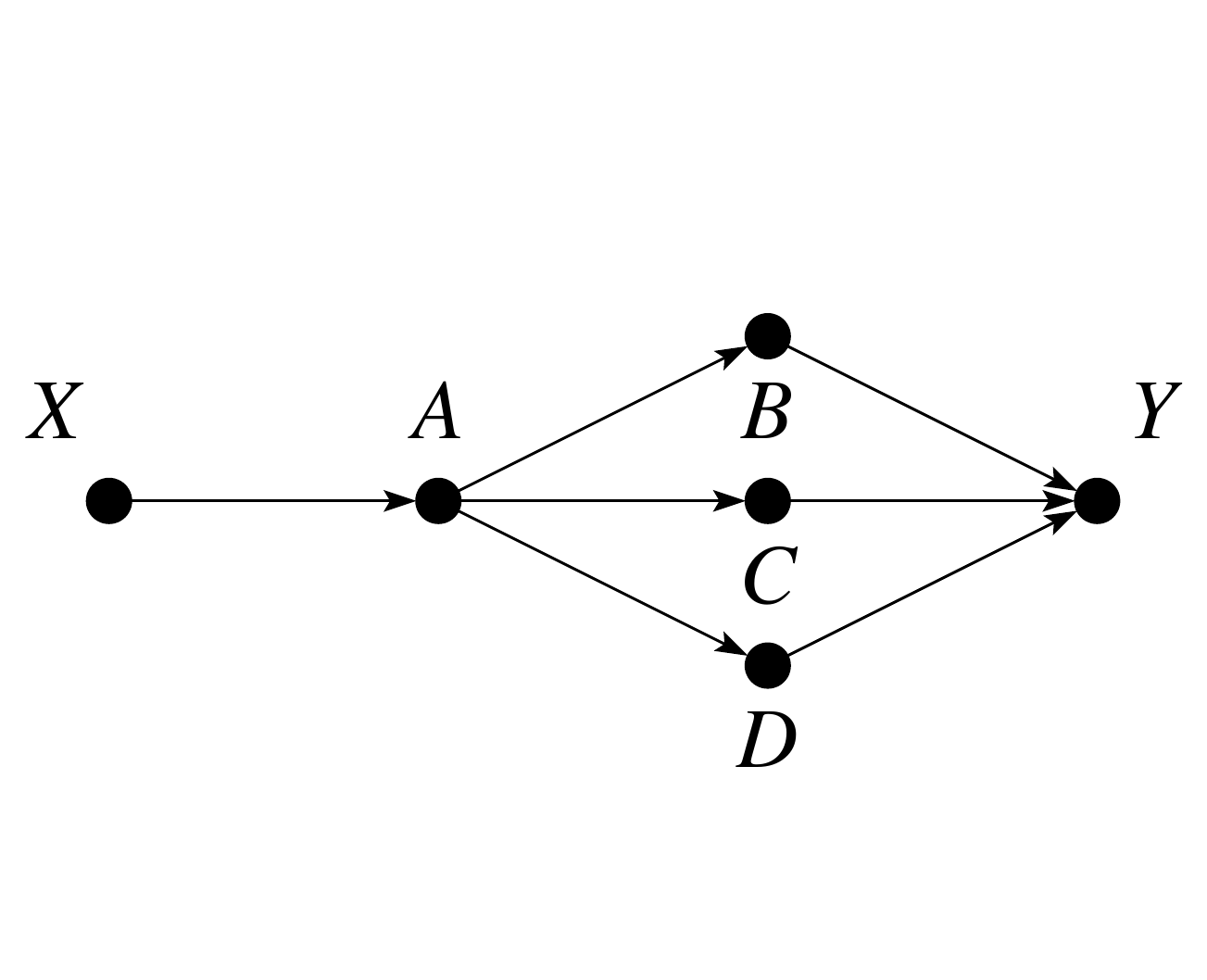}
        \caption{\(\G''\)}
        \label{fig:fd:intro:cpg}
    \end{subfigure}
    \null\hfill%
\caption{
Two causal path graphs generated from (a) the causal graph in Fig.~\ref{fig:fd:canonical}, and (b) the causal graph in Fig.~\ref{fig:fd:intro}.
Both preserve all and only causal paths from \(\{X\}\) to \(\{Y\}\) in the original graphs.
}
\end{wrapfigure}
\begin{definition}{(Causal Path Graph)}
\label{definition:causalpathgraph}
Let \(\G\) be a causal graph and \(\*X,\*Y\) disjoint sets of variables.
A \emph{causal path graph} \(\G'\) relative to \((\G,\*X,\*Y)\) is a graph over \(\*X \cup \*Y \cup PCP(\*X,\*Y)\), where \(PCP(\*X,\*Y) = (\De{\*X}_{\G_{\overline{\*X}}} \setminus \*X) \cap \An{\*Y}_{\G_{\underline{\*X}}}\)\footnote{A notation introduced by \citet{zander:etal14} to denote the set of variables on proper causal paths from \(\*X\) to \(\*Y\).}, constructed as follows:
\begin{enumerate}
     \item Construct a subgraph \(\G''= \G_{\*X \cup \*Y \cup PCP(\*X,\*Y)}\).
   
       \item Construct a graph \(\G'= \G''_{\overline{\*X}\underline{\*Y}}\), then  remove all bidirected edges from \(\G'\).
\end{enumerate}
\end{definition}

A function \(\textsc{GetCausalPathGraph}(\G, \*X, \*Y)\) for constructing a causal path graph is presented in Fig.~9 in the Appendix.

\begin{example}
Consider the causal graph \(\G'\) shown in Fig.~\ref{fig:fd:intro} with \(\*X = \{X\}\), and \(\*Y = \{Y\}\).
The causal path graph \(\G''\) relative to \((\G',\{X\},\{Y\})\) is shown in Fig.~\ref{fig:fd:intro:cpg}.
All causal paths from \(\{X\}\) to \(\{Y\}\) in \(\G'\) are present in \(\G''\).
\end{example}


After constructing a causal path graph \(\G'\) relative to \((\G,\*X,\*Y)\), we use the function \(\textsc{TestSep}(\G', \*X, \*Y, \*Z)\) to check if \(\*Z\) is a separator of \(\*X\) and \(\*Y\) in \(\G'\).  Based on the following lemma, \(\*Z\) satisfies the first condition of the FD criterion relative to \((\*X,\*Y)\) if and only if \textsc{TestSep} returns True.
\begin{lemma}
\label{lemma:findfdset:separator}
Let \(\G\) be a causal graph and \(\*X,\*Y,\*Z\) disjoint sets of variables.
Let \(\G'\) be the causal path graph relative to \((\G,\*X,\*Y)\).
Then, \(\*Z\) satisfies the first condition of the FD criterion relative to \((\*X, \*Y)\) if and only if \(\*Z\) is a separator of \(\*X\) and \(\*Y\) in \(\G'\).
\end{lemma}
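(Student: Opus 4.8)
The plan is to reduce both directions to a clean statement about the directed structure of \(\G'\). First I would establish a path-correspondence fact: the directed paths from \(\*X\) to \(\*Y\) in \(\G'\) are exactly the proper causal paths from \(\*X\) to \(\*Y\) in \(\G\). This follows from the construction, since \(PCP(\*X,\*Y)\) is precisely the set of non-endpoint vertices lying on such paths, deleting the edges into \(\*X\) and out of \(\*Y\) can neither destroy a causal path (which only leaves \(\*X\) and only enters \(\*Y\)) nor create a new one, and dropping bidirected edges does not affect directed paths. Along the way I would record the structural facts I will lean on: \(\G'\) is a DAG with no bidirected edges, vertices of \(\*X\) have no incoming and vertices of \(\*Y\) no outgoing edge, and---crucially, from the definition of \(PCP(\*X,\*Y)\) as \((\De{\*X}_{\G_{\overline{\*X}}}\setminus\*X)\cap\An{\*Y}_{\G_{\underline{\*X}}}\)---every vertex of \(\G'\) is both a descendant of \(\*X\) and an ancestor of \(\*Y\) along directed paths that survive in \(\G'\). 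I would also note it suffices to reason about proper causal paths, since any directed \(\*X\)-\(\*Y\) path contains a proper one as a subpath (take its last \(\*X\)-vertex), so \(\*Z\) intercepts all directed paths iff it intercepts all proper ones.

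The (\(\Leftarrow\)) direction is then short. Assume \(\*Z\) separates \(\*X\) and \(\*Y\) in \(\G'\) and take any proper causal path \(\pi\) from \(\*X\) to \(\*Y\) in \(\G\); by the correspondence it is a directed path in \(\G'\), hence blocked by \(\*Z\). A directed path has no colliders, so every internal vertex is a non-collider and blocking forces some internal vertex to lie in \(\*Z\); as the endpoints lie in \(\*X\cup\*Y\), which is disjoint from \(\*Z\), this internal vertex intercepts \(\pi\). Thus \(\*Z\) intercepts every proper causal path, i.e., condition 1 holds.

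For (\(\Rightarrow\)) I would argue the contrapositive: if \(\*Z\) is not a separator in \(\G'\), then some directed \(\*X\)-\(\*Y\) path avoids \(\*Z\). Because every vertex of \(\G'\) is an ancestor of \(\*Y\), the ancestral set \(\An{\*X\cup\*Y\cup\*Z}\) is all of \(\G'\), so by the moralization criterion already cited non-separation yields a d-connecting path in \(\G'\); I would choose such a path \(\pi\) minimizing first the number of colliders and then its length, and show it must be collider-free. A collider-free \(\*X\)-\(\*Y\) path in \(\G'\) is forced to be directed---it leaves \(\*X\) and enters \(\*Y\) along arrows, and no direction reversal is possible without creating a collider---and, being open, it avoids \(\*Z\), which is exactly the path we seek, contradicting condition 1.

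The main obstacle is ruling out activated colliders (equivalently, the marriage edges of the moral graph). Here I would use the two-sided reachability structure: at the first collider \(c\) of \(\pi\), the initial segment from \(\*X\) to \(c\) is directed with \(\*Z\)-free interior, and since \(c\) is an ancestor of \(\*Y\) I can append a directed path \(c\rightsquigarrow\*Y\); if this concatenation avoided \(\*Z\) it would already contradict condition 1, so it must meet \(\*Z\) at or below \(c\). I would exploit this, together with the minimality of \(\pi\), to reroute through a nearest \(\*Z\)-descendant of \(c\) and strictly decrease the collider count, contradicting the choice of \(\pi\). Carrying out this rerouting cleanly---showing that each activated collider can be bypassed into a directed, \(\*Z\)-avoiding causal path using that every vertex is both a descendant of \(\*X\) and an ancestor of \(\*Y\)---is the only delicate part; the remainder is bookkeeping over the zigzag (fork/collider) decomposition of the path.
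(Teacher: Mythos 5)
Your path-correspondence fact, the reduction to proper causal paths, and the entire \((\Leftarrow)\) direction are sound and coincide with the paper's argument. The genuine problem is the step you yourself flag as delicate: the claim that an activated collider can always be rerouted away so as to strictly decrease the collider count. That step does not merely need more care---it cannot be carried out, because the \((\Rightarrow)\) direction is false for arbitrary disjoint \(\mathbf{Z}\). Concretely, let \(\G\) be the DAG with edges \(x\to a\to c\to y\), \(x\to h\to f\to g\to c\), \(f\to b\to y\), and take \(\mathbf{X}=\{x\}\), \(\mathbf{Y}=\{y\}\), \(\mathbf{Z}=\{c,h\}\). Every vertex lies on a proper causal path from \(x\) to \(y\) and there are no bidirected edges or edges into \(x\) or out of \(y\), so \(\G'=\G\). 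Every directed \(x\)--\(y\) path passes through \(c\) or \(h\), so \(\mathbf{Z}\) satisfies the first FD condition; yet the path \(x\to a\to c\leftarrow g\leftarrow f\to b\to y\) is \(d\)-connecting given \(\mathbf{Z}\), since its only collider \(c\) lies in \(\mathbf{Z}\) while \(a,g,f,b\notin\mathbf{Z}\). Hence \(\mathbf{Z}\) is not a separator in \(\G'\). Your rerouting heuristic visibly fails here: to remove the collider you would have to pass through \(c\) (or \(h\)) as a non-collider, which blocks the rerouted path, and indeed no open collider-free \(x\)--\(y\) path (equivalently, by your own observation, no directed \(\mathbf{Z}\)-avoiding path) exists, so no collider-count-decreasing induction can terminate in the contradiction you want.

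For comparison, the paper proves \((\Leftarrow)\) exactly as you do, but disposes of \((\Rightarrow)\) with the assertion that ``by the construction of \(\G'\), all and only paths from \(\mathbf{X}\) to \(\mathbf{Y}\) must be causal''---an assertion refuted by the same example (the displayed collider path is not causal), so the paper's own proof of this direction is also too quick; you were right to identify activated colliders as the crux. What rescues the lemma in its actual use is that the algorithm only ever applies it to \(\mathbf{Z}=\mathbf{R}''\), which by Lemma~\ref{lemma:findfdset:r} satisfies the third FD condition, and under that extra hypothesis the problematic case provably cannot arise: if a collider \(c\) on an \(\mathbf{X}\)--\(\mathbf{Y}\) path in \(\G'\) were activated, i.e.\ \(c\in\mathbf{Z}\) or some \(z\in \mathrm{De}_{\G'}(c)\cap\mathbf{Z}\), then reversing the directed path from \(c\) down to \(z\) and appending the remainder of the open path \(c\leftarrow\cdots\leftarrow f\to\cdots\to y\) yields a back-door path from \(\mathbf{Z}\) to \(\mathbf{Y}\) with no colliders and no vertex of \(\mathbf{X}\) on it (vertices of \(\mathbf{X}\) have no incoming edges in \(\G'\)), hence a path not blocked by \(\mathbf{X}\), contradicting condition 3. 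Then every open \(\mathbf{X}\)--\(\mathbf{Y}\) path in \(\G'\) is collider-free, hence directed, and the equivalence follows from your correspondence fact. The correct repair of your proof is therefore to import the third FD condition as an additional hypothesis (as the algorithmic context licenses), not to attempt the rerouting argument.
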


Given the set \(\*R''\) that contains every set \(\*Z\) with \(\*I \subseteq \*Z \subseteq \*R\) that satisfies both the second and the third conditions of the FD criterion (Lemma~\ref{lemma:findfdset:3rd}), it may appear that we need to search for a set \(\*Z \subseteq \*R''\) that satisfies the first condition of the FD criterion. We show instead that   all we need is to check whether the set $\*R''$ itself satisfies the first condition which has been shown to satisfy the second and third conditions by Lemma~\ref{lemma:findfdset:r}. This result is summarized in the following lemma.
\begin{lemma}
\label{lemma:findfdset:exists}
There exists a set \(\*Z_0\) satisfying the FD criterion relative to \((\*X,\*Y)\) with \(\*I \subseteq \*Z_0 \subseteq \*R\) if and only if \(\*R''\) generated by \textsc{GetCand3rdFDC} (in Step 2 of Alg.~\ref{alg:findfdset}) satisfies the FD criterion relative to \((\*X,\*Y)\).
\end{lemma}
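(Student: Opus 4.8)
The plan is to prove the equivalence by splitting it into its two directions and to arrange matters so that, thanks to the three preceding lemmas, the only genuine content is verifying the \emph{first} condition of the FD criterion for \(\*R''\). Throughout, I would keep in mind that \(\*R''\) is already known to satisfy the second and third conditions, so each direction only needs to deal with the first condition and with the range constraint \(\*I \subseteq \cdot \subseteq \*R\).

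For the ``if'' direction I would first pin down the containment chain. Lemma~\ref{lemma:getcand2ndfdc} gives \(\*I \subseteq \*R' \subseteq \*R\), and Lemma~\ref{lemma:findfdset:3rd} gives \(\*I \subseteq \*R'' \subseteq \*R'\); together these yield \(\*I \subseteq \*R'' \subseteq \*R\). Hence, whenever \(\*R''\) satisfies the full FD criterion, the choice \(\*Z_0 = \*R''\) is a witness lying in the allowed range, and existence follows at once.

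For the ``only if'' direction, suppose some \(\*Z_0\) with \(\*I \subseteq \*Z_0 \subseteq \*R\) satisfies all three conditions. I would immediately record that \(\*R''\) satisfies the second and third conditions on its own: the third holds by Lemma~\ref{lemma:findfdset:r}, and since \(\*R'' \subseteq \*R'\), the ``every subset of \(\*R'\)'' clause of Lemma~\ref{lemma:getcand2ndfdc} supplies the second. This reduces the entire task to showing that \(\*R''\) intercepts every directed path from \(\*X\) to \(\*Y\), i.e.\ the first condition.

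The crux---and the one step requiring care---combines a monotonicity observation with a containment from Lemma~\ref{lemma:findfdset:3rd}. Because \(\*Z_0\) satisfies the second and third conditions with \(\*I \subseteq \*Z_0 \subseteq \*R\), Lemma~\ref{lemma:findfdset:3rd} forces \(\*Z_0 \subseteq \*R''\). The first condition is monotone under set enlargement: if a directed path from \(\*X\) to \(\*Y\) is intercepted by \(\*Z_0\), the intercepting node lies in \(\*Z_0 \subseteq \*R''\), so the same path is intercepted by \(\*R''\); as \(\*Z_0\) intercepts all such paths, so does \(\*R''\). This gives the first condition, and with the second and third already in hand, \(\*R''\) satisfies the FD criterion. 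The subtlety to flag is that one must route the argument through \(\*R''\) itself rather than through an arbitrary subset: by the Remark and Example~\ref{example:getcand3rdfdc:limitation}, the third condition is \emph{not} inherited by subsets of \(\*R''\), which is precisely why Lemma~\ref{lemma:findfdset:r} is indispensable and why monotonicity can only be exploited for the first condition, not the third.
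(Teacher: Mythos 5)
Your proof is correct and uses essentially the same ingredients as the paper's: the trivial witness \(\*Z_0 = \*R''\) for the ``if'' direction, Lemma~\ref{lemma:getcand2ndfdc} and Lemma~\ref{lemma:findfdset:r} for the second and third conditions on \(\*R''\), the containment \(\*Z_0 \subseteq \*R''\) from Lemma~\ref{lemma:findfdset:3rd}, and monotonicity of path interception for the first condition. The only difference is presentational: the paper proves the ``only if'' direction by contrapositive, splitting candidates into the ranges \(\*I \subseteq \*Z_0 \subseteq \*R''\), \(\*R'' \subset \*Z_0 \subseteq \*R'\), and \(\*R' \subset \*Z_0 \subseteq \*R\), whereas you argue directly, with that case analysis absorbed into your single invocation of Lemma~\ref{lemma:findfdset:3rd}.
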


\begin{example} \label{eg-lastFDC}
Continuing Example~\ref{eg-3rdFDC}. In Step 3, \textsc{FindFDSet} outputs \(\*Z = \*R'' = \{A,B,C\}\) since $\*Z$ is a separator of $\{X\}$ and $\{Y\}$ in the causal path graph \(\G''\) in Fig.~\ref{fig:fd:intro:cpg}.
\end{example}

The results in this section are summarized as follows. 
\begin{theorem}
[Correctness of \textsc{FindFDSet}]
\label{thm:findfdset}
Let \(\G\) be a causal graph, \(\*X,\*Y\) disjoint sets of variables, and \(\*I, \*R\) sets of variables such that \(\*I \subseteq \*R\).
Then, \textsc{FindFDSet}\((\G, \*X, \*Y, \*I, \*R)\) outputs a 
set \(\*Z\) with \(\*I \subseteq \*Z \subseteq \*R\) that satisfies the FD criterion relative to \((\*X,\*Y)\), or outputs \(\perp\) if none exists, in \(O(n^3 (n+m))\) time, where \(n\) and \(m\) represent the number of nodes and edges in \(\G\).
\end{theorem}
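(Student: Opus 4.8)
The plan is to establish \textbf{correctness} and \textbf{running time} separately, since the theorem bundles both claims.

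For correctness, the strategy is to chain together the lemmas already proved for the three steps, verifying that \textsc{FindFDSet} returns a set satisfying all three FD conditions precisely when one exists. First I would argue soundness: if \textsc{FindFDSet} returns a set $\*Z = \*R''$ (rather than $\perp$), then $\*Z$ satisfies all three conditions. Lemma~\ref{lemma:getcand2ndfdc} guarantees that any $\*Z$ with $\*I \subseteq \*Z \subseteq \*R'$ satisfies the second condition, and since $\*R'' \subseteq \*R'$, the returned $\*R''$ satisfies condition~2. Lemma~\ref{lemma:findfdset:r} gives that $\*R''$ satisfies condition~3. Finally, the algorithm only returns in Step~3 when \textsc{TestSep} confirms $\*R''$ is a separator of $\*X$ and $\*Y$ in the causal path graph $\G'$, which by Lemma~\ref{lemma:findfdset:separator} is equivalent to $\*R''$ satisfying condition~1. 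Hence all three conditions hold. For completeness, I would argue the contrapositive: suppose some admissible $\*Z_0$ with $\*I \subseteq \*Z_0 \subseteq \*R$ exists; then Lemma~\ref{lemma:findfdset:exists} states this is equivalent to $\*R''$ itself satisfying the FD criterion, so the \textsc{TestSep} check in Step~3 succeeds and the algorithm returns $\*R''$ rather than $\perp$. Conversely, the $\perp$ returns in Steps~1 and~2 are justified by the ``only if'' directions of Lemmas~\ref{lemma:getcand2ndfdc} and~\ref{lemma:findfdset:3rd}: a return of $\perp$ there means some variable of $\*I$ violates condition~2 or~3, so no admissible superset of $\*I$ can exist.

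For the running time, the plan is to account for each step and show the $O(n^3(n+m))$ bound. Step~1 (\textsc{GetCand2ndFDC}) iterates over the at most $n$ variables in $\*R$, performing one \textsc{TestSep} call per variable; each $d$-separation test costs $O(n+m)$ via moralization plus a graph search, giving $O(n(n+m))$. Step~2 (\textsc{GetCand3rdFDC}) iterates over at most $n$ variables, invoking \textsc{GetDep} once per variable; I would bound a single \textsc{GetDep} call by observing that its BFS visits each node at most once but rebuilds and re-moralizes the auxiliary graph $\mathcal{M}$ on each iteration of the while loop, so one \textsc{GetDep} costs $O(n(n+m))$ and the full loop costs $O(n^2(n+m))$, making Step~2 the dominant contributor at $O(n^2(n+m)) \cdot$ (the outer loop factor) $= O(n^3(n+m))$ overall. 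Step~3 constructs the causal path graph and runs a single \textsc{TestSep}, costing $O(n+m)$, which is dominated.

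The main obstacle I anticipate is the running-time analysis of \textsc{GetDep}, specifically justifying that the repeated moralization inside its while loop does not blow up the complexity beyond the stated bound. Because each iteration rebuilds $\mathcal{M}$ from a freshly modified graph $\G'_{\underline{\*T \cup \*Z' \cup \*{NR}}}$, a naive count could overcharge; the careful argument is that BFS marks each node visited at most once, bounding the number of while-loop iterations by $O(n)$, and each iteration's moralization and neighbor computation is $O(n+m)$, so a single \textsc{GetDep} is $O(n(n+m))$. Establishing this per-call bound cleanly, and confirming that the outer loop of \textsc{GetCand3rdFDC} multiplied by this cost yields exactly $O(n^3(n+m))$ when combined with the outer structure, is the delicate accounting step. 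The correctness portion, by contrast, should follow almost mechanically from the lemma chain, with the only subtlety being the correct handling of the two early $\perp$-returns using the ``only if'' directions of the Step~1 and Step~2 lemmas.
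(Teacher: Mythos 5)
Your correctness argument is essentially the paper's own proof: the same lemma chain (Lemma~\ref{lemma:getcand2ndfdc} for the second condition and for confining candidates to \(\*R'\), Lemma~\ref{lemma:findfdset:3rd} for confining admissible sets to \(\*R''\), Lemma~\ref{lemma:findfdset:r} for \(\*R''\) itself satisfying the third condition, Lemma~\ref{lemma:findfdset:separator} to reduce the first condition to a separation test in the causal path graph, and Lemma~\ref{lemma:findfdset:exists} to conclude that testing \(\*Z = \*R''\) alone suffices), including the treatment of the early \(\perp\) returns in Steps~1 and~2. That part is sound and matches the paper.

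The genuine gap is in your runtime analysis of \textsc{GetDep}, precisely the step you flagged as delicate. First, the claim that ``each iteration's moralization and neighbor computation is \(O(n+m)\)'' is false: moralization connects every pair of nonadjacent parents of a common child, so the moral graph can have \(\Theta(n^2)\) edges even when \(m = O(n)\) (consider one node with \(n-1\) parents), and each reconstruction of \(\mathcal{M}\) inside the while loop therefore costs \(\Theta(n^2)\) in the worst case. The paper accordingly charges \(O(n^2)\) per re-moralization and obtains \textsc{GetDep} in \(O(n^2(n+m))\) time per call (Prop.~A.3), so that \textsc{GetCand3rdFDC}, with its \(O(n)\) calls to \textsc{GetDep}, runs in \(O(n^3(n+m))\) (Prop.~A.4). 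Second, having (under)estimated the full loop of \textsc{GetCand3rdFDC} at \(O(n^2(n+m))\), you then multiply by an unexplained ``outer loop factor'' to reach \(O(n^3(n+m))\) --- but no such outer loop exists: \textsc{FindFDSet} invokes \textsc{GetCand3rdFDC} exactly once, and that function is itself the loop over \(\*R'\). You land on the advertised bound only because underestimating \textsc{GetDep} by a factor of \(n\) is cancelled by this phantom factor of \(n\); as written, the derivation is unsound even though the final figure agrees with the theorem. The fix is to charge \(O(n^2)\) (moralization) per BFS-visited node, giving \(O(n^2(n+m))\) per \textsc{GetDep} call and \(O(n) \cdot O(n^2(n+m)) = O(n^3(n+m))\) for Step~2, which then dominates Steps~1 and~3 exactly as in the paper.
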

\section{Enumerating Front-door Adjustment Sets}
\label{section:fd_algorithm}

\begin{wrapfigure}{R}{0.53\textwidth}
\begin{minipage}[t]{0.53\textwidth}
\begin{algorithm}[H]
\caption{\textsc{ListFDSets} (\(\G, \*X, \*Y, \*I, \*R\))}
\label{alg:listfdsets}

\begin{algorithmic} [1]
    \State {\bfseries Input:} \(\G\) a causal diagram; \(\*X,\*Y\) disjoint sets of variables; \(\*I,\*R\) sets of variables.
    
    \State {\bfseries Output:} Listing front-door adjustment set \(\*Z\) relative to \((\*X,\*Y)\) where \(\*I \subseteq \*Z \subseteq \*R\).

    
    
    \State \textbf{if} \(\textsc{FindFDSet}(\G, \*X, \*Y, \*I, \*R) \neq \perp\) \textbf{then:} \label{alg:listfdsets:findfdset}
        \Indent
        \State \textbf{if} \(\*I = \*R\) \textbf{then:} Output \(\*I\)
        \State \textbf{else:}
            \Indent
            \State \(v \gets \) any variable from \(\*R \setminus \*I\) \label{alg:listfdsets:branchstart}
            \State \(\textsc{ListFDSets}(\G, \*X, \*Y, \*I \cup \{v\}, \*R)\) \label{alg:listfdsets:branchleft}
            \State \(\textsc{ListFDSets}(\G, \*X, \*Y, \*I, \*R \setminus \{v\})\) \label{alg:listfdsets:branchright}
            \EndIndent
        \EndIndent
\end{algorithmic}
\end{algorithm}
\end{minipage}
\end{wrapfigure}
Our goal in this section is to develop an algorithm that lists \textit{all} FD adjustment sets in a causal diagram.
In general, there may exist exponential number of such sets, which means that any listing algorithm will take exponential time to list them all.
We will instead look for an algorithm that has an interesting property known as  \textit{polynomial delay} \citep{Takata2010}.
In words, poly-delay algorithms output the first answer (or indicate none is available) in polynomial time, and take polynomial time to output each consecutive answer as well.
Consider the following example.

\begin{example}
\label{ex:fd:exp}
Consider the three causal graphs in Fig.~\ref{fig:fd:exp}.
In \(\G\) shown in Fig.~\ref{fig:fd:exp:1}, there exists 9 valid FD adjustment sets relative to \((\{X\},\{Y\})\).
In \(\G'\), presented in Fig.~\ref{fig:fd:exp:2}, two variables \(A_3\) and \(B_3\) are added from \(\G\), forming an additional causal path from \(X\) to \(Y\).
27 FD adjustment sets relative to \((\{X\},\{Y\})\) are available in \(\G'\).
If another causal path \(X \rightarrow A_4 \rightarrow B_4 \rightarrow Y\) is added to \(\G'\), then there are 81 FD adjustment sets relative to \((\{X\},\{Y\})\).
As shown in Fig.~\ref{fig:fd:exp:3}, in a graph \(\G''\) with similar pattern with causal path \(X \rightarrow A_i \rightarrow B_i \rightarrow Y, i=1, \ldots n\), there are at least $3^n$ number of FD adjustment sets.
\end{example}


\begin{figure}[t]
    \centering
    
    
    \begin{subfigure}{0.25\textwidth}
        \includegraphics[width=\textwidth]{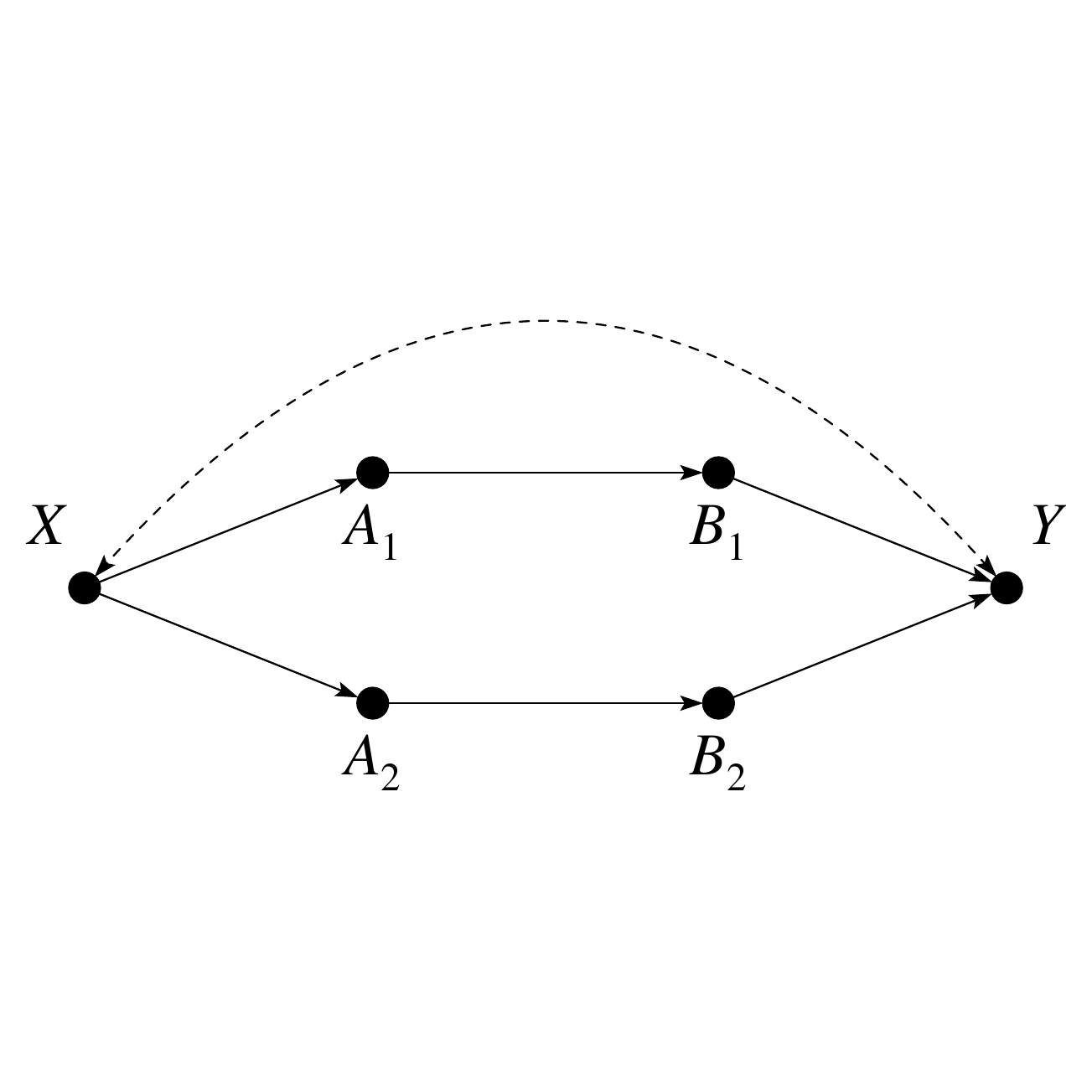}
        \caption{\(\G\)}
        \label{fig:fd:exp:1}
    \end{subfigure}
    \hfill
    \begin{subfigure}{0.25\textwidth}
        \includegraphics[width=\textwidth]{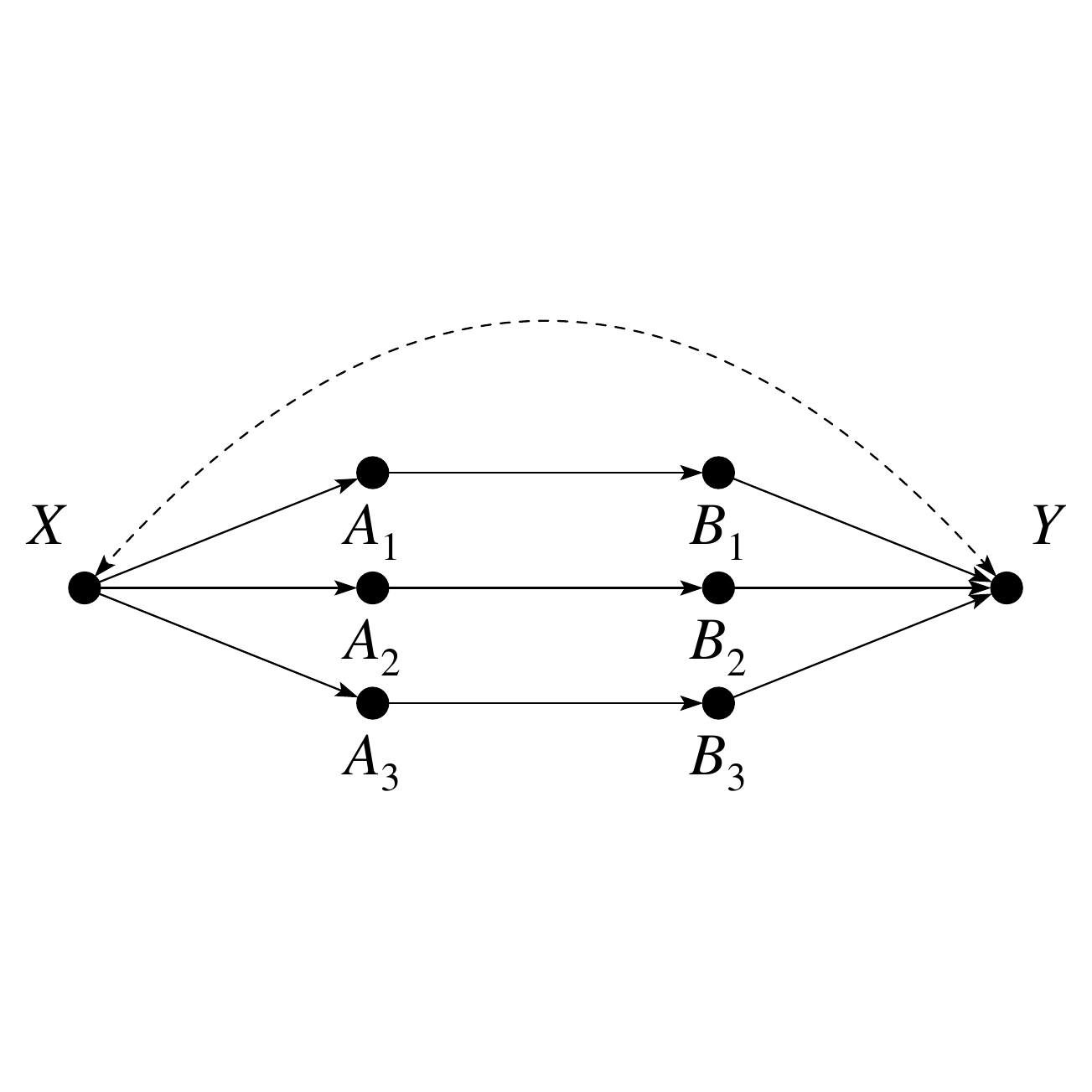}
        \caption{\(\G'\)}
        \label{fig:fd:exp:2}
    \end{subfigure}
    \hfill
    \begin{subfigure}{0.3\textwidth}
        \includegraphics[width=\textwidth]{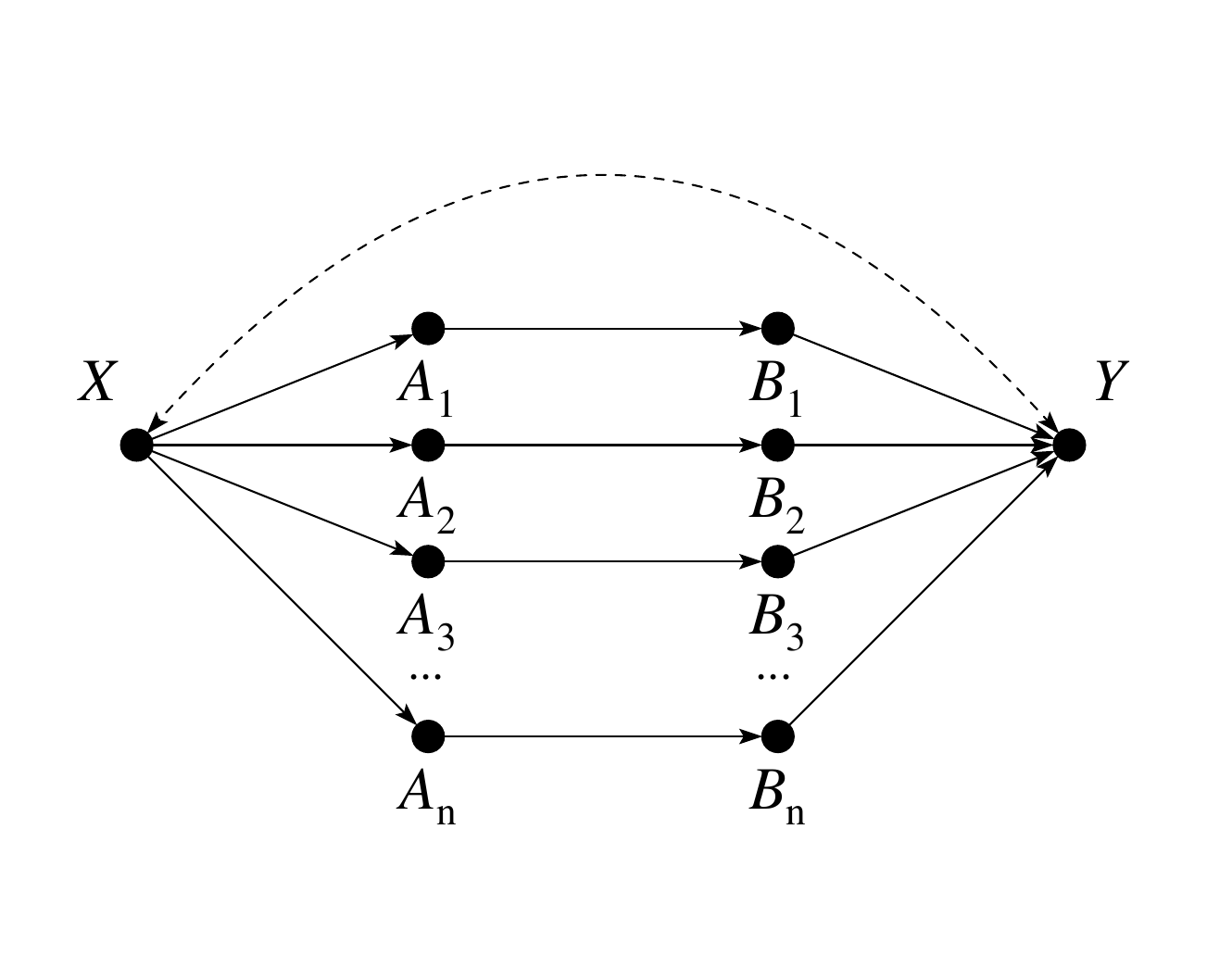}
        \caption{\(\G''\)}
        \label{fig:fd:exp:3}
    \end{subfigure}
\caption{
Three examples of the FD criterion to demonstrate that total number of FD adjustment sets may be exponential with respect to the number of nodes in a graph.
}
\label{fig:fd:exp}
\end{figure}

We have developed an algorithm named \textsc{ListFDSets}, shown in Alg.~\ref{alg:listfdsets}, that lists all FD adjustment sets $\*Z$ relative to \((\*X, \*Y)\) satisfying \(\*I \subseteq \*Z \subseteq \*R\) with polynomial delay, given a causal diagram \(\G\), disjoint sets of variables \(\*X\) and \(\*Y\), and two sets of variables \(\*I\) and \(\*R\).

\begin{example}
\label{ex:listfdsets}
Consider the causal graph \(\G'\) shown in Fig.~\ref{fig:fd:intro} with \(\*X = \{X\}\), \(\*Y = \{Y\}\), \(\*I = \emptyset\) and \(\*R = \{A,B,C,D\}\).
\textsc{ListFDSets} outputs \(\{A,B,C\}\),  \(\{A,B\}\), \(\{A,C\}\), \(\{A\}\) one by one,  
and finally stops as no more  adjustment sets exist.
\end{example}

The algorithm \textsc{ListFDSets} takes the same search strategy as the listing algorithm \textsc{ListSep} \citep{zander:etal14} that enumerates all BD adjustment sets with polynomial delay. 
\textsc{ListFDSets} implicitly constructs a binary search tree where each tree node \(\mathcal{N}(\*I', \*R') \) represents the collection of all FD adjustment sets \(\*Z\) relative to \((\*X,\*Y)\) with \(\*I' \subseteq \*Z \subseteq \*R'\).
The search starts from the root tree node \(\mathcal{N}(\*I, \*R) \), indicating that \textsc{ListFDSets} will list all FD adjustment sets \(\*Z\) relative to \((\*X,\*Y)\) with \(\*I \subseteq \*Z \subseteq \*R\).


\begin{wrapfigure}{R}{0.5\textwidth}
    \centering
    \includegraphics[width=0.5\textwidth]{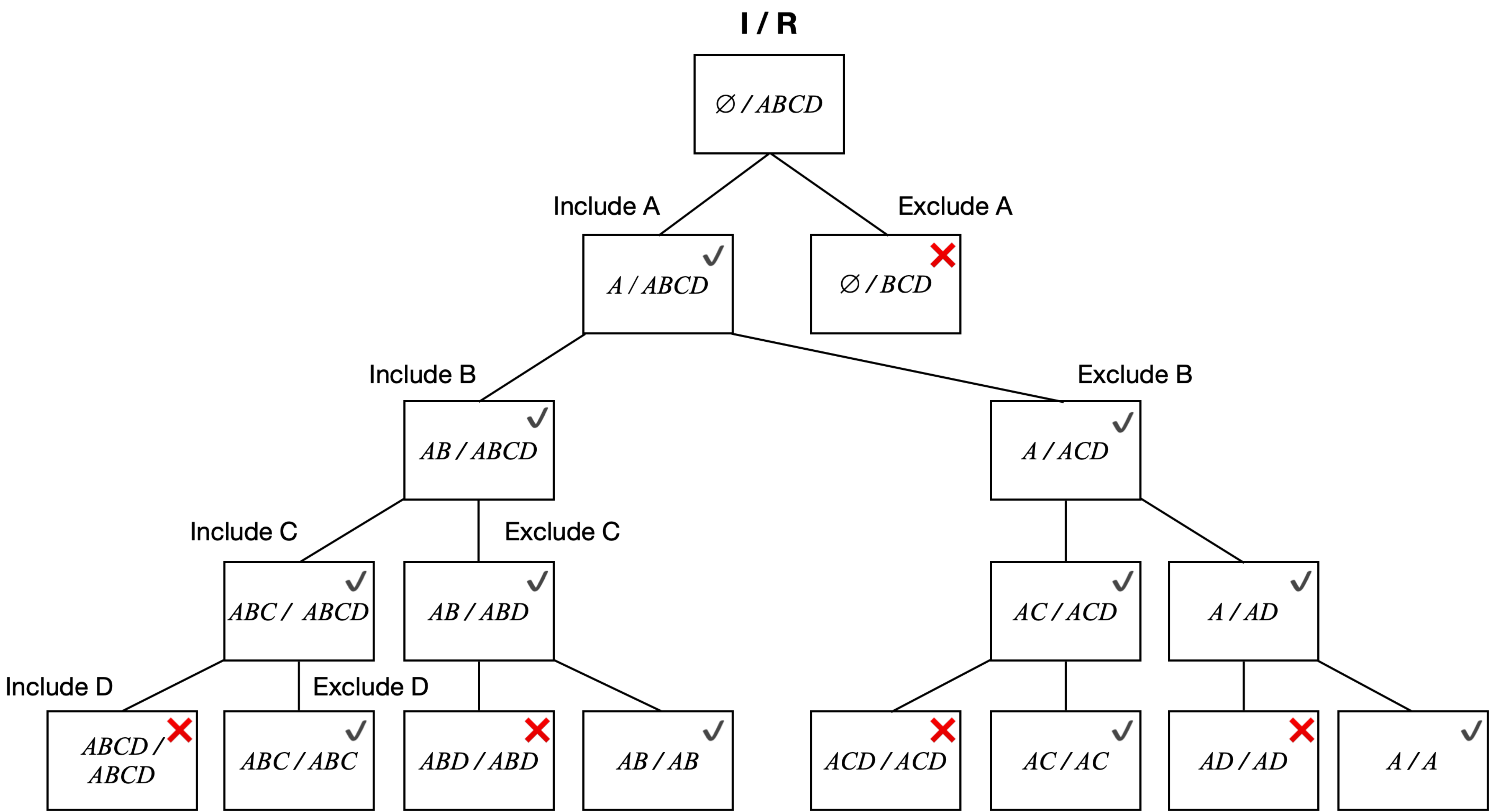}
\caption{
A search tree illustrating the running of \textsc{ListFDSets} in Example~\ref{ex:listfdsets:searchtree}.
}
\label{fig:listfdsets:searchtree}
\vspace{-0.2in}
\end{wrapfigure}
Upon visiting a node \(\mathcal{N}(\*I', \*R') \), \textsc{ListFDSets} first calls the  function \textsc{FindFDSet} (line~\ref{alg:listfdsets:findfdset}) to decide whether it is necessary to search further from \(\mathcal{N}\).
If \textsc{FindFDSet} outputs \(\perp\), then there does not exist any FD adjustment set \(\*Z_0\) with \(\*I' \subseteq \*Z_0 \subseteq \*R'\) and there is no need to search further. 
Otherwise, \(\mathcal{N}\) spawns two children, \(\mathcal{N}_1\) and \(\mathcal{N}_2\), and \textsc{ListFDSets} continues the search over each child separately.
\(\mathcal{N}_1\) in line~\ref{alg:listfdsets:branchleft} represents the collection of all FD adjustment sets \(\*Z_1\) relative to \((\*X,\*Y)\) where \(\*I' \cup \{v\} \subseteq \*Z_1 \subseteq \*R'\).
On the other hand, \(\mathcal{N}_2\) in line~\ref{alg:listfdsets:branchright} represents the collection of all FD adjustment sets \(\*Z_2\) where \(\*I' \subseteq \*Z_2 \subseteq \*R' \setminus \{v\}\).
\(\mathcal{N}_1\) and \(\mathcal{N}_2\) are disjoint and thus the search never overlaps, which is crucial to guaranteeing that \textsc{ListFDSets} runs in polynomial delay.
Finally, a leaf tree node \(\mathcal{L}\) is reached when \(\*I' = \*R'\), and  \textsc{ListFDSets} outputs a valid FD adjustment set \(\*I'\).


\begin{example}
\label{ex:listfdsets:searchtree}
Continuing from Example~\ref{ex:listfdsets}.
 Fig.~\ref{fig:listfdsets:searchtree} shows a search tree generated by running \textsc{ListFDSets}\((\G', \{X\}, \{Y\}, \emptyset, \{A,B,C,D\})\). 
Initially, the search starts from the root tree node \(\mathcal{N}(\emptyset, \{A,B,C,D\})\).
Since \textsc{FindFDSet} returns a set \(\{A,B,C\}\), \(\mathcal{N}\) branches out into two children \(\mathcal{N}'(\{A\}, \{A,B,C,D\})\) and \(\mathcal{N}''(\emptyset, \{B,C,D\})\).
The search continues from the left child \(\mathcal{N}'\) until reaching the leaf tree node \(\mathcal{L}_1(\{A,B,C,D\}, \{A,B,C,D\})\) where \textsc{FindFDSet} returns \(\perp\).
\textsc{ListFDSets} backtracks to the parent tree node \(\mathcal{N}_1(\{A,B,C\}, \{A,B,C,D\})\) and then checks the next leaf \(\mathcal{L}_2(\{A,B,C\}, \{A,B,C\})\) where \textsc{FindFDSet} returns a set \(\{A,B,C\}\), a valid FD admissible set relative to \((\{X\},\{Y\})\).
\textsc{ListFDSets} outputs \(\{A,B,C\}\).
Next, \textsc{ListFDSets} backtracks to the tree node \(\mathcal{N}_2(\{A,B\}, \{A,B,C,D\})\) and reaches the leaf \(\mathcal{L}_3(\{A,B\}, \{A,B\})\) where \textsc{FindFDSet} outputs \(\{A,B\}\), and thus \textsc{ListFDSets} outputs \(\{A,B\}\).
\textsc{ListFDSets} continues and outputs two sets \(\{A,C\}\) and \(\{A\}\) in order.
Finally, \textsc{ListFDSets} backtracks to the root \(\mathcal{N}\) and checks the right child \(\mathcal{N}''\) where \textsc{FindFDSet} returns \(\perp\).
\textsc{ListFDSets} does not search further from \(\mathcal{N}''\) and stops as no more tree node is left to be visited.
\end{example}

Our results are summarized in the following theorem, which provides the correctness, completeness, and poly-delay complexity of the proposed algorithm.
Note that the completeness of the algorithm means that it lists ``all'' valid sets satisfying the FD criterion.
On the other hand, Pearl's FD criterion is not complete in the sense that there might exist a causal effect that can be computed by the FD adjustment formula (Eq.~(\ref{eq:fd})) but the set \(\*Z\) does not satisfy the FD criterion.

\begin{theorem}[Correctness of \textsc{ListFDSets}]
\label{thm:listfdsets}
Let \(\G\) be a causal graph, \(\*X,\*Y\) disjoint sets of variables, and \(\*I,\*R\) sets of variables.
\textsc{ListFDSets\((\G,\*X,\*Y, \*I, \*R)\)} enumerates all and only sets \(\*Z\) with \(\*I \subseteq \*Z \subseteq \*R\) that satisfy the FD criterion relative to \((\*X, \*Y)\) in \(O(n^4 (n+m))\) delay where \(n\) and \(m\) represent the number of nodes and edges in \(\G\).
\end{theorem}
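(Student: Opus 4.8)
The plan is to establish the theorem in three parts: soundness (every set output is a valid FD adjustment set in the range $\*I \subseteq \*Z \subseteq \*R$), completeness (every such valid set is output, and each exactly once), and the polynomial-delay bound. The entire argument rests on Theorem~\ref{thm:findfdset}, which guarantees that each call $\textsc{FindFDSet}(\G, \*X, \*Y, \*I', \*R')$ correctly decides, in $O(n^3(n+m))$ time, whether there exists a set satisfying the FD criterion with $\*I' \subseteq \*Z \subseteq \*R'$, and returns such a set when one exists.

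For soundness, I would observe that the recursion maintains the invariant $\*I \subseteq \*I' \subseteq \*R' \subseteq \*R$ at every visited node $\mathcal{N}(\*I', \*R')$, since line~\ref{alg:listfdsets:branchleft} only enlarges $\*I'$ and line~\ref{alg:listfdsets:branchright} only shrinks $\*R'$. A set is output only at a node with $\*I' = \*R'$ reached after the guard on line~\ref{alg:listfdsets:findfdset} passed; by Theorem~\ref{thm:findfdset}, a non-$\perp$ return with $\*I' = \*R'$ forces the unique candidate $\*Z = \*I'$ to satisfy the FD criterion, and the invariant gives $\*I \subseteq \*I' \subseteq \*R$. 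Hence every output is valid and in range.

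For completeness and non-duplication, the key observation is that the two recursive calls partition the search space at each branching node: the left child (line~\ref{alg:listfdsets:branchleft}) collects exactly the valid sets containing the chosen $v$, and the right child (line~\ref{alg:listfdsets:branchright}) exactly those excluding $v$. Fix any valid FD set $\*Z^*$ with $\*I \subseteq \*Z^* \subseteq \*R$. I would argue by induction down the tree that $\*Z^*$ stays within the range of exactly one visited node at each level—descending left when $v \in \*Z^*$ and right otherwise—until the leaf $\*I' = \*R' = \*Z^*$ is reached and $\*Z^*$ is output. At each step the guard on line~\ref{alg:listfdsets:findfdset} passes because $\*Z^*$ itself witnesses a solution in the current range (by the soundness direction of Theorem~\ref{thm:findfdset}), so no prune can discard $\*Z^*$; disjointness of the two branches then ensures $\*Z^*$ reaches a single leaf and is output exactly once.

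For the delay bound, the crucial lemma I would prove is that every visited node whose guard returns non-$\perp$ yields at least one output within its subtree. This follows by induction on $|\*R' \setminus \*I'|$: the base case $\*I' = \*R'$ outputs immediately, and in the inductive step a witnessing solution lies in at least one child's range, so by Theorem~\ref{thm:findfdset} that child's guard also returns non-$\perp$ and, by the inductive hypothesis, produces an output. Consequently the depth-first search never descends into a fruitless live subtree, and since each recursive step strictly decreases $|\*R' \setminus \*I'|$, the recursion depth is at most $n$. Between two consecutive outputs the search therefore visits only $O(n)$ tree nodes—those along the backtrack-and-descend path, together with the empty siblings probed at most once per level—calling \textsc{FindFDSet} once per visited node; since each call costs $O(n^3(n+m))$, the delay is $O(n^4(n+m))$, mirroring the analysis of \textsc{ListSep} \citep{zander:etal14}. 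The main obstacle is making this delay argument rigorous: one must verify that no live internal node triggers a deep but output-free exploration, which is exactly what the ``every live node yields an output'' lemma rules out, and one must carefully charge the wasted guard calls on empty sibling subtrees against the next output so that the per-output work stays $O(n)$ node visits rather than scaling with the (possibly exponential) total number of admissible sets.
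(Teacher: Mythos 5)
Your proposal is correct and follows essentially the same route as the paper's proof: correctness via induction on the binary search tree using the disjoint partition of each node's range by the branching variable $v$, with Theorem~\ref{thm:findfdset} justifying both the pruning and the leaf outputs, and the delay bound from depth at most $n$ times the $O(n^3(n+m))$ cost of each \textsc{FindFDSet} call. Your explicit ``every live node yields an output in its subtree'' lemma is exactly the fact the paper's terser runtime analysis relies on implicitly, so this is a refinement of presentation rather than a different argument.
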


\section{Discussion and Conclusions}
\label{section:limitation}
This work has some limitations and can be extended in several directions. 
First, Pearl's FD criterion is not complete with respect to the FD adjustment formula (Eq.~(\ref{eq:fd})).
While the BD criterion has been generalized to a complete criterion for BD adjustment \citep{shpitser:etal10}, it is an interesting open problem to come up with a complete criterion for sets satisfying the FD adjustment. 
Second, this work assumes that the causal diagram is given (or inferred based on scientists' domain knowledge and/or data).
Although this assumption is quite common throughout the causal inference literature, more recent work has moved to finding BD admissible sets given incomplete or partially specified causal diagrams, e.g., maximal ancestral graphs (MAGs) \citep{zander:etal14}, partial ancestral graphs (PAGs) \citep{perkovic:etal18}, and completed partially directed acyclic graphs (CPDAGs) \citep{perkovic:etal18}.
There are algorithms capable of performing causal effect identification in a data-driven fashion from an equivalence class \citep{jaber:etal18-id-markov-eq,jaber:19a,jaber:19b,jaber:19c}.
It is an interesting and certainly challenging future work to develop algorithms for finding FD admissible sets in these types of graphs.
Some recent work has proposed data-driven methods for finding and listing BD admissible sets, using an anchor variable, when the underlying causal diagram is unknown \cite{entner:etal13,cheng:22,shah:22}.
A criterion for testing FD-admissibility of a given set using data and an anchor variable is also available \cite{bhattacharya1:22}.
Other interesting future research topics include developing algorithms for finding minimal, minimum, and minimum cost FD adjustment sets, which are available for the BD adjustment sets \citep{zander:etal19}, as well as algorithms for finding conditional FD adjustment sets \cite{hun:bar2019,fulcher2019robust}.
Having said all of that, we believe that the results developed in this paper is a necessary step towards solving these more challenging problems.

After all, we started from the observation that 
identification is not restricted to BD adjustment, and 
Pearl's FD criterion provides a classic strategy for estimating causal effects from observational data and qualitative knowledge encoded in the form of a causal diagram. The criterion is drawing more attention in recent years and   statistically efficient and doubly robust
estimators have been developed for estimating the FD estimand from finite samples. 
In this paper, we develop algorithms that given a causal diagram $\G$, find an admissible FD set (Alg.~\ref{alg:findfdset} \textsc{FindFDSet}, Thm.~\ref{thm:findfdset}) and enumerate all admissible FD sets with polynomial delay (Alg.~\ref{alg:listfdsets} \textsc{ListFDSets}, Thm.~\ref{thm:listfdsets}). 
We hope that the methods and algorithms proposed in this work will help scientists to use the FD strategy for causal effects estimation in the practical applications and are useful for scientists in study design to select covariates based on desired properties, including cost, feasibility, and statistical power.  

\bibliography{references}

\newpage
\appendix
\onecolumn

\theoremstyle{plain}
\newtheorem{adxtheorem}{Theorem}
\newtheorem{adxlemma}{Lemma}
\newtheorem{adxprop}{Proposition}[section]
\newtheorem{adxcorollary}{Corollary}
\theoremstyle{definition}
\newtheorem{adxdefinition}{Definition}
\newtheorem{adxassumption}{Assumption}

\section{Appendix}

\begin{adxlemma}[Correctness of \textsc{GetCand2ndFDC}]
\label{adxlemma:getcand2ndfdc}
\textsc{GetCand2ndFDC}\((\G, \*X, \*I, \*R)\) generates a set of variables \(\*R'\) with \(\*I \subseteq \*R' \subseteq \*R\) such that \(\*R'\) consists of all and only variables \(v\) that satisfies the second condition of the FD criterion relative to \((\*X,\*Y)\).
Further, every subset \(\*Z \subseteq \*R'\) satisfies the second condition of the FD criterion relative to \((\*X,\*Y)\), and every set \(\*Z\) with \(\*I \subseteq \*Z \subseteq \*R\) that satisfies the second condition of the FD  criterion relative to \((\*X,\*Y)\) must be a subset of \(\*R'\).
\end{adxlemma}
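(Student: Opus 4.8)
The plan is to prove the four assertions in turn, after first isolating the single graphical fact they all depend on. That fact is the node-level correspondence: for any \(v\), the call \(\textsc{TestSep}(\G_{\underline{\*X}}, \*X, v, \emptyset)\) returns True iff there is no open back-door path from \(\*X\) to \(v\) in \(\G\). I would justify this by the standard observation that \(\G_{\underline{\*X}}\) deletes exactly the edges leaving \(\*X\), so that every path emanating from \(\*X\) in \(\G_{\underline{\*X}}\) begins with an arrowhead into \(\*X\)—i.e., is a back-door path—and conversely every back-door path survives intact; hence \(\emptyset\) \(d\)-separates \(\*X\) and \(v\) in \(\G_{\underline{\*X}}\) precisely when no back-door path from \(\*X\) to \(v\) is unblocked \citep{zander:etal14}. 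Reading the second FD condition at the single-node level as ``no open back-door path from \(\*X\) to \(v\),'' this identifies the algorithm's test with node-level satisfaction of the condition.

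Given this, the first two claims are direct traces of the pseudocode. For \(\*I \subseteq \*R' \subseteq \*R\): the variable \(\*R'\) starts as \(\*R\) and only shrinks, giving \(\*R' \subseteq \*R\), while an element is deleted solely in the \textbf{else} branch, which is entered only when \(v \notin \*I\); thus no element of \(\*I\) is ever removed and \(\*I \subseteq \*R'\) (the \(\perp\) output covers the case where some \(v \in \*I\) fails the test, and the set-valued conclusion is claimed only when a set is returned). For the exact characterization, a node \(v \in \*R\) remains in \(\*R'\) iff its test returned True, which by the correspondence above holds iff \(v\) has no open back-door path from \(\*X\); hence \(\*R'\) is exactly the set of \(v \in \*R\) satisfying the second condition.

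The last two claims lift the node-level condition to the set level, and this is where the one genuine step lies. I would invoke the composition and decomposition (graphoid) properties of \(d\)-separation to show that, in \(\G_{\underline{\*X}}\), \(\*X \perp \*Z \mid \emptyset\) holds iff \(\*X \perp \{v\} \mid \emptyset\) holds for every \(v \in \*Z\); equivalently, there is no open back-door path from \(\*X\) to \(\*Z\) iff there is none from \(\*X\) to any single \(v \in \*Z\). Decomposition gives one direction and composition the other. From this: any \(\*Z \subseteq \*R'\) has every member satisfying the node-level condition, so \(\*Z\) satisfies the second condition; and any \(\*Z\) with \(\*I \subseteq \*Z \subseteq \*R\) satisfying the second condition has every member node-level admissible, so each lies in \(\*R'\) and \(\*Z \subseteq \*R'\).

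The main obstacle is precisely the set-to-node reduction in the third paragraph: one must argue that the set formulation of the second condition factorizes over individual nodes of \(\*Z\). The reason this is clean—and worth stating explicitly—is that the conditioning set is empty, so no node of \(\*Z\) is being conditioned on and hence cannot act as an opened collider or a blocking non-collider; the graphoid axioms then apply without caveat and no spurious back-door path routed through a second node of \(\*Z\) can arise. Everything else follows mechanically from the algorithm together with the cited \(\G_{\underline{\*X}}\) characterization of back-door paths.
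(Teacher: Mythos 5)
Your proposal is correct and follows essentially the same route as the paper's proof: trace the pseudocode to get \(\*I \subseteq \*R' \subseteq \*R\) (with the \(\perp\) case for \(v \in \*I\)), establish the node-level test via \(\textsc{TestSep}(\G_{\underline{\*X}}, \*X, v, \emptyset)\), and then lift from nodes to sets. The only difference is cosmetic: where you invoke the decomposition and composition properties of \(d\)-separation, the paper simply observes that, since the conditioning set is empty, an unblocked BD path from \(\*X\) to \(\*Z\) is by definition an unblocked BD path from \(\*X\) to some single \(v \in \*Z\) and vice versa — the same fact your final paragraph makes explicit.
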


\begin{proof}
\textsc{GetCand2ndFDC} iterates through every node \(v \in \*R\).
For each \(v\), the function \(\textsc{TestSep}(\G_{\underline{\*X}}, \*X, v, \emptyset)\) is called in line~5 to check if \(\emptyset\) is a separator of \(\*X\) and \(v\) in \(\G_{\underline{\*X}}\), i.e., whether there exists an open BD path from \(\*X\) to \(v\) or not.
If \textsc{TestSep} returns True, then there is no open BD path from \(\*X\) to \(v\) and \(v\) satisfies the second condition of the FD criterion relative to \((\*X,\*Y)\).
In this case, \(v\) is kept in \(\*R'\).
Otherwise, if \textsc{TestSep} returns False, then there exists an open BD path from \(\*X\) to \(v\).
By definition, for every set \(\*Z\) that includes \(v\), there exists an open BD path from \(\*X\) to \(\*Z\).
\(\*Z\) violates the second condition of the FD criterion relative to \((\*X,\*Y)\), and thus \(v\) is removed from \(\*R'\).
A special case is when \(v \in \*I\).
\textsc{GetCand2ndFDC} returns \(\perp\) because \(\*R'\) will not include any subset \(\*Z\) with \(\*I \subseteq \*Z\) that satisfies the second condition of the FD criterion relative to \((\*X,\*Y)\).

At the end of the function, \textsc{GetCand2ndFDC} has generated a set \(\*R'\) that includes all and only variables \(v\) that satisfies the second condition of the FD criterion relative to \((\*X,\*Y)\).
By definition, there exists no BD path from \(\*X\) to \(\*Z\) if and only if there exists no BD path from every \(x \in \*X\) to every \(v \in \*Z\).
Hence, every subset \(\*Z \subseteq \*R'\) satisfies the second condition of the FD criterion relative to \((\*X,\*Y)\), and \(\*R'\) contains all and only sets \(\*Z\) with \(\*I \subseteq \*Z \subseteq \*R\) that satisfies the second condition of the FD criterion relative to \((\*X,\*Y)\).
\end{proof}

\begin{adxprop}[Complexity of \textsc{GetCand2ndFDC}]
\label{adxprop:getcand2ndfdc:complexity}
\textsc{GetCand2ndFDC} runs in \(O(n(n+m))\) time where \(n\) and \(m\) represent the number of nodes and edges in \(\G\).
\end{adxprop}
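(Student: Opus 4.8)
The plan is to bound the running time in the standard way, as (number of loop iterations) \(\times\) (cost per iteration), plus any one-time setup. First I would observe that the single \textbf{for} loop iterates over \(v \in \*R\), and since \(\*R\) is a set of vertices of \(\G\) we have \(|\*R| \le n\); hence the loop body executes at most \(O(n)\) times. The bookkeeping performed in each iteration — the initialization \(\*R' \gets \*R\) (done once), the membership test \(v \in \*I\), and the deletion \(\*R' \gets \*R' \setminus \{v\}\) — costs only \(O(1)\) per operation using a boolean-array or hash-set representation of the sets, and therefore contributes at most \(O(n)\) in total, which is dominated by the separation tests and can be ignored in the asymptotic bound.

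The core of the argument is the per-iteration cost, which is dominated by the call \(\textsc{TestSep}(\G_{\underline{\*X}}, \*X, v, \emptyset)\). I would handle this in two parts. First, the transformed graph \(\G_{\underline{\*X}}\), obtained from \(\G\) by deleting every edge outgoing from \(\*X\), does not depend on \(v\); it can be built once, before the loop, in \(O(n+m)\) time by a single pass over the vertex and edge lists. Second, \(\textsc{TestSep}(\G_{\underline{\*X}}, \*X, v, \emptyset)\) decides whether \(\emptyset\) is a separator of \(\*X\) and the single vertex \(v\) in \(\G_{\underline{\*X}}\), i.e.\ a single reachability query under the \(d\)-separation criterion from the fixed source set \(\*X\); as established by \citet{zander:etal14}, such a test runs in \(O(n+m)\) time.

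Combining these, each of the \(O(n)\) iterations costs \(O(n+m)\), and the one-time construction of \(\G_{\underline{\*X}}\) adds only an additive \(O(n+m)\) that is absorbed, giving a total of \(O(n(n+m))\) as claimed. The only mildly delicate point — and the step I would be most careful to state precisely — is the per-call cost of \(\textsc{TestSep}\): one must confirm that a single invocation with a fixed source \(\*X\) and a singleton target runs in linear time \(O(n+m)\), rather than incurring extra overhead from rebuilding auxiliary structures (e.g.\ the moralized ancestral graph) in a way that inflates the bound. Once the linear-time \(d\)-separation primitive of \citet{zander:etal14} is invoked as a black box, the remainder is the routine iteration-count-times-per-iteration-cost calculation above.
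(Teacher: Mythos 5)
Your proof is correct and matches the paper's argument: both bound the runtime as \(O(n)\) loop iterations over \(\*R\), each dominated by one \(O(n+m)\) call to \textsc{TestSep}, yielding \(O(n(n+m))\). Your additional care about building \(\G_{\underline{\*X}}\) once and the \(O(1)\) set bookkeeping is a harmless refinement the paper leaves implicit; the only trivial discrepancy is that the paper attributes the linear-time \textsc{TestSep} bound to \citet{zander:etal19} rather than \citet{zander:etal14}.
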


\begin{proof}
\textsc{GetCand2ndFDC} iterates through all variables in \(\*R\) of size \(O(n)\).
For each variable \(v \in \*R\), the function \textsc{TestSep} is called, which takes \(O(n+m)\) time \citep{zander:etal19}.
\end{proof}

\begin{adxprop}[Correctness of \textsc{GetNeighbors}]
\label{adxprop:getneighbors}
Let \(\G\) be an undirected graph and \(v\) a variable in \(\G\).
\textsc{GetNeighbors} correctly outputs all observed neighbors \(\*N\) of \(v\) in \(\G\).
\textsc{GetNeighbors} runs in \(O(n+m)\) time where \(n\) and \(m\) represent the number of nodes and edges in \(\G\).
\end{adxprop}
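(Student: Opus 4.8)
The plan is to read the behaviour of \textsc{GetNeighbors} off its pseudocode (Fig.~8) and then verify the two assertions—that the returned set is exactly the observed neighborhood of \(v\), and that the running time is linear—by separate arguments. The crucial feature to keep in mind is that the undirected graph \(\mathcal{M}\) supplied by \textsc{GetDep} is not an ordinary moral graph over the observed variables: it also contains the auxiliary latent vertices \(L_{AB}\) that were inserted to stand in for each bidirected edge \(A \leftrightarrow B\). Consequently an \emph{observed neighbor} of \(v\) is an observed variable \(w\) that is reachable from \(v\) by an edge of \(\mathcal{M}\) possibly passing through such latent vertices; the whole statement reduces to showing that \textsc{GetNeighbors} enumerates precisely these \(w\) and nothing else.

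For correctness I would argue soundness and completeness in turn. Soundness amounts to checking that every vertex the routine appends to \(\*N\) is an observed variable adjacent to \(v\) in the relevant sense, which follows by inspecting that the routine only ever emits observed vertices and only when a connecting edge (or latent by-pass) to \(v\) has been traversed. For completeness I would fix an arbitrary observed neighbor \(w\) and trace how the routine reaches it: a directly adjacent \(w\) is collected while scanning the adjacency list of \(v\); a \(w\) separated from \(v\) by a latent vertex \(L\) is collected because \(L\) is encountered in \(v\)'s adjacency list, recognized as latent, and its own adjacency list is then scanned. Since in \(\mathcal{M}\) these are the only ways an observed vertex can be adjacent to \(v\), the emitted set coincides with \(\*N\).

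For the complexity claim I would use a standard traversal-counting argument. \textsc{GetNeighbors} performs a bounded expansion rooted at \(v\): it traverses each edge incident to \(v\) a constant number of times, and for each latent vertex it visits it likewise traverses that vertex's incident edges a constant number of times. Because every edge of \(\mathcal{M}\) is charged only to its endpoints and no vertex is processed more than once, the total work is bounded by \(\sum_{w} \deg_{\mathcal{M}}(w) = O(m)\) over the visited vertices plus \(O(n)\) for initialization and membership bookkeeping, yielding the claimed \(O(n+m)\).

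The step I expect to be the main obstacle is fixing the exact notion of ``observed neighbor'' so that the case analysis above is exhaustive. The moralization step can, in principle, create edges between two latent vertices (when a single observed variable has two distinct latent parents \(L_{AB}\) and \(L_{AC}\)), so one must confirm whether \textsc{GetNeighbors} is meant to hop through a single latent vertex or through chains of them, and ensure the traversal neither misses an observed node hidden behind such a chain nor revisits a latent vertex and breaks the linear-time bound. Once that characterization is nailed down—and the invariant that each latent vertex is expanded at most once is established—both the soundness/completeness split and the degree-sum bound go through routinely.
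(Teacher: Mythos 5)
Your proposal is correct and follows essentially the same route as the paper: the paper likewise argues soundness and completeness of a BFS that first collects the observed adjacent nodes of \(v\) and then expands through latent nodes, and it obtains the \(O(n+m)\) bound by the same counting argument that every node and edge is visited at most once. The one obstacle you flag is resolved exactly as you anticipate---the pseudocode's while loop inserts the unvisited latent neighbors of each popped latent node back into the queue and marks them visited, so chains of latent vertices (which moralization can indeed create via co-parents such as \(L_{AB}\) and \(L_{AC}\)) are traversed, and the visited-marking gives precisely the ``each latent vertex expanded at most once'' invariant your degree-sum bound needs.
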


\begin{proof}
\textsc{GetNeighbors} computes \(\*N\), all adjacent nodes of \(v\) in \(\G\) that are observed.
Also, all latent adjacent nodes \(\*L\) of \(v\) need to be considered because there might exist some observed adjacent nodes \(\*O\) of \(\*L\) where \(\*O\) belongs to observed neighbors of \(v\).
If \(\*L\) is empty, then all adjacent nodes of \(v\) are observed, and thus \textsc{GetNeighbors} returns \(\*N\).
Otherwise, \textsc{GetNeighbors} performs BFS from \(\*L\), searching for all observed neighbors of \(\*L\).
The nodes \(v\), \(\*N\) and \(\*L\) are marked as visited to guarantee that the nodes will not be visited more than once.

When BFS is performed, one latent node \(u\) is popped from \(\*Q\) at a time.
Then, all observed adjacent nodes \(\*O\) of \(u\) (that have not been visited before) are computed and added to \(\*N\).
Further, there may exist some latent adjacent nodes \(\*L'\) of \(u\) that have not been visited, and then there may exist some observed neighbors of \(\*L'\) as well.
Hence, \(\*L'\) is inserted into \(\*Q\) and all nodes in \(\*L'\) is marked as visited.
The procedure continues until \(\*Q\) becomes empty.

At the end of while loop, \(\*N\) must include all and only observed neighbors of \(v\) in \(\G\) because all observed adjacent nodes of \(v\) are added to \(\*N\), and for all latent adjacent nodes \(\*L\) of \(v\), all observed neighbors of \(\*L\) are also added to \(\*N\).

\textsc{GetNeighbors} runs in \(O(n+m)\) time because, while performing BFS, every node and edge in \(\G\) will be visited at most once.
\end{proof}

\begin{figure}

\begin{algorithmic} [1]
\Function {GetNeighbors} {$v, \G$}

    
    \State {\bfseries Output:} \(\*N\) all observed neighbors of \(v\) in an undirected graph \(\G\).
    
    \State \(\*N \gets \) observed adjacent nodes of \(v\) in \(\G\), mark \(v\) and all \(w \in \*N\) as visited
    \State \(\*L \gets \) latent adjacent nodes of \(v\) in \(\G\), mark all \(w \in \*L\) as visited
    \State \(\*Q \gets \*L\)
    
    \While {\(\*Q \neq \emptyset\)}
        \State \(u \gets \*Q.\textsc{pop}()\)
        
        \State \(\*O \gets \) observed adjacent nodes of \(u\) in \(\G\) that have not been visited
        \State \(\*N \gets \*N \cup \*O\), mark all \(w \in \*O\) as visited
        \State \(\*L \gets \) latent adjacent nodes of \(u\) in \(\G\) that have not been visited
        \State \(\*Q.\textsc{insert}(\*L)\), mark all \(w \in \*L\) as visited
    \EndWhile
    
    \State \textbf{return} \(\*N\)
\EndFunction
\end{algorithmic}
\caption{A function that outputs all observed neighbors of a given variable.}
\label{func:getneighbors}
\end{figure}

\begin{adxprop}[Correctness of \textsc{GetDep}]
\label{adxprop:getdep}
Let \(\G\) be a causal graph, \(\*X,\*Y,\*R'\) disjoint sets of variables, and \(\*T\) a set of variables where \(\*T \subseteq \*R'\).
If there exists a set of variables \(\*Z' \subseteq \*R' \setminus \*T\) such that \(\*T \cup \*Z'\) satisfies the third condition of the FD criterion relative to \((\*X,\*Y)\), \textsc{GetDep} outputs \(\*Z'\), or outputs \(\perp\) if none exists, in \(O(n^2 (n+m))\) time where \(n\) and \(m\) represent the number of nodes and edges in \(\G\).
\end{adxprop}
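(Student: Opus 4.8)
The plan is to establish the proposition in three parts --- soundness, completeness, and the running-time bound --- with both correctness arguments resting on a single loop invariant that ties the moralized graph \(\M\) maintained by \textsc{GetDep} to the back-door paths of interest. Writing \(\*Z = \*T \cup \*Z'\) for the set accumulated so far, the graph \(\G'_{\underline{\*T \cup \*Z' \cup \*{NR}}}\) deletes exactly the outgoing edges of the nodes currently destined for \(\*Z\), and \(\M\) is its moralization with \(\*X\) removed. By the moralization--cut property cited in the text (over \(\An{\*T \cup \*X \cup \*Y}\)), a path from a node of \(\*Z\) to \(\*Y\) exists in \(\M\) if and only if there is a back-door path from that node to \(\*Y\) in \(\G_{\underline{\*Z}}\) that \(\*X\) cannot block. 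The invariant I would carry through the loop is: every node marked visited lies on some such unblockable path emanating from \(\*T\), and conversely every unblockable back-door path from a current source in \(\*Z\) to \(\*Y\) has its next unvisited node enqueued. The two frontier contributions are precisely what keep this invariant true: \(\*N'\) captures paths that survive once the outgoing edges of the freshly added candidates \(\*{NR}\) are deleted, while \(\*{NR}'\) captures the genuinely new back-door paths that open up because an added node \(w \in \*{NR}\) has an incoming arrow in \(\G\) and therefore becomes a fresh back-door source as soon as it joins \(\*Z\).

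Given the invariant, soundness is immediate. If the loop terminates with the queue empty and no \(y \in \*Y\) was ever popped, then in the final \(\M\) --- equivalently, in \(\G_{\underline{\*T \cup \*Z'}}\) --- there is no unblockable back-door path to \(\*Y\) from any source in \(\*T \cup \*Z'\); here I would stress that all sources are covered, since the members of \(\*T\) seed the search and every added \(\*Z'\)-node carrying an incoming arrow re-enters the frontier through \(\*{NR}'\). Hence \(\*X\) separates \(\*T \cup \*Z'\) from \(\*Y\) in \(\G_{\underline{\*T \cup \*Z'}}\), so the returned \(\*Z'\) makes \(\*T \cup \*Z'\) satisfy the third condition.

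For completeness I would argue the contrapositive: if some \(y \in \*Y\) is popped, then no valid \(\*Z'\) exists. Tracing the BFS predecessors from \(y\) back to \(\*T\) exhibits a back-door path \(\pi\) to \(\*Y\) that \(\*X\) cannot block. Because at each visited node \(u\) the procedure adds \emph{all} eligible neighbors \(\*{NR} = \textsc{GetNeighbors}(u,\M) \cap \*R'\) to \(\*Z'\) and rebuilds \(\M\) from \(\G'_{\underline{\*T \cup \*Z' \cup \*{NR}}}\), the path \(\pi\) persists even after the outgoing edges of \emph{every} \(\*R'\)-node it meets have been deleted. Since any candidate \(\*Z'' \subseteq \*R' \setminus \*T\) can delete only a subset of exactly those outgoing edges, the path \(\pi\) (or the back-door path it induces from \(\*T \cup \*Z''\)) survives in \(\G_{\underline{\*T \cup \*Z''}}\) and remains unblocked by \(\*X\); thus no \(\*Z''\) satisfies the third condition and returning \(\perp\) is correct. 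I expect this completeness step to be the main obstacle: the delicate point is the monotonicity claim that adding candidate nodes to \(\*Z'\) can only help disconnect paths once the new back-door paths opened by an added node are fully charged to \(\*{NR}'\). The argument must show that the maximal greedy construction explores every blocking strategy, so that failure of the maximal set (the BFS reaching \(\*Y\)) genuinely certifies failure of \emph{all} admissible sets, rather than merely the failure of one particular choice.

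Finally, for the complexity bound I would note that each node of \(\G\) is marked visited at most once, so the main loop executes \(O(n)\) times. Each iteration rebuilds the subgraph \(\G''\) in \(O(n+m)\) time, recomputes \(\M\) by moralization, and calls \textsc{GetNeighbors}; since moralization together with \textsc{GetNeighbors} runs in \(O(n(n+m))\) time (the marrying step adds \(O(n^2)\) edges in the worst case and the neighbor search traverses the resulting graph), each iteration costs \(O(n(n+m))\). Multiplying by the \(O(n)\) iterations gives \(O(n^2(n+m))\), which dominates the one-time preprocessing and yields the stated bound.
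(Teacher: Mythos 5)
Your proposal is correct and takes essentially the same route as the paper's own proof: the same moralization--cut reduction (cutting the outgoing edges of the current candidate set before moralizing and deleting \(\*X\) afterwards, so that paths in \(\mathcal{M}\) are exactly the BD paths unblockable by \(\*X\)), the same frontier decomposition into \(\*N'\) and \(\*{NR}'\), the same greedy-maximality argument for completeness (if \(\*Y\) is reached even though the outgoing edges of every \(\*R'\)-node encountered along the way were removed, then no subset of such deletions can disconnect the witnessing path, so returning \(\perp\) is justified), and the same complexity accounting with moralization dominating the per-visit cost to give \(O(n^2(n+m))\). Your explicit loop-invariant framing is merely a tidier packaging of the argument the paper gives informally.
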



\begin{proof}

\textsc{GetDep} constructs the graph \(\G'\) by starting from the subgraph over \(\An{\*T \cup \*X \cup \*Y}\), and then converting all bidirected edges \(A \leftrightarrow B\) into a single latent node \(L_{AB}\) and two edges \(L_{AB} \rightarrow A\) and \(L_{AB} \rightarrow B\).
All outgoing edges of \(\*T\) are removed from \(\G'\) to create \(\G''\), which is then moralized to construct an undirected graph \(\mathcal{M}\).
After, \(\*X\) is removed from \(\mathcal{M}\).
The construction of \(\mathcal{M}\) is based on the property that \(\*T\) and \(\*Y\) are $d$-separated by \(\*X\) in \(\G\) if and only if \(\*X\) is a \(\*T\) - \(\*Y\) node cut (i.e., removing \(\*X\) disconnects \(\*T\) from \(\*Y\)) in \(\G_0 = \textsc{moralize}(\G_{ \An{\*T \cup \*X \cup \*Y} })\) \citep{lauritzen:96}.
The two tweaks: 1) removing all outgoing edges of \(\*T\) from \(\G'\) before moralization, and 2) removing \(\*X\) from \(\mathcal{M}\) after moralization are added to ensure that all paths from \(\*T\) to \(\*Y\) in \(\mathcal{M}\) are the BD paths from \(\*T\) to \(\*Y\) that cannot be blocked by \(\*X\) in \(\G\).

\textsc{GetDep} performs BFS from \(\*T\) to \(\*Y\) in \(\mathcal{M}\).
Whenever a node \(u\) is visited, \textsc{GetDep} obtains the non-visited, observed neighbors \(\*{NR}\) of \(u\) in \(\mathcal{M}\) that belong to \(\*R'\).
All observed neighbors of \(u\) in \(\mathcal{M}\) are obtained by calling the function \textsc{GetNeighbors}(\(u, \mathcal{M}\)) (by Prop.~A.2).
Then, \(\mathcal{M}\) is reconstructed by moralizing the graph \(\G'' = \G'_{\underline{\*T \cup \*Z' \cup \*{NR}}}\) that removes all outgoing edges of \(\*T \cup \*Z' \cup \*{NR}\) from \(\G'\), and then removing \(\*X\) from \(\mathcal{M}\) after.
All outgoing edges of \(\*{NR}\) are removed (in addition to those of \(\*T \cup \*Z'\)) to check if removing all outgoing edges of \(\*{NR}\) contributes to disconnecting BD paths from \(\*T\) to \(\*Y\) that cannot be blocked by \(\*X\) in \(\G\).
In other words, \(\*{NR}\) may belong to \(\*Z'\) such that \(\*Z\) satisfies the third condition of the FD criterion relative to \((\*X,\*Y)\).
Hence, \(\*{NR}\) is added to \(\*Z'\).

However, there might exist some BD path \(\pi\) from \(w \in \*{NR}\) to \(y \in \*Y\) that cannot be blocked by \(\*X\) in \(\G\).
If \(\pi\) cannot be disconnected from \(w\) to \(y\), then \(\*Z\) will violate the third condition of the FD criterion relative to \((\*X,\*Y)\).
We need to check if there exists such \(\pi\).
\textsc{GetDep} constructs a set \(\*{NR'}\), a set of all variables \(w\) in \(\*{NR}\) such that there exists an incoming arrow into \(w\) in \(\G\).
Also, there might exist some observed neighbors \(\*N'\) of \(u\) in \(\mathcal{M}\) that are still reachable from \(u\), even after removing all outgoing edges of \(\*T \cup \*Z' \cup \*{NR}\) (which is reflected by the construction of \(\mathcal{M}\)).
Hence, the union \(\*N\) of two sets, \(\*N'\) and \(\*{NR'}\), are inserted into \(\*Q\) to check if any node in \(\*N\) is reachable to \(\*Y\).

The BFS continues until either a node \(y \in \*Y\) is visited, or no more node can be visited.
We explain further by each case.

\begin{enumerate}
    \item A node \(y \in \*Y\) is visited.
    There exists no set \(\*Z'\) such that \(\*Z = \*T \cup \*Z'\) satisfies the third condition of the FD criterion relative to \((\*X,\*Y)\).
    Let \(\pi\) be a BD path from \(t \in \*T\) to \(y\) in \(\mathcal{M}\) where all nodes in \(\pi\) are visited by performing BFS from \(t\) to \(y\).
    Since all nodes in \(\pi\) are visited, for all variables \(w \in \*R'\) that intersect \(\pi\), all outgoing edges of \(w\) must have been removed in \(\G''\) and \(\mathcal{M}\) was constructed based on \(\G''\).
    However, \(y\) was still reached, which implies that removing all outgoing edges of \(w\) did not disconnect \(\pi\) from \(t\) to \(y\).
    Removing all outgoing edges of \(\*R'\) will not disconnect \(\pi\) from \(t\) to \(y\) either.
    Thus, there exists no set \(\*Z'\) such that all BD paths from \(\*Z\) to \(\*Y\) are blocked by \(\*X\) in \(\G\).
    \textsc{GetDep} returns \(\perp\).
    
    \item No more node is left to be visited.
    All BD paths from \(\*T\) to \(\*Y\) that cannot be blocked by \(\*X\) in \(\G\) have been disconnected by removing all outgoing edges of \(\*Z\) while ensuring that there exists no BD path from \(\*Z\) to \(\*Y\) that cannot be blocked by \(\*X\) in \(\G\).
    All BD paths from \(\*Z\) to \(\*Y\) are blocked by \(\*X\), and thus \(\*Z\) satisfies the third condition of the FD criterion relative to \((\*X,\*Y)\).
    \textsc{GetDep} returns the set \(\*Z'\).
\end{enumerate}

For the time complexity, \textsc{moralize} runs in \(O(n^2)\) time.
\textsc{moralize} checks over every pair of nodes (of size \(O(n^2)\)) and adds an undirected edge between each non-adjacent pair if both nodes share a common child.
Then, \textsc{moralize} converts all directed edges into undirected edges where the number of edges may be of \(O(n^2)\) in the worst case scenario.
The BFS takes \(O(n^2 (n+m))\) time in total since all nodes and edges may be visited at most once (i.e., \(O(n+m)\) entities) where visiting a single node takes \(O(n^2)\) time where the dominating factor is the runtime of \textsc{moralize}.
By Prop.~A.2, \textsc{GetNeighbors} runs in \(O(n+m)\) time.
Hence, \textsc{GetDep} runs in \(O(n^2 (n+m))\) time.

\end{proof}

\begin{adxlemma}[Correctness of \textsc{GetCand3rdFDC}]
\label{adxlemma:getcand3rdfdc}
\textsc{GetCand3rdFDC}\((\G, \*X, \*Y, \*I, \*R')\) in Step 2 of Alg.~1 generates a set of variables \(\*R''\) where \(\*I \subseteq \*R'' \subseteq \*R'\).
\(\*R''\) consists of all and only variables \(v\) such that there exists a subset \(\*Z\) with \(\*I \subseteq \*Z \subseteq \*R'\) and \(v \in \*Z\) that satisfies the third condition of the FD criterion relative to \((\*X,\*Y)\).
Further, every set \(\*Z\) with \(\*I \subseteq \*Z \subseteq \*R\) that satisfies both the second and the third conditions of the FD criterion must be a subset of \(\*R''\).
\end{adxlemma}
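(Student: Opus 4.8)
The plan is to reduce the three assertions to two facts already in hand — the correctness of \textsc{GetDep} (Prop.~\ref{adxprop:getdep}) and the correctness of \textsc{GetCand2ndFDC} (Lemma~\ref{adxlemma:getcand2ndfdc}) — together with one new structural property of the third FD condition that I would establish first. The crucial observation about the code is that \textsc{GetCand3rdFDC} calls \textsc{GetDep}$(\G,\*X,\*Y,\{v\},\*R')$ with the \emph{fixed} set $\*R'$ for every $v\in\*R'$, so the test applied to each variable is independent of iteration order. By Prop.~\ref{adxprop:getdep}, $\textsc{GetDep}(\G,\*X,\*Y,\{v\},\*R')\neq\perp$ holds exactly when there is a set $\*Z$ with $v\in\*Z\subseteq\*R'$ satisfying the third condition relative to $(\*X,\*Y)$, and a variable is retained in $\*R''$ precisely in this case. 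Consequently $\*R''\subseteq\*R'$ is immediate, and $\*I\subseteq\*R''$ holds because an element of $\*I$ is never deleted (a failed test for $v\in\*I$ aborts with $\perp$ rather than removing $v$), so whenever the function does return a set, the containment $\*I\subseteq\*R''\subseteq\*R'$ is in force.

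The key lemma I would prove is that the third FD condition is closed under union: if $\*Z_1$ and $\*Z_2$ each satisfy it, then so does $\*Z_1\cup\*Z_2$. Here I use the characterization that $\*Z$ satisfies the third condition iff $\*X$ separates $\*Z$ from $\*Y$ in $\G_{\underline{\*Z}}$, a graph in which every $\*Z$-to-$\*Y$ path is automatically a back-door path. Take such a path $\pi$ from some $z\in\*Z_1\cup\*Z_2$, say $z\in\*Z_1$, to $\*Y$ in $\G_{\underline{\*Z_1\cup\*Z_2}}$. Since $\G_{\underline{\*Z_1\cup\*Z_2}}$ is obtained from $\G_{\underline{\*Z_1}}$ by deleting further edges, the same $\pi$ is a back-door path from $z$ to $\*Y$ in $\G_{\underline{\*Z_1}}$, and the third condition for $\*Z_1$ supplies a node $W$ on $\pi$ that blocks it under $\*X$. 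The collider/non-collider status of $W$ is determined by $\pi$ alone, and descendant sets can only shrink when passing to the sparser graph $\G_{\underline{\*Z_1\cup\*Z_2}}$, so the same $W$ still blocks $\pi$ there. As every such $\pi$ is blocked, $\*Z_1\cup\*Z_2$ satisfies the third condition.

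With union-closure in place, the ``all and only'' claim and the final assertion both follow. By the $\perp$-test above, $v\in\*R''$ iff some $\*Z$ with $v\in\*Z\subseteq\*R'$ satisfies the third condition; the ``all'' direction of the claim is then immediate, since any $\*Z$ with $\*I\subseteq\*Z\subseteq\*R'$ and $v\in\*Z$ satisfying the third condition is in particular such a witness. For the ``only'' direction I upgrade the witness to one containing $\*I$: because the function returned a set, each $w\in\*I$ has its own witness $\*Z_w$ with $w\in\*Z_w\subseteq\*R'$, and by union-closure the set $\*Z^\ast$ formed from the witness $\*Z_v$ for $v$ together with all the $\*Z_w$ satisfies $\*I\cup\{v\}\subseteq\*Z^\ast\subseteq\*R'$ and still meets the third condition. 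Finally, for the last assertion take any $\*Z$ with $\*I\subseteq\*Z\subseteq\*R$ satisfying both the second and third conditions; the second condition and Lemma~\ref{adxlemma:getcand2ndfdc} give $\*Z\subseteq\*R'$, and then for each $v\in\*Z$ the set $\*Z$ itself witnesses $v\in\*Z\subseteq\*R'$ with the third condition, so $\textsc{GetDep}(\G,\*X,\*Y,\{v\},\*R')\neq\perp$ and $v$ survives into $\*R''$; hence $\*Z\subseteq\*R''$.

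I expect the union-closure property to be the main obstacle, as it is the only genuinely graph-theoretic step and is exactly what bridges the gap between what \textsc{GetDep} tests per variable (a witness merely containing $v$) and the statement's requirement of a witness containing all of $\*I$. The delicate point there is verifying that blocking under $\*X$ is preserved when passing from $\G_{\underline{\*Z_1}}$ to the sparser $\G_{\underline{\*Z_1\cup\*Z_2}}$, which rests on the monotone shrinking of descendant sets under edge deletion and on the fact that a node's collider status along $\pi$ is intrinsic to $\pi$; everything else is bookkeeping against the pseudocode and the two cited results.
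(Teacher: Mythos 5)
Your proof is correct, and its skeleton matches the paper's: both reduce the lemma to the correctness of \textsc{GetDep} (the per-variable test ``\(\textsc{GetDep}(\G,\*X,\*Y,\{v\},\*R')\neq\perp\) iff some \(\*Z\) with \(v\in\*Z\subseteq\*R'\) satisfies the third condition'') together with the earlier lemma on \textsc{GetCand2ndFDC} to restrict attention to \(\*R'\), and your argument for the final assertion (each \(v\in\*Z\) is its own witness, so \(\*Z\subseteq\*R''\)) is exactly the paper's second part. Where you genuinely depart is the explicit union-closure lemma for the third FD condition. The paper's proof passes directly from ``\(v\) survives iff it has a witness \(\*Z\) with \(v\in\*Z\subseteq\*R'\)'' to the statement's stronger form with \(\*I\subseteq\*Z\), without spelling out why a witness containing only \(v\) can be upgraded to one containing \(\*I\cup\{v\}\) (a gap that is vacuous when \(\*I=\emptyset\) but real otherwise); you close it by taking the union of the witness for \(v\) with the witnesses for each \(w\in\*I\), which exist exactly because the function did not return \(\perp\). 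The graph-theoretic content of your union-closure --- a path in \(\G_{\underline{\*Z_1\cup\*Z_2}}\) is a path in \(\G_{\underline{\*Z_1}}\), collider status is intrinsic to the path, and descendant sets only shrink under edge deletion --- is the same monotonicity reasoning the paper deploys, but only later and for a different purpose, in the proof that \(\*R''\) itself satisfies the third condition (``removing more outgoing edges will not re-connect the BD paths that have already been disconnected''). Your route buys a fully rigorous version of the ``all and only'' claim and, as a by-product, yields that separate lemma for free (since \(\*R''\) is precisely the union of all witnesses, union-closure gives immediately that \(\*R''\) satisfies the third condition); the paper's version is terser but leaves the \(\*I\)-upgrade step implicit.
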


\begin{proof}
The proof consists of two parts.

\begin{enumerate}
    \item \(\*R''\) consists of all and only variables \(v\) such that there exists a subset \(\*Z\) with \(\*I \subseteq \*Z \subseteq \*R'\) and \(v \in \*Z\) that satisfies the third condition of the FD criterion relative to \((\*X,\*Y)\).
    
    \textsc{GetCand3rdFDC} iterates through all variables \(v\) in \(\*R'\).
    By Lemma~1, every set \(\*Z\) with \(\*I \subseteq \*Z \subseteq \*R\) that satisfies the second condition of the FD criterion relative to \((\*X,\*Y)\) must be a subset of \(\*R'\).
    For each \(v\), if \textsc{GetDep} returns \(\perp\), then for every set \(\*Z\) with \(\*Z \subseteq \*R'\) and \(v \in \*Z\), \(\*Z\) violates the third condition of the FD criterion relative to \((\*X,\*Y)\) (by Prop.~A.3).
    Hence, \(v\) is removed from \(\*R''\).
    All such \(v\)'s (i.e., \(v\) such that \textsc{GetDep} had returned \(\perp\)) will be removed from \(\*R''\).
    If \(v \in \*I\), then \textsc{GetCand3rdFDC} returns \(\perp\) as no \(\*Z\) with \(\*I \subseteq \*Z \subseteq \*R'\) and \(v \in \*Z\) will satisfy the third condition of the FD criterion relative to \((\*X,\*Y)\).
    At the end of for loop, we have that \(\*R''\) consists all and only variables \(v\) such that there exists a subset \(\*Z\) with \(\*I \subseteq \*Z \subseteq \*R'\) and \(v \in \*Z\) that satisfies the third condition of the FD criterion relative to \((\*X,\*Y)\).

    \item Every set \(\*Z\) with \(\*I \subseteq \*Z \subseteq \*R\) that satisfies both the second and the third conditions of the FD criterion relative to \((\*X,\*Y)\) must be a subset of \(\*R''\).
    
    By Lemma~1, every set \(\*Z\) with \(\*I \subseteq \*Z \subseteq \*R\) that satisfies the second condition of the FD criterion relative to \((\*X,\*Y)\) must be a subset of \(\*R'\).
    We restrict the scope of \(\*Z\) into \(\*I \subseteq \*Z \subseteq \*R'\) and show that every \(\*Z\) that satisfies the third condition of the FD criterion relative to \((\*X,\*Y)\) must be a subset of \(\*R''\).
    
    When \textsc{GetCand3rdFDC} iterates through all variables in \(\*R'\), every \(u \in \*Z\) must have been checked since \(\*Z \subseteq \*R'\).
    For each \(u \in \*Z\), \textsc{GetDep} must have returned a set of variables since there exists a subset \(\*Z' = \*Z \setminus \{u\} \subseteq \*R' \setminus \{u\}\) such that \(\*Z\) satisfies the third condition of the FD criterion relative to \((\*X,\*Y)\) (by Prop~A.3).
    \textsc{GetCand3rdFDC} removes all and only variables \(v\) from \(\*R'\) such that there exists no set \(\*Z'\) with \(\*I \subseteq \*Z' \subseteq \*R'\) and \(v \in \*Z'\) that satisfies the third condition of the FD criterion relative to \((\*X,\*Y)\).
    If \(\*Z\) includes any such \(v\), then it is a contradiction as \(\*Z\) will violate the third condition of the FD criterion relative to \((\*X,\*Y)\).
    Hence, \(\*Z\) must be a subset of \(\*R''\).
    
    
    
\end{enumerate}
\end{proof}

\begin{adxprop}[Complexity of \textsc{GetCand3rdFDC}]
\label{adxprop:getcand3rdfdc:complexity}
\textsc{GetCand3rdFDC} runs in \(O(n^3 (n+m))\) time where \(n\) and \(m\) represent the number of nodes and edges in \(\G\).
\end{adxprop}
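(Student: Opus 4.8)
The plan is to establish the running time by a direct accounting of the outer loop in \textsc{GetCand3rdFDC} together with the per-iteration cost, invoking the complexity bound for \textsc{GetDep} already established in Prop.~\ref{adxprop:getdep}. Since the algorithm's only nontrivial work happens inside a single \textbf{for all} loop over \(\*R'\), the argument reduces to bounding the number of iterations and the dominant cost of each.

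First I would observe that the loop iterates over every variable \(v \in \*R'\). Because \(\*R' \subseteq \*R \subseteq \*V\), we have \(|\*R'| = O(n)\), so the loop executes \(O(n)\) iterations. Second, I would identify the dominant cost within a single iteration: the call \(\textsc{GetDep}(\G, \*X, \*Y, \{v\}, \*R')\) in line~\ref{func:getcand3rdfdc:getdep}. By Prop.~\ref{adxprop:getdep}, each such call runs in \(O(n^2(n+m))\) time. The only other operations per iteration are the membership test \(v \in \*I\) and the set removal \(\*R'' \gets \*R'' \setminus \{v\}\), both of which cost at most \(O(n)\) and are therefore dominated by the \textsc{GetDep} call.

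Multiplying the number of iterations by the per-iteration cost then yields \(O(n) \cdot O(n^2(n+m)) = O(n^3(n+m))\), which is the claimed bound; the initialization \(\*R'' \gets \*R'\) and the final \textbf{return} contribute only lower-order terms. I do not expect any genuine obstacle here, as the result follows by routine composition once Prop.~\ref{adxprop:getdep} is in hand. The one point deserving care is verifying that the bookkeeping operations (set membership and set difference) are genuinely dominated by the \textsc{GetDep} cost so that the product bound is tight; this is immediate since \(O(n)\) is subsumed by \(O(n^2(n+m))\).
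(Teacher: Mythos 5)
Your proposal is correct and follows essentially the same route as the paper's proof: bound the \textbf{for all} loop over \(\*R'\) by \(O(n)\) iterations, charge each iteration the \(O(n^2(n+m))\) cost of \textsc{GetDep} from Prop.~\ref{adxprop:getdep}, and multiply. Your additional remark that the set-membership and set-difference bookkeeping is dominated by the \textsc{GetDep} call is a minor refinement the paper leaves implicit, but the argument is the same.
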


\begin{proof}
\textsc{GetCand3rdFDC} iterates through all variables \(v\) in \(\*R'\) of size \(O(n)\).
The function \textsc{GetDep} will be called once per loop.
By Prop.~A.3, \textsc{GetDep} runs in \(O(n^2 (n+m))\) time.
In total, the running time of \textsc{GetCand3rdFDC} is \(O(n^3 (n+m))\).
\end{proof}

\begin{adxlemma}
\label{adxlemma:findfdset:r}
\(\*R''\) generated by \textsc{GetCand3rdFDC} (in Step 2 of Alg.~1) satisfies the third condition of the FD criterion, that is, all BD paths from \(\*R''\) to \(\*Y\) are blocked by \(\*X\).
\end{adxlemma}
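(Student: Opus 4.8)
The plan is to avoid any search over subsets of $\*R''$ and instead combine two ingredients: the characterization of $\*R''$ already proved in Lemma~\ref{adxlemma:getcand3rdfdc} (Correctness of \textsc{GetCand3rdFDC}), and a general \emph{union-closure} property of the third condition, namely that if two sets each have all their back-door paths to $\*Y$ blocked by $\*X$, then so does their union. The claim then follows immediately because $\*R''$ turns out to be exactly a union of such sets.

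First I would argue that $\*R''$ is a union of third-condition-satisfying sets. By Lemma~\ref{adxlemma:getcand3rdfdc}, every $v \in \*R''$ admits a witness $\*Z_v$ with $\*I \subseteq \*Z_v \subseteq \*R'$ and $v \in \*Z_v$ that satisfies the third condition. Moreover, any other node $w \in \*Z_v$ is itself witnessed by the very same $\*Z_v$, so by the same characterization $w \in \*R''$; hence $\*Z_v \subseteq \*R''$. Taking the union over all $v \in \*R''$ yields $\*R'' = \bigcup_{v \in \*R''} \*Z_v$, since each $\*Z_v \subseteq \*R''$ and contains its own $v$.

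Next I would prove the union lemma: if $\*X$ separates $\*Z_1$ from $\*Y$ in $\G_{\underline{\*Z_1}}$ and separates $\*Z_2$ from $\*Y$ in $\G_{\underline{\*Z_2}}$, then $\*X$ separates $\*Z_1 \cup \*Z_2$ from $\*Y$ in $\G_{\underline{\*Z_1 \cup \*Z_2}}$. Arguing by contradiction, suppose there is an $\*X$-open path $\pi$ from some $z \in \*Z_1 \cup \*Z_2$ to some $y \in \*Y$ in $\G_{\underline{\*Z_1 \cup \*Z_2}}$; say $z \in \*Z_1$. The graph $\G_{\underline{\*Z_1}}$ is obtained from $\G_{\underline{\*Z_1 \cup \*Z_2}}$ by reinstating the outgoing edges of $\*Z_2 \setminus \*Z_1$, so it is an edge-superset. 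I would then verify that $\pi$ survives as an $\*X$-open back-door path from $z$ to $y$ in $\G_{\underline{\*Z_1}}$: all its edges are present; the collider/non-collider status at each node of $\pi$ depends only on $\pi$'s own edges and is therefore unchanged; a non-collider's membership in $\*X$ is graph-independent; and every collider on $\pi$ that is activated (has a descendant, including itself, in $\*X$) stays activated because adding edges can only enlarge descendant sets $\De{\cdot}$. Since $z \in \*Z_1$ still has its outgoing edges removed, $\pi$ still begins with an edge into $z$ and is thus a back-door path from $\*Z_1$. This open path from $\*Z_1$ to $\*Y$ in $\G_{\underline{\*Z_1}}$ contradicts the third condition for $\*Z_1$. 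A straightforward induction extends the lemma to any finite union, and applying it to $\*R'' = \bigcup_v \*Z_v$ completes the proof.

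The step I expect to be the main obstacle is the monotonicity argument at the heart of the union lemma: establishing that an $\*X$-open path in the sparser graph $\G_{\underline{\*Z_1 \cup \*Z_2}}$ remains $\*X$-open when edges are reinstated to form $\G_{\underline{\*Z_1}}$. The care lies in tracking how the two defining features of an open path behave under edge addition --- non-collider conditions are insensitive to the extra edges, while collider activation is monotone because descendant sets only grow --- so that open-ness is preserved in precisely the direction we need. Everything else is bookkeeping against the characterization in Lemma~\ref{adxlemma:getcand3rdfdc}, and it also explains the Remark: shrinking $\*R''$ adds outgoing edges back and can reopen a previously blocked back-door path, which is why arbitrary subsets need not satisfy the third condition even though $\*R''$ itself does.
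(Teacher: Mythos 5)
Your proposal is correct and takes essentially the same route as the paper: the paper also fixes, for each \(v \in \*R''\), a witness set \(\*Z = \{v\} \cup \*Z'\) satisfying the third condition, notes \(\*Z \subseteq \*R''\) via Lemma~\ref{adxlemma:getcand3rdfdc} (your first step), and then invokes monotonicity --- removing the additional outgoing edges of \(\*R'' \setminus \*Z\) cannot re-connect BD paths already blocked by \(\*X\) in \(\G_{\underline{\*Z}}\) --- which is exactly your union-closure lemma in contrapositive form. Your explicit bookkeeping (non-collider status depending only on the path's own edges, collider activation monotone since \(\De{\cdot}\) only grows under edge reinstatement) merely spells out what the paper's phrase ``will not re-connect'' leaves implicit.
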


\begin{proof}
By Lemma~2, for every variable \(v \in \*R''\), there exists a subset \(\*Z' \subseteq \*R' \setminus \{v\}\) such that \(\*Z = \{v\} \cup \*Z'\) satisfies the third condition of the FD criterion relative to \((\*X,\*Y)\).
In other words, there is no BD path from \(\*Z\) to \(\*Y\) that cannot be blocked by \(\*X\) in \(\G\).
All BD paths from \(v\) to \(\*Y\) that cannot be blocked by \(\*X\) are disconnected in \(\G_{\underline{\*Z}}\) by removing all outgoing edges of \(v\) and \(\*Z'\) in \(\G\).
Consider the graph \(\G_{\underline{\*R''}}\) where all outgoing edges of \(\*Z\) as well as those of \(\*R'' \setminus \*Z\) are removed (\(\*Z \subseteq \*R''\) holds by Lemma~2).
Removing more outgoing edges (i.e., in \(\G_{\underline{\*R''}}\)) will not re-connect the BD paths that have already been disconnected in \(\G_{\underline{\*Z}}\).
Hence, all BD paths from \(v\) to \(\*Y\) that cannot be blocked by \(\*X\) will be disconnected in \(\G_{\underline{\*R''}}\).
Then, for every variable \(v \in \*R''\), all BD paths from \(v\) to \(\*Y\) that cannot be blocked by \(\*X\) will be disconnected in \(\G_{\underline{\*R''}}\).
All BD paths from \(\*R''\) to \(\*Y\) that cannot be blocked by \(\*X\) are disconnected in \(\G_{\underline{\*R''}}\).
All BD paths from \(\*R''\) to \(\*Y\) are blocked by \(\*X\) and thus \(\*R''\) satisfies the third condition of the FD criterion relative to \((\*X,\*Y)\).

\end{proof}

\begin{figure}

\begin{algorithmic} [1]
\Function {GetCausalPathGraph} {$\G, \*X, \*Y$}

    \State {\bfseries Output:} \(\G'\) a causal path graph relative to \((\G, \*X, \*Y)\).
    
    \State \(\G'' \gets \G_{\*X \cup \*Y \cup PCP(\*X,\*Y)}\)
    \State \(\G' \gets \G''_{\overline{\*X}\underline{\*Y}}\)
    \State Remove all bidirected edges from \(\G'\)
    \State \textbf{return} \(\G'\)
\EndFunction
\end{algorithmic}
\caption{A function that constructs a causal path graph.} \label{func:getcausalpathgraph}

\end{figure}

\begin{adxprop}
\label{adxprop:cpg}
Let \(\G\) be a causal graph and \(\*X,\*Y\) disjoint sets of variables.
\textsc{GetCausalPathGraph} constructs a causal path graph \(\G'\) relative to \((\G,\*X,\*Y)\) in \(O(n+m)\) time where \(n\) and \(m\) represent the number of nodes and edges in \(\G\).
\end{adxprop}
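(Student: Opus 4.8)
The plan is to split the argument into a correctness part and a complexity part, with essentially all of the substance in the latter, since \textsc{GetCausalPathGraph} is a near-verbatim transcription of Definition~\ref{definition:causalpathgraph}. For correctness, I would simply match the three working lines of the function to the two construction steps of the definition: line~3 forms the vertex-induced subgraph $\G'' = \G_{\*X \cup \*Y \cup PCP(\*X,\*Y)}$, line~4 applies the $\overline{\*X}\underline{\*Y}$ transformation to obtain $\G' = \G''_{\overline{\*X}\underline{\*Y}}$, and line~5 deletes all bidirected edges. Since each operation realizes exactly the corresponding step of the definition, the returned graph $\G'$ is by construction the causal path graph relative to $(\G,\*X,\*Y)$, and nothing further is required beyond confirming that each graph operation is performed faithfully.

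For the running time, the plan is to charge each line separately and show each is $O(n+m)$, so that the constantly many stages sum to $O(n+m)$. The only stage whose cost is not immediate is the computation of $PCP(\*X,\*Y) = (\De{\*X}_{\G_{\overline{\*X}}} \setminus \*X) \cap \An{\*Y}_{\G_{\underline{\*X}}}$ implicit in line~3. I would compute $\De{\*X}_{\G_{\overline{\*X}}}$ by a single forward graph traversal (BFS or DFS) seeded at $\*X$ that ignores edges entering $\*X$, and compute $\An{\*Y}_{\G_{\underline{\*X}}}$ by a single backward traversal seeded at $\*Y$ that ignores edges leaving $\*X$; each traversal visits every node and every edge at most once and therefore costs $O(n+m)$. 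With membership/indicator arrays, the set difference and intersection that combine the two reachable sets take $O(n)$. Forming the induced subgraph $\G''$, applying the edge deletions encoded by $\overline{\*X}$ and $\underline{\*Y}$, and stripping bidirected edges each require one linear scan over nodes and edges, i.e. $O(n+m)$ each. Summing these yields the claimed $O(n+m)$ bound.

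The main subtlety I expect is making the edge-deletion conventions explicit so that the two traversals genuinely respect the modified graphs $\G_{\overline{\*X}}$ and $\G_{\underline{\*X}}$ \emph{without} materializing those graphs as separate structures, which would otherwise look like additional work; I would resolve this by noting that skipping a forbidden edge during traversal is an $O(1)$ per-edge test, so no auxiliary graph need be built. A secondary point worth recording is that because $\*X$ and $\*Y$ are disjoint, the sets $\De{\*X}_{\G_{\overline{\*X}}}$ and $\An{\*Y}_{\G_{\underline{\*X}}}$ are well-defined and may be computed independently, so the two traversals do not interfere. With these observations in place, both the correctness claim and the $O(n+m)$ bound follow directly.
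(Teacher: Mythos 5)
Your proposal is correct and follows essentially the same route as the paper's own proof, which likewise treats correctness as immediate from Definition~\ref{definition:causalpathgraph} and asserts that forming the induced subgraph, applying the \(\overline{\*X}\underline{\*Y}\) transformation, and deleting bidirected edges each take \(O(n+m)\) time. Your only addition is to spell out what the paper leaves implicit—computing \(PCP(\*X,\*Y)\) via one forward and one backward linear-time traversal with per-edge skip tests—which is a faithful elaboration rather than a different argument.
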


\begin{proof}
The construction of a causal path graph is immediate from Def.~2.
Constructing a subgraph \(\G_{\*X \cup \*Y \cup PCP(\*X,\*Y)}\), performing graph transformation \(\G''_{\overline{\*X}\underline{\*Y}}\), and removing all bidirected edges take \(O(n+m)\) time.
\end{proof}

\begin{definition}{(Proper Causal Path \citep{shpitser:etal10})}
Let \(\*X,\*Y\) be set of nodes.
A causal path from a node in \(\*X\) to a node in \(\*Y\) is called proper if it does not intersect \(\*X\) except at the end point.
\end{definition}

\begin{adxlemma}
\label{adxlemma:findfdset:separator}
Let \(\G\) be a causal graph and \(\*X,\*Y,\*Z\) disjoint sets of variables.
Let \(\G'\) be the causal path graph relative to \((\G,\*X,\*Y)\).
Then, \(\*Z\) satisfies the first condition of the FD criterion relative to \((\*X, \*Y)\) if and only if \(\*Z\) is a separator of \(\*X\) and \(\*Y\) in \(\G'\).
\end{adxlemma}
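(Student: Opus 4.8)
The plan is to translate both the graphical "first condition" and the $d$-separation condition into statements about \emph{directed} paths in the causal path graph \(\G'\), and then compare them. The key is that \(\G'\) is a very rigid object: it contains only directed edges, the nodes of \(\*X\) are sources (no incoming edges) and those of \(\*Y\) are sinks (no outgoing edges), and, by construction over \(PCP(\*X,\*Y)\), every vertex of \(\G'\) lies on a directed path from \(\*X\) to \(\*Y\). I would exploit this structure throughout.

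First I would reduce the first condition to a statement inside \(\G'\). I would show that "\(\*Z\) intercepts all directed paths from \(\*X\) to \(\*Y\) in \(\G\)" is equivalent to "\(\*Z\) meets every directed \(\*X\)-\(\*Y\) path of \(\G'\)." For this I would (i) reduce arbitrary directed paths to proper ones: given any directed path \(\pi\) from \(\*X\) to \(\*Y\), take its sub-path starting at the \emph{last} vertex of \(\pi\) lying in \(\*X\); this sub-path is a proper causal path, and since \(\*Z\) is disjoint from \(\*X\), any interception of it intercepts \(\pi\). Hence the first condition is equivalent to intercepting all proper causal paths. Then (ii) I would verify that the proper causal paths of \(\G\) are \emph{exactly} the directed \(\*X\)-\(\*Y\) paths of \(\G'\): their interior vertices are precisely the vertices of \(PCP(\*X,\*Y)\); all their edges are directed and therefore survive the passage to \(\G''_{\overline{\*X}\underline{\*Y}}\) and the deletion of bidirected edges; and conversely a directed path in \(\G'\) cannot re-enter \(\*X\) (as \(\*X\) has no incoming edges there), so it is proper. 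This reduces the lemma to the combinatorial claim: in \(\G'\), \(\*Z\) is a \(d\)-separator of \(\*X\) and \(\*Y\) if and only if \(\*Z\) meets every directed \(\*X\)-\(\*Y\) path.

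The direction "\(d\)-separator \(\Rightarrow\) meets every directed path" is straightforward: a directed path has no colliders, so if it avoided \(\*Z\) its interior vertices would all be non-colliders outside \(\*Z\), making it active given \(\*Z\) and contradicting \(d\)-separation; hence \(\*Z\) must contain an interior vertex of it. The reverse direction is the crux, and I expect it to be the main obstacle. Assuming \(\*Z\) meets every directed \(\*X\)-\(\*Y\) path yet some path is active given \(\*Z\), I would take an active path with the fewest edges and analyze its orientation pattern. Starting at a source and ending at a sink, it decomposes into ascending runs separated by colliders and forks, i.e. \(x\to\cdots\to c_1\leftarrow\cdots\leftarrow f_1\to\cdots\to c_r\leftarrow\cdots\leftarrow f_r\to\cdots\to y\); every non-collider is outside \(\*Z\), so the \emph{only} vertices possibly in \(\*Z\) are the colliders \(c_1,\dots,c_r\). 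The aim is to splice the final directed descent \(f_r\to\cdots\to y\) (which already avoids \(\*Z\)) with a directed ascent from \(\*X\) that also avoids \(\*Z\), contradicting that \(\*Z\) is a cut, using that every \(f_s\) and \(c_s\) is both a descendant of \(\*X\) and an ancestor of \(\*Y\) in \(\G'\). The delicate point is that conditioning can \emph{open} a collider \(c_s\) that lies in \(\*Z\) (or has a descendant in \(\*Z\)), so a naive reroute down from \(c_s\) to \(\*Y\) may be forced back through \(\*Z\). Making the rerouting airtight — presumably by an induction on path length that peels off the last collider/fork and invokes acyclicity together with the "every vertex lies on a proper causal path" property of \(\G'\) — is where I expect essentially all of the work to lie, and it is the step I would scrutinize most carefully.
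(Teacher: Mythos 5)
Your two preparatory reductions are correct and match the paper's: proper causal paths of \(\G\) are exactly the directed \(\*X\)-\(\*Y\) paths of \(\G'\), and the direction ``separator \(\Rightarrow\) intercepts all directed paths'' is immediate since a directed path has no colliders. But the step you yourself flagged as the crux --- ``\(\*Z\) meets every directed \(\*X\)-\(\*Y\) path of \(\G'\) \(\Rightarrow\) \(\*Z\) \(d\)-separates \(\*X\) and \(\*Y\) in \(\G'\)'' --- is not merely delicate: it is false, so no rerouting or splicing induction can close it. Take \(\G = \G'\) with edges \(X \to C\), \(C \to Y\), \(X \to D\), \(D \to F\), \(F \to C\), \(F \to Y\), and \(\*Z = \{C,D\}\). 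Every directed path from \(X\) to \(Y\) (namely \(X \to C \to Y\), \(X \to D \to F \to Y\), \(X \to D \to F \to C \to Y\)) meets \(\*Z\), yet the path \(X \to C \leftarrow F \to Y\) is active given \(\*Z\): the collider \(C\) lies in \(\*Z\) and the fork \(F\) does not. Note that \(\G'\) here is a legitimate causal path graph --- the construction retains cross edges among \(PCP(\*X,\*Y)\) nodes such as \(F \to C\) --- so \(\*Z\) satisfies the first FD condition while failing to be a separator in \(\G'\). Your proof attempt stalls exactly where you predicted, because the claim you set out to prove does not hold for arbitrary disjoint \(\*Z\). (A secondary caution: your premise that every vertex of \(\G'\) lies on a directed \(\*X\)-\(\*Y\) path of \(\G'\) can also fail, e.g.\ when a proper causal path traverses a second node of \(\*Y\), whose outgoing edges are deleted.)

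It is worth comparing this with the paper's own proof, which is far shorter but hides the same problem: it asserts that ``all and only paths from \(\*X\) to \(\*Y\) in \(\G'\) are (proper) causal,'' which would rule out collider paths outright --- and the example above shows that assertion is incorrect in general. So your more careful analysis has in fact exposed a genuine issue with the lemma as stated, not just with your own argument. What rescues the overall algorithm is that the lemma is only ever invoked with \(\*Z = \*R''\), which by Lemma~3 additionally satisfies the \emph{third} FD condition, and under that extra hypothesis the equivalence does hold: in any active non-directed path of \(\G'\), consider the collider \(c\) nearest \(\*Y\), the node \(z\) equal to \(c\) or a \(\G'\)-descendant of \(c\) in \(\*Z\), and the terminal segment \(z \leftarrow \cdots \leftarrow c \leftarrow \cdots \leftarrow f \to \cdots \to y\) through the last fork \(f\). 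Every interior node other than \(f\) has an incoming edge in \(\G'\) and so is not in \(\*X\); if \(f \in \*X\) the descent \(f \to \cdots \to y\) is a directed \(\*X\)-\(\*Y\) path avoiding \(\*Z\), contradicting interception, and otherwise the segment is a back-door path from \(\*Z\) to \(\*Y\) that \(\*X\) cannot block, contradicting the third condition. That is precisely your last-collider decomposition; with condition~3 added as a hypothesis it goes through, and without it neither your plan nor the paper's stated proof is correct.
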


\begin{proof}
We prove the statement in both directions.
\begin{itemize}
    \item \textit{If case:}
    We show that \(\*Z\) satisfies the first condition of the FD criterion relative to \((\*X, \*Y)\).
    By the construction of \(\G'\), all paths from \(\*X\) to \(\*Y\) comprise of all and only proper causal paths from \(\*X\) to \(\*Y\).
    It is only necessary to check for all proper causal paths from \(\*X\) to \(\*Y\) since every non-proper causal path from \(\*X\) to \(\*Y\) must include a proper causal path from \(\*X\) to \(\*Y\) as a subpath.
    To witness, consider any non-proper causal path \(\pi = x_1 \rightarrow, \cdots, \rightarrow x_k \rightarrow, \cdots, \rightarrow y\) from a node \(x_1 \in \*X\) to a node \(y \in \*Y\).
    Since \(\pi\) is not proper, there must exist a node \(x_k \in \*X\) that intersects \(\pi\) at non-endpoint and there exists a subpath \(\pi' = x_k \rightarrow, \cdots, \rightarrow y\) such that \(\pi'\) is proper.
    Since \(\*Z\) is a separator of \(\*X\) and \(\*Y\) in \(\G'\), \(\*Z\) intercepts all causal paths from \(\*X\) to \(\*Y\) in \(\G\).
    
    \item \textit{Only if case:}
    We show that \(\*Z\) is a separator of \(\*X\) and \(\*Y\) in \(\G'\).
    By assumption, \(\*Z\) intercepts all causal paths from \(\*X\) to \(\*Y\) in \(\G\).
    By the construction of \(\G'\), all and only paths from \(\*X\) to \(\*Y\) must be causal.
    Thus, \(\*Z\) must be a separator of \(\*X\) and \(\*Y\) in \(\G'\).
\end{itemize}
\end{proof}

\begin{adxlemma}
\label{adxlemma:findfdset:exists}
There exists a set \(\*Z_0\) satisfying the FD criterion relative to \((\*X,\*Y)\) with \(\*I \subseteq \*Z_0 \subseteq \*R\) if and only if \(\*R''\) generated by \textsc{GetCand3rdFDC} (in Step 2 of Alg.~1) satisfies the FD criterion relative to \((\*X,\*Y)\).
\end{adxlemma}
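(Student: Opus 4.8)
The plan is to prove the two directions separately, relying on the fact that the earlier lemmas have already established conditions~2 and~3 of the FD criterion for $\*R''$ \emph{on its own}, so that the whole argument collapses onto condition~1, whose interception property is monotone under taking supersets.

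The \emph{if} direction is immediate. Suppose $\*R''$ satisfies the FD criterion relative to $(\*X,\*Y)$. By Lemma~\ref{adxlemma:getcand2ndfdc} we have $\*I \subseteq \*R' \subseteq \*R$, and by Lemma~\ref{adxlemma:getcand3rdfdc} we have $\*I \subseteq \*R'' \subseteq \*R'$; chaining these gives $\*I \subseteq \*R'' \subseteq \*R$. Hence $\*Z_0 = \*R''$ is itself an FD-admissible set with $\*I \subseteq \*Z_0 \subseteq \*R$, which is exactly the required witness.

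For the \emph{only if} direction, I would assume some $\*Z_0$ with $\*I \subseteq \*Z_0 \subseteq \*R$ satisfying all three conditions of the FD criterion. First I would place $\*Z_0$ inside $\*R''$: since $\*Z_0$ in particular satisfies the second and third conditions, Lemma~\ref{adxlemma:getcand3rdfdc} yields $\*Z_0 \subseteq \*R''$. It then suffices to check that $\*R''$ satisfies each of the three conditions in turn. Condition~2 follows because $\*R'' \subseteq \*R'$ and, by Lemma~\ref{adxlemma:getcand2ndfdc}, every subset of $\*R'$ meets the second condition. Condition~3 is exactly the content of Lemma~\ref{adxlemma:findfdset:r}. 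The remaining step, condition~1, is where the existence of $\*Z_0$ is actually used: every directed path from $\*X$ to $\*Y$ contains a node of $\*Z_0$ (since $\*Z_0$ satisfies condition~1), and because $\*Z_0 \subseteq \*R''$ that same node also lies in $\*R''$, so $\*R''$ intercepts every such path. Having verified all three conditions, $\*R''$ satisfies the FD criterion, completing this direction.

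The subtlety to be careful about — and the step I expect to be the real crux — is condition~1. Conditions~2 and~3 are \emph{not} monotone in $\*Z$ in general, so one cannot simply argue that ``$\*Z_0$ works, hence the larger $\*R''$ works''; the force of the proof comes from having already isolated $\*R''$ as satisfying~2 and~3 through the prior lemmas. Moreover, I would deliberately argue condition~1 through its original interception formulation rather than through the separator characterization of Lemma~\ref{adxlemma:findfdset:separator}: $d$-separation in the causal path graph need not be preserved when the conditioning set grows, since enlarging a separator can open collider paths, whereas interception of directed paths is manifestly monotone under supersets. Framing condition~1 as interception is therefore what legitimizes the transfer from $\*Z_0$ to the larger set $\*R''$.
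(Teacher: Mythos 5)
Your proof is correct, and it reorganizes the argument relative to the paper's while using the same ingredients. The paper proves the \emph{only if} direction by contraposition: assuming \(\*R''\) fails the FD criterion, it partitions the candidates \(\*Z_0\) into three ranges --- \(\*I \subseteq \*Z_0 \subseteq \*R''\), sets contained in \(\*R'\) but not in \(\*R''\), and sets contained in \(\*R\) but not in \(\*R'\) --- and rules out each separately: since \(\*R''\) satisfies conditions 2 and 3 by Lemmas~\ref{adxlemma:getcand2ndfdc} and~\ref{adxlemma:findfdset:r}, it must violate condition 1, so no subset of \(\*R''\) can intercept all causal paths; sets containing a variable of \(\*R' \setminus \*R''\) violate condition 3 by Lemma~\ref{adxlemma:getcand3rdfdc}; and sets containing a variable of \(\*R \setminus \*R'\) violate condition 2 by Lemma~\ref{adxlemma:getcand2ndfdc}. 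You instead argue directly: the containment clause of Lemma~\ref{adxlemma:getcand3rdfdc} (every set with \(\*I \subseteq \*Z_0 \subseteq \*R\) satisfying conditions 2 and 3 lies inside \(\*R''\)) gives \(\*Z_0 \subseteq \*R''\) in one stroke, subsuming the paper's second and third cases; conditions 2 and 3 for \(\*R''\) itself then come from Lemmas~\ref{adxlemma:getcand2ndfdc} and~\ref{adxlemma:findfdset:r} exactly as in the paper; and condition 1 transfers from \(\*Z_0\) to the superset \(\*R''\) by the monotonicity of interception, which is precisely the contrapositive of the paper's first case. Your closing observation is also well placed: arguing condition 1 via interception of directed paths rather than via the separator characterization of Lemma~\ref{adxlemma:findfdset:separator} is what makes the superset transfer legitimate, and you correctly note that conditions 2 and 3 do not enjoy such monotonicity, which is why the dedicated lemmas about \(\*R''\) are indispensable. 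Net comparison: identical lemma dependencies, with your direct version shorter and arguably cleaner, while the paper's three-interval case analysis makes the exhaustiveness of the argument over all of \(\*I \subseteq \*Z_0 \subseteq \*R\) more explicit.
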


We prove the statement in both directions.

\begin{itemize}
    \item \textit{If case:} It is automatic with \(\*Z_0 = \*R''\).
    
    \item \textit{Only if case:}
    We prove the contrapositive of the statement: if \(\*R''\) is not a FD adjustment set relative to \((\*X,\*Y)\), then there does not exist any FD adjustment set \(\*Z_0\) relative to \((\*X,\*Y)\) with \(\*I \subseteq \*Z_0 \subseteq \*R\).
    On the following three items, we show that there does not exist any FD adjustment set \(\*Z_0\) relative to \((\*X,\*Y)\) with three disjoint intervals, \(\*I \subseteq \*Z_0 \subseteq \*R''\), \(\*R'' \subset \*Z_0 \subseteq \*R'\), and \(\*R' \subset \*Z_0 \subseteq \*R\), respectively.

    \begin{enumerate}
        \item Since \(\*R''\) is not a FD adjustment set relative to \((\*X,\*Y)\), \(\*R''\) must be violating the first condition of the FD criterion relative to \((\*X,\*Y)\).
        That is because, by the construction of \(\*R''\), \(\*R''\) must satisfy the second condition of the FD criterion relative to \((\*X,\*Y)\) (by Lemma~1) and the third condition of the FD criterion relative to \((\*X,\*Y)\) (by Lemma~3).
        Then, \(\*R''\) does not intercept all causal paths from \(\*X\) to \(\*Y\).
        No subset \(\*Z_0\) with \(\*I \subseteq \*Z_0 \subseteq \*R''\) will intercept all causal paths from \(\*X\) to \(\*Y\).
        \(\*Z_0\) violates the first condition of the FD criterion relative to \((\*X,\*Y)\)), and thus \(\*Z_0\) is not a FD adjustment set relative to \((\*X,\*Y)\).
        
        \item Consider a collection of sets \(\*Z_0\) with \(\*R'' \subset \*Z_0 \subseteq \*R'\).
        By the construction of \(\*R''\) generated by \textsc{GetCand3rdFDC} (with \(\*R'' \subseteq \*R'\)), for all \(v \in \*R' \setminus \*R''\), there does not exist any set \(\*Z\) with \(\*I \subseteq \*Z \subseteq \*R'\) and \(v \in \*Z\) that satisfies the third condition of the FD criterion relative to \((\*X,\*Y)\) (by Lemma~2).
        \(\*Z_0\) must include some \(v\), and thus \(\*Z_0\) violates the third condition of the FD criterion relative to \((\*X,\*Y)\).
        \(\*Z_0\) is not a FD adjustment set relative to \((\*X,\*Y)\).
        
        \item Consider a collection of sets \(\*Z_0\) with \(\*R' \subset \*Z_0 \subseteq \*R\).
        By the construction of \(\*R'\) generated by \textsc{GetCand2ndFDC} (with \(\*R' \subseteq \*R\)), for all \(v \in \*R \setminus \*R'\), there exists an open BD path from \(\*X\) to \(v\) (By Lemma~1).
        \(\*Z_0\) must be including some \(v\), and by definition, there exists an open BD path from \(\*X\) to \(\*Z_0\).
        \(\*Z_0\) violates the second condition of the FD criterion relative to \((\*X,\*Y)\) and \(\*Z_0\) is not a FD adjustment set relative to \((\*X,\*Y)\).
    \end{enumerate}

    Combining together the three items, we have that for all \(\*Z_0\) with \(\*I \subseteq \*Z_0 \subseteq \*R\), \(\*Z_0\) is not a FD adjustment set relative to \((\*X,\*Y)\).

\end{itemize}

\begin{adxtheorem}[Correctness of \textsc{FindFDSet}]
\label{adxthm:findfdset}
Let \(\G\) be a causal graph, \(\*X,\*Y\) disjoint sets of variables, and \(\*I, \*R\) sets of variables such that \(\*I \subseteq \*R\).
Then, \textsc{FindFDSet}\((\G, \*X, \*Y, \*I, \*R)\) outputs a set \(\*Z\) with \(\*I \subseteq \*Z \subseteq \*R\) that satisfies the FD criterion relative to \((\*X,\*Y)\), or outputs \(\perp\) if none exists, in \(O(n^3 (n+m))\) time, where \(n\) and \(m\) represent the number of nodes and edges in \(\G\).
\end{adxtheorem}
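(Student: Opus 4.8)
The plan is to prove correctness and the running-time bound separately, treating \textsc{FindFDSet} as an orchestration of three subroutines whose behavior is already fully characterized by the preceding lemmas and propositions. For correctness I would perform a case analysis on the point at which the algorithm terminates, showing in each case that the output is exactly right: a valid FD adjustment set whenever one is returned, and $\perp$ precisely when none exists.

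First I would dispatch the two early-exit cases. If Step~1 returns $\perp$, then by the correctness of \textsc{GetCand2ndFDC} (Lemma~\ref{adxlemma:getcand2ndfdc}) some $v \in \*I$ admits an open back-door path from $\*X$, so $\*I$ (and hence every superset $\*Z \supseteq \*I$) violates the second condition; thus no FD set with $\*I \subseteq \*Z \subseteq \*R$ exists, and $\perp$ is correct. If instead Step~2 returns $\perp$, then by the correctness of \textsc{GetCand3rdFDC} (Lemma~\ref{adxlemma:getcand3rdfdc}) no set $\*Z$ with $\*I \subseteq \*Z \subseteq \*R'$ satisfies the third condition; since Lemma~\ref{adxlemma:getcand2ndfdc} guarantees that any set meeting the second condition is contained in $\*R'$, no set can meet both the second and third conditions, so again no FD set exists and $\perp$ is correct.

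The main case is when the algorithm reaches Step~3 with a well-defined $\*R''$ satisfying $\*I \subseteq \*R'' \subseteq \*R' \subseteq \*R$. Here I would invoke Lemma~\ref{adxlemma:findfdset:r} to assert that $\*R''$ already meets the third condition, and the subset stability of $\*R'$ (Lemma~\ref{adxlemma:getcand2ndfdc}, applied to $\*R'' \subseteq \*R'$) to assert it meets the second. By Lemma~\ref{adxlemma:findfdset:separator}, the test $\textsc{TestSep}(\G', \*X, \*Y, \*R'')$ on the causal path graph returns True exactly when $\*R''$ meets the first condition. Hence the algorithm outputs $\*Z = \*R''$ if and only if $\*R''$ satisfies all three conditions, i.e. if and only if $\*R''$ is itself an FD adjustment set. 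When it is, $\*R''$ is a valid output since $\*I \subseteq \*R'' \subseteq \*R$; when it is not, I would appeal to Lemma~\ref{adxlemma:findfdset:exists}, whose nontrivial ``only if'' direction guarantees that the failure of $\*R''$ to be an FD set implies that no FD set $\*Z_0$ with $\*I \subseteq \*Z_0 \subseteq \*R$ exists, so the $\perp$ output is correct.

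For the running time I would sum the costs of the three steps using the complexity propositions: Step~1 costs $O(n(n+m))$ (Prop.~\ref{adxprop:getcand2ndfdc:complexity}), Step~2 costs $O(n^3(n+m))$ (Prop.~\ref{adxprop:getcand3rdfdc:complexity}), and Step~3 costs $O(n+m)$ to build the causal path graph (Prop.~\ref{adxprop:cpg}) plus one $O(n+m)$ call to \textsc{TestSep}; the total is dominated by Step~2, giving $O(n^3(n+m))$. Since all the heavy lifting lives in the lemmas, the main obstacle is conceptual bookkeeping rather than a hard argument: I must match each of the three termination points to the correct existence/nonexistence conclusion, and in particular ensure that completeness of the $\perp$ output in the Step~3 failure case genuinely rests on Lemma~\ref{adxlemma:findfdset:exists}. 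It is not enough that $\*R''$ itself fails the first condition; one needs that no other admissible set could succeed, which is exactly what that lemma supplies.
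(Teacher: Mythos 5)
Your proposal is correct and follows essentially the same route as the paper's proof: establish that \(\*R''\) satisfies the second and third conditions (Lemmas~\ref{adxlemma:getcand2ndfdc} and~\ref{adxlemma:findfdset:r}), reduce the first condition to a separation test on the causal path graph (Lemma~\ref{adxlemma:findfdset:separator}), invoke Lemma~\ref{adxlemma:findfdset:exists} for completeness of the \(\perp\) output, and sum the step costs with Step~2's \(O(n^3(n+m))\) dominating. Your only departure is that you explicitly dispatch the \(\perp\) exits in Steps~1 and~2, which the paper's proof leaves implicit in the subroutine lemmas---a small gain in bookkeeping rigor, not a different argument.
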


\begin{proof}
By Lemma~2, every set \(\*Z\) with \(\*I \subseteq \*Z \subseteq \*R\) that satisfies both the second and the third conditions of the FD criterion relative to \((\*X,\*Y)\) must be a subset of \(\*R''\).
By Lemma~1, \(\*R''\) satisfies the second condition of the FD criterion relative to \((\*X,\*Y)\).
By Lemma~3, \(\*R''\) satisfies the third condition of the FD criterion relative to \((\*X,\*Y)\).
Let \(\*Z = \*R''\).
Then, By Lemma~4, \(\*Z\) is a FD adjustment set relative to \((\*X, \*Y)\) if and only if \(\*Z\) is a separator of \(\*X\) and \(\*Y\) in \(\G'\), a causal path graph relative to \((\G,\*X,\*Y)\).
\textsc{FindFDSet} outputs \(\*Z\) if and only if \(\*Z\) is a separator of \(\*X\) and \(\*Y\) in \(\G'\) (by calling \(\textsc{TestSep}(\G', \*X, \*Y, \*Z)\) at line~11 and verifying \textsc{TestSep} is returning True).
Hence, the outputted set \(\*Z\) is a FD adjustment set relative to \((\*X, \*Y)\) where \(\*I \subseteq \*Z \subseteq \*R\).
If \textsc{TestSep} returns False, then \(\*Z\) is not a FD adjustment set relative to \((\*X, \*Y)\) and \textsc{FindFDSet} outputs \(\perp\).
By Lemma~5, there does not exist any FD adjustment set \(\*Z_0\) relative to \((\*X,\*Y)\) with \(\*I \subseteq \*Z_0 \subseteq \*R\).

For the running time, constructing \(\*R'\) takes \(O(n (n+m))\) time (by Prop.~A.1), and generating \(\*R''\) takes \(O(n^3  (n+m))\) time (by Prop.~A.4).
By Prop.~A.5, creating a causal path graph \(\G'\) relative to \((\G,\*X,\*Y)\) (by calling \textsc{GetCausalPathGraph}) takes \(O(n+m)\) time.
\textsc{TestSep} takes \(O(n+m)\) time.
The dominant factor is \(O(n^3 (n+m))\).
\end{proof}

\begin{adxtheorem}[Correctness of \textsc{ListFDSets}]
Let \(\G\) be a causal graph, \(\*X,\*Y\) disjoint sets of variables, and \(\*I,\*R\) sets of variables.
\textsc{ListFDSets\((\G,\*X,\*Y, \*I, \*R)\)} enumerates all and only sets \(\*Z\) with \(\*I \subseteq \*Z \subseteq \*R\) that satisfy the FD criterion relative to \((\*X, \*Y)\) in \(O(n^4 (n+m))\) delay where \(n\) and \(m\) represent the number of nodes and edges in \(\G\).
\end{adxtheorem}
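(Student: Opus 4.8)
The plan is to establish three properties of \textsc{ListFDSets}: soundness (every output is a constrained FD adjustment set), completeness (every constrained FD adjustment set is output exactly once, i.e. ``all and only''), and the $O(n^4(n+m))$ delay bound. The whole argument rests on reading the algorithm as a depth-first traversal of the binary partition tree in which a node $\mathcal{N}(\*I',\*R')$ stands for the family $\mathcal{F}(\*I',\*R')=\{\*Z: \*I'\subseteq\*Z\subseteq\*R',\ \*Z \text{ satisfies the FD criterion}\}$, together with the use of \textsc{FindFDSet} (Thm.~\ref{thm:findfdset}) as a \emph{sound and complete non-emptiness oracle} for these families.

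For soundness and completeness I would first record the exact partition property at a branching node. For the branch variable $v\in\*R'\setminus\*I'$ selected on line~\ref{alg:listfdsets:branchstart}, every $\*Z\in\mathcal{F}(\*I',\*R')$ either contains $v$, in which case $\*Z\in\mathcal{F}(\*I'\cup\{v\},\*R')$, or omits $v$, in which case $\*Z\in\mathcal{F}(\*I',\*R'\setminus\{v\})$; these two child families are disjoint and their union is exactly $\mathcal{F}(\*I',\*R')$. Then, by induction on $|\*R'\setminus\*I'|$ down the tree, the unpruned leaves realize each member of $\mathcal{F}(\*I,\*R)$ once and only once: a leaf has $\*I'=\*R'$, represents the single candidate $\*I'$, and by Thm.~\ref{thm:findfdset} \textsc{FindFDSet} returns a set there (so $\*I'$ is output) iff $\*I'$ satisfies the FD criterion. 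Finally I would argue that pruning is safe: whenever \textsc{FindFDSet} returns $\perp$ at $\mathcal{N}(\*I',\*R')$, Thm.~\ref{thm:findfdset} guarantees $\mathcal{F}(\*I',\*R')=\emptyset$, so the discarded subtree contains no valid set. Disjointness of the two children yields ``all and only'' with no duplicates, completing correctness.

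For the delay bound I would proceed as follows. Each branch strictly decreases $|\*R'\setminus\*I'|$ by one, so the tree has depth at most $|\*R\setminus\*I|\le n$. Call a visited node \emph{live} if \textsc{FindFDSet} returned a set there and \emph{dead} otherwise; a dead node returns immediately after a single oracle call, so correct pruning confines all solution-free exploration to one call per dead subtree root. The crucial structural fact is that every live internal node has at least one live child: since $\mathcal{F}(\*I',\*R')$ is nonempty, any witness $\*Z$ falls into exactly one child by the partition property, making that child live. Hence, in the binary DFS, the nodes visited between two consecutive outputs form a walk that climbs to a lowest common ancestor and descends to the next output leaf, passing through $O(n)$ live nodes and probing at most one dead node per level before recursing into the guaranteed live sibling, for $O(n)$ node visits in total. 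Since each visit costs one call to \textsc{FindFDSet}, i.e. $O(n^3(n+m))$ by Thm.~\ref{thm:findfdset} (which dominates the $O(n)$ per-node bookkeeping), the delay is $O(n)\cdot O(n^3(n+m))=O(n^4(n+m))$, and the same bound covers the time to the first output and after the last.

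The hard part will be making the delay analysis fully rigorous rather than heuristic, specifically pinning down that only $O(n)$ nodes are touched between successive outputs. The two load-bearing ingredients are (i) the completeness half of Thm.~\ref{thm:findfdset}, which lets the oracle detect empty subtrees and so forbids the traversal from wandering into solution-free regions for more than a single oracle call, and (ii) the ``at least one live child'' property, which rules out long chains of live-but-unproductive internal nodes. With these two facts in hand, the standard polynomial-delay argument for binary partition enumeration used by \textsc{ListSep} of \citet{zander:etal14} transfers essentially verbatim.
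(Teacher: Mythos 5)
Your proposal is correct and follows essentially the same route as the paper's proof: induction over the binary search tree using the disjoint partition $\mathcal{F}(\*I',\*R') = \mathcal{F}(\*I'\cup\{v\},\*R') \,\dot\cup\, \mathcal{F}(\*I',\*R'\setminus\{v\})$, with \textsc{FindFDSet} serving as a sound and complete non-emptiness oracle for pruning, and a delay bound of (tree depth $\le n$) $\times$ (oracle cost $O(n^3(n+m))$). Your live-child/LCA-walk argument is in fact a more explicit rendering of the delay analysis than the paper's terser statement that at most $n$ branches are aborted between outputs, but it is the same underlying \textsc{ListSep}-style argument, not a different approach.
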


\begin{proof}
Consider the recursion tree for \textsc{ListFDSets}.
By induction on tree nodes, we show that when a tree node \(\mathcal{N}(\*I',\*R')\) is visited, \textsc{ListFDSets} will output all and only FD adjustment sets \(\*Z\) relative to \((\*X,\*Y)\) where \(\*I' \subseteq \*Z \subseteq \*R'\).

\begin{itemize}
    \item \textit{Base case}: Consider any leaf tree node \(\mathcal{L}(\*I',\*R')\).
    The recursion stops when \(\*I = \*R\), so \(\*I' = \*R'\) must hold.
    \(\mathcal{L}\) contains a node \(\*Z\) with \(\*Z = \*I' = \*R'\) if \(\*Z\) is a valid FD adjustment set relative to \((\*X,\*Y)\), or empty otherwise.
    Indeed, \textsc{ListFDSets} will output a FD adjustment set \(\*Z\) if and only if \textsc{FindFDSet} in line~3 does not output \(\perp\) (by Thm.~1).
    
    \item \textit{Inductive case}: Let \(\mathcal{N}(\*I',\*R')\) be any non-leaf tree node.
    Assume the claim holds for two children of \(\mathcal{N}\).
    We show that \(\mathcal{N}\) contains all FD adjustment sets \(\*Z\) with \(\*I' \subseteq \*Z \subseteq \*R'\), which can also be expressed as the union of two collections of sets: 1) the collection of FD adjustment sets \(\*Z_1\) with \(\*I' \cup \{v\} \subseteq \*Z_1 \subseteq \*R'\), and 2) the collection of FD adjustment sets \(\*Z_2\) with \(\*I' \subseteq \*Z_2 \subseteq \*R' \setminus \{v\}\).
    The two collections are disjoint as every set in the first collection contains \(v\), and none in the second collection does.
    By assumption, each child contains the collection of respective FD adjustment sets.
    If \textsc{FindFDSet} in line~3 outputs \(\perp\), then there does not exist a FD adjustment set \(\*Z\) with \(\*I' \subseteq \*Z \subseteq \*R'\).
    Otherwise, each child outputs a respective collection of FD adjustment sets.
\end{itemize}

For the runtime, consider the recursion tree for \textsc{ListFDSets}.
Every time a tree node \(\mathcal{N}(\*I',\*R')\) is visited, the function \textsc{FindFDSet} is called, which takes \(O(n^3 (n+m))\) time (by Thm.~1).
If \textsc{FindFDSet} outputs \(\perp\), then \textsc{ListFDSets} does not search further from  \(\mathcal{N}\) because there exists no FD adjustment set \(\*Z\) with  \(\*I' \subseteq \*Z \subseteq \*R'\).
Otherwise, recursion continues until a leaf tree node is visited.
In each level of the tree, a single node \(v\) is removed from the set \(\*R \setminus \*I\).
The depth of the tree is at most \(n\), and the time required to output a set \(\*Z\) is \(O(n^4 (n+m))\).
In the worst case scenario, \(n\) branches will be aborted (i.e., \textsc{FindFDSet} outputs \(\perp\) on every level of the tree) before reaching the first leaf.
It takes \(O(n^4 (n+m))\) time to produce either the first output or halt.
Thus, \textsc{ListFDSets} runs with \(O(n^4 (n+m))\) delay.
\end{proof}

\end{document}